\newif\if@book
\let\ifbook\if@book
\newif\if@arxiv
\let\ifarxiv\if@arxiv
\newrobustcmd{\MakeTitleCase}[1]{%
  \ifthenelse{\ifcurrentfield{booktitle}\OR\ifcurrentfield{booksubtitle}%
    \OR\ifcurrentfield{maintitle}\OR\ifcurrentfield{mainsubtitle}%
    \OR\ifcurrentfield{journaltitle}\OR\ifcurrentfield{journalsubtitle}%
    \OR\ifcurrentfield{issuetitle}\OR\ifcurrentfield{issuesubtitle}%
    \OR\ifentrytype{book}\OR\ifentrytype{mvbook}\OR\ifentrytype{bookinbook}%
    \OR\ifentrytype{booklet}\OR\ifentrytype{suppbook}%
    \OR\ifentrytype{collection}\OR\ifentrytype{mvcollection}%
    \OR\ifentrytype{suppcollection}\OR\ifentrytype{manual}%
    \OR\ifentrytype{periodical}\OR\ifentrytype{suppperiodical}%
    \OR\ifentrytype{proceedings}\OR\ifentrytype{mvproceedings}%
    \OR\ifentrytype{reference}\OR\ifentrytype{mvreference}%
    \OR\ifentrytype{report}\OR\ifentrytype{thesis}\OR\ifentrytype{phdthesis}}
    {#1}
    {\MakeSentenceCase{#1}}}
\newcommand{\takeout}[1]{}
\spnewtheorem{thm}[theorem]{Theorem}{\bfseries}{\itshape}
\spnewtheorem{cor}[theorem]{Corollary}{\bfseries}{\itshape}
\spnewtheorem{lem}[theorem]{Lemma}{\bfseries}{\itshape}
\spnewtheorem{lemdefn}[theorem]{Lemma and Definition}{\bfseries}{\itshape}
\spnewtheorem{prop}[theorem]{Proposition}{\bfseries}{\itshape}
\spnewtheorem{fact}[theorem]{Fact}{\bfseries}{\itshape}
\spnewtheorem{defn}[theorem]{Definition}{\bfseries}{\upshape}
\spnewtheorem{rem}[theorem]{Remark}{\bfseries}{\upshape}
\spnewtheorem{expl}[theorem]{Example}{\bfseries}{\upshape}
\spnewtheorem{thmdefn}[theorem]{Theorem and Definition}{\bfseries}{\itshape}
\spnewtheorem{propdefn}[theorem]{Proposition and Definition}{\bfseries}{\itshape}
\spnewtheorem{assumption}[theorem]{Assumption}{\bfseries}{\upshape}
\spnewtheorem{algorithm}[theorem]{Algorithm}{\bfseries}{\upshape}
\spnewtheorem{conv}[theorem]{Convention}{\bfseries}{\upshape}
\spnewtheorem{notn}[theorem]{Notation}{\bfseries}{\upshape}
\spnewtheorem{probl}[theorem]{Problem}{\bfseries}{\upshape}
\spnewtheorem{defprop}[theorem]{Definition and Proposition}{\bfseries}{\upshape}
\spnewtheorem{defremark}[theorem]{Definition and Remark}{\bfseries}{\upshape}
\DeclareMathAlphabet{\mathsfit}{\encodingdefault}{\sfdefault}{m}{sl}
\SetMathAlphabet{\mathsfit}{bold}{\encodingdefault}{\sfdefault}{bx}{sl}
\newcommand{\algcl}[1]{\ensuremath{\mathsf{#1}}}
\newcommand{\algvr}[1]{\ensuremath{\mathsfit{#1}}}
\newcommand{\HA}{\algcl{HA}}
\newcommand{\bi}{\mathbin{{-}\!*}}
\newcommand{\alo}[1]{\ensuremath{\mathbb{#1}}}
\newcommand{\Ho}{\alo{H}}
\newcommand{\Po}{\alo{P}}
\newcommand{\So}{\alo{S}}
\newcommand{\Vo}{\alo{V}}
\newcommand{\lL}{\mathcal{L}}
\newcommand{\nat}{\mathbb{N}}
\newcommand{\inte}{\mathbb{Z}}
\renewcommand{\qed}{\hfill\ensuremath{\dashv}}
\def\refeq#1{(\ref{#1})}  
\newcommand{\Ppm}[1]{\m{#1}}
\newcommand{\ppm}{\cdot} 
\newcommand{\ppmu}{\mathsf{e}} 
\newcommand{\Cmx}[1]{{#1}^+}
\newcommand{\ppmsb}{\sqsubseteq}
\newcommand{\ppmsp}{\sqsupseteq}
\newcommand{\BI}{\algcl{BI}}
\newcommand{\GBI}{\algcl{GBI}}
\newcommand{\BBI}{\algcl{BBI}}
\newcommand{\BA}{\algcl{BA}}
\newcommand{\GA}{\algcl{GA}}
\newcommand{\ldd}{\backslash\!\!\backslash}
\newcommand{\rdd}{/\!\!/}
\newcommand{\m}{\mathfrak}
\newcommand{\Lu}{\m L\!\!\!\scriptsize\raise1.25pt\hbox{-}\,}
\newcommand{\Lv}{\algcl{L}\!\!\!\scriptsize\raise1.25pt\hbox{-}\,}
\newcommand{\omt}{\curlywedge}
\newcommand{\bro}{\textup{(}}
\newcommand{\brc}{\textup{)\,}}
\newcommand{\cf}[1]{\text{#1}}
\newcommand{\intu}[1]{\,{#1}_\mathsf{s}\,}
\newcommand{\intur}[1]{\,{#1}_\mathsf{s}\,}
\newcommand{\ipmsb}{\intur{\preceq}}
\newcommand{\ipco}{\intur{\succeq}}
 \newcommand{\CM}{C_\Ppm{M}}
\newcommand{\tlnt}[1]{\tlnote[inline,marginclue]{#1}}
\newcommand{\pjnt}[1]{\pjnote[inline,marginclue]{#1}}
\newlist{upbroman}{enumerate}{1}
\setlist[upbroman]{label=\bro\roman*\brc}
\newlist{esakia}{enumerate}{10}
\setlist[esakia]{label=\arabic*.,resume}
\newcommand{\upstx}{\mathit{Up}}
\newcommand{\ups}[1]{\upstx(#1)}
\newcommand{\fmal}{\mathfrak{Fm}}
\begin{document}

\title{An Algebraic Glimpse at Bunched Implications and Separation Logic}
\author{Peter Jipsen and Tadeusz Litak} 
\institute{ Peter Jipsen \at
School of Computational Sciences, Chapman University \newline
\email{jipsen@chapman.edu}
\and
Tadeusz Litak \at Informatik 8, FAU Erlangen-N\"{u}rnberg \newline 
\email{tadeusz.litak@fau.de}}

\date{Submitted September 2017. Accepted version prepared for publication as of \today}

\maketitle

\begin{abstract}{}
We overview the logic of Bunched Implications (BI) and Separation Logic (SL) from a perspective inspired by Hiroakira Ono's algebraic approach to substructural logics. We propose  \emph{generalized BI algebras} (\emph{GBI-algebras}) as a common framework for algebras arising via ``declarative resource reading'', 
 intuitionistic generalizations of relation algebras and arrow logics and the distributive Lambek calculus with intuitionistic implication. Apart from existing models of BI (in particular, heap models and effect algebras), we also cover models arising from weakening relations,  formal languages  or more fine-grained treatment of labelled trees and semistructured data. After briefly discussing the lattice of subvarieties of \algcl{GBI}, we present a suitable duality for \algcl{GBI} along the lines of Esakia and Priestley and an algebraic proof of cut elimination in the setting of residuated frames of Galatos and Jipsen. We also show how the algebraic approach allows generic results on decidability, both positive and negative ones. In the final part of the paper, we gently introduce the substructural audience to some theory behind state-of-art tools, culminating with an algebraic and proof-theoretic presentation of \emph{\bro bi-\brc abduction}.  
\end{abstract}

\ifbook\else
\tableofcontents
\fi



\bigskip

{
\footnotesize 
\noindent
Given the size of this paper and possibly divergent interests of its readers, we tried to ensure a 
 degree of independence between sections, even at the expense of certain redundancy.}

\section{Introduction} \label{sec:intro}

One of the major themes  in the distinguished career of Hiroakira Ono, especially since the mid-1990s, has been the use of algebraic methods for substructural purposes. The monograph \cite{GalatosJKO07} which inspired the title of this overview is a monument to this approach. While for ``standard'' substructural logics we already have an impressive body of work, we believe the time has come to promote this approach  for  more applied systems, such as generalizations and extensions of the \emph{logic of Bunched Implications} (BI).

The shortest description of BI  is: take the (commutative) propositional substructural 
  signature $\wedge$, $\vee$, $\top$, $\bot$, $*$, $\bi$, $1$ and add  the Heyting implication $\to$ adjoint to the additive conjunction $\wedge$, immediately forcing the lattice structure to be distributive. The system was explicitly introduced by O'Hearn and Pym \cite{OHearnP99:jsl,OHearn99:tlca,Pym99:lics,Pym02:book}. 
 In addition, Pym et al. \cite{PymOHY04:tcs}, Brotherston et al. \cite{BrotherstonC10:lmcs,BrotherstonK14:jacm}, O'Hearn \cite{OHearn12:nato}, Demri and Deters \cite{DemriD15:jancl} and this paper (\S\S~\ref{ModelsBI}--\ref{ModelsGBI} and \S\S~\ref{SeparationLogic}--\ref{sec:algsep}) present CS 
 applications of BI and related formalisms such as \emph{separation logic}.
  Perhaps the most important motivation 
   can be summarized in one phrase: modular reasoning about the use of \emph{shared mutable data structures} \cite{Reynolds00:intuitionistic,Reynolds02:lics}, i.e.,
\begin{quote}
structures where an updatable field can be referenced from more
than one point. \cite{Reynolds02:lics}
\end{quote}
In  a narrow sense, this applies to heap mutation, pointer aliasing and (de)allo\-cation: in short, dynamic memory management. 
  Nevertheless, BI was originally proposed   \emph{in the context of a broader investigation of ``resource modelling''} \cite{PymOHY04:tcs} (cf. \cite{OHearnPVH15}), with its ``declarative'' approach to resource contrasting with that of ``proofs-as-actions'' in linear logic. 
 Unfortunately, we can only briefly discuss the most challenging appli\-cation---i.e., shared-memory concurrency---in the concluding \S~\ref{sec:concur}.

 Despite the rather recent history of BI, several of the key ideas have been around for a long time and some of them can be immediately recognized by substructural logicians. Moreover, the authors of pioneering BI papers made no secret of these substructural origins. 
 Indeed, according to O'Hearn and Pym \cite[\S~9]{Pym99:lics}, \cite{OHearnP99:jsl}, \cite[Ch. 1]{Pym02:book} not only should BI be seen as a relevance logic, in fact up to minor syntactic details, an instance of Belnap's \emph{scheme of display logic}  \cite{Belnap82:jpl}, but also the terminology \emph{bunches} (for a structured term of formulas rather than a sequence of formulas) from which the very name of BI 
 derives, comes from Dunn's work on a sequent calculus for the relevance logic R \cite{Dunn75:chapter,Belnap82:jpl,Read88:book}; cf. also the work of Mints \cite{Mints1976}.
  
  On the mathematical side of things, another important early reference \cite{Day70:lnm} shows that a monoidal (not necessarily closed) structure on a category induces a monoidal closed structure on the corresponding category of Set-presheaves. Together with Urquhart's work on the semantics of multiplicative intuitionistic linear logic \cite{Urquhart72:jsl}, this  motivated \emph{total monoid semantics} of O'Hearn and Pym \cite{OHearnP99:jsl,Pym99:lics,Pym02:book}, known to be incomplete in the presence of $\bot$ \cite[Prop. 4.8]{Pym02:book}, \cite[Prop. 6]{PymOHY04:tcs}.

On the CS side of things, Reynolds \cite{Reynolds00:intuitionistic,Reynolds02:lics} claims that the earliest motivation for \emph{separating conjunction} is implicit in Burstall's \cite{Burstall72:mi} early idea of a \emph{distinct nonrepeating tree system}. To be more precise, rather than inspiring directly the earliest BI references, Burstall's work  inspired Reynolds' development of a  language of \emph{Hoare triples} (\S~\ref{SeparationLogic}) for programs involving shared mutable data structures.   Ishtiaq and O'Hearn \cite{IshtiaqOH01:popl} clarify the connection between that language, which came to be known as \emph{separation logic} (\S\S~\ref{SeparationLogic}--\ref{sec:algsep}), and BI as the core of its assertional part; let us note here that this paper is very explicit about the substructural character of BI. 

We hope  to dispel here whatever remains of a prejudice which seemed to linger in the early days of BI, best illustrated by  the following claim:

\begin{quote}
 We are not looking
for an algebraic semantics here, where one takes (say) a Heyting
algebra with
enough structure to model the multiplicatives; this would just be a collapsed version of the DCC semantics, and would not be very informative. \cite[\S~5]{OHearnP99:jsl}
\end{quote}
It is worth noting that even a later influential paper by those authors \cite{PymOHY04:tcs} does find the need for an algebraic  unifying framework 
 when the focus is on 
  \emph{theoremhood} rather than, say, the structure of proofs.

 More importantly, it is all too rarely mentioned that the core idea of BI---that of  allowing not only multiplicative \emph{and} additive conjunctions, but also their corresponding adjoint implications
 ---\emph{has} been seriously studied elsewhere, especially in the boolean setting. 
  This was happening mostly in the 1990's for two convergent reasons. On the one hand, such research was motivated by generalizations of 
  relation algebras, like in the PhD dissertation of the first author \cite{Jipsen92}. On the other hand, it was inspired by a \emph{dynamic trend} in formal semantics of natural language and in information processing. Volumes of collected papers from the period 
   \cite{marx:arro96} illustrate how fruitful this convergence was. The separation logic and BI  communities seem largely unaware of this body of work and, at least in some cases, have proved overlapping results. 
   Let us point out here just two examples, both of which can be traced back to the above-mentioned 1996 collection  \cite{marx:arro96} and which will be presented in more detail in  \S~\ref{UndecEq}: 
\begin{itemize}
\item 
Algebraic results regarding 
 (un)decidability for arrow logics \cite{AndrekaKNSS96:csli,KuruczNSS95:jolli} overlap with results for boolean BI  announced almost two decades later. 
\item Another example is provided by \emph{logic for layered graphs} \cite{CollinsonMDP14:jlc}, which essentially reinvents  \emph{conjugated arrow logic} \cite{Mikulas96:csli} (see Remark \ref{rem:nonassoc}).
\end{itemize}
To unify these convergent lines of research, in \S~\ref{LogicAlgebra} 
 we take as our base variety the class  $\algcl{GBI}$ of \emph{generalized} BI algebras, dropping the assumption of commutativity. In other words, we replace $*$ and $\bi$ in the substructural reduct by $\cdot, \backslash$ and $/$ (notation from the monograph \cite{GalatosJKO07}): clearly, one needs two residuals instead of one.
  This is not an uncommon step in the substructural setting; it is enough to recall how BL algebras were generalized to GBL algebras \cite{JipsenTsinakis02}. 
    Here, 
     we cover not only previously known commutative models of BI (\S~\ref{ModelsBI}), but also 
     weakening relations (\S~\ref{sec:weakening}) or  formal languages (\S~\ref{sec:language}); moreover, we improve the treatment of labelled trees and semistructured data (\S~\ref{sec:semi}).\footnote{Let us note that dropping \emph{associativity} of multiplicative conjunction has also been considered from all three perspectives, i.e., that of arrow logic, that of substructural logic, and most recently that of BI and resource reasoning. See Remark \ref{rem:nonassoc} 
for more information.}




 \S~\ref{Subvarieties} discusses systematically subvarieties of $\algcl{GBI}$, both those arising as generalizations of relation algebras and those  obtained by adopting subvarieties of residuated lattices/FL-algebras 
  \cite[\S~3.5]{GalatosJKO07} to our signature. In \S~\ref{RelSem}, we present 
   a systematic approach to semantics via 
   Esakia- and Priestley-style duality. 
  \S~\ref{Decidability} provides an overview of decidability and undecidability results for (quasi)equational theories. 
    \S~\ref{ProofTheory} discusses an algebraic take on proof theory of (G)BI,  in particular cut elimination; 
     given the contributions of Hiroakira Ono to the literature on algebraic cut elimination \cite{BelardinelliJO04,GalatosJKO07,GalatosO10:apal}, this  seems particularly natural material to present here.
 \S~\ref{SeparationLogic} is a gentle introduction to separation logic (SL) for an algebraically oriented audience. \S~\ref{sec:sltheory} reexamines the themes of  \S~\ref{Decidability} and \S~\ref{ProofTheory} from the perspective of SL, sketching a suitable substructural proof system. In \S~\ref{sec:biabduction}, we use the framework developed in earlier sections to suggest an algebraic and proof-theoretic approach to \emph{\bro bi-\brc abduction}. 
  Finally, \S~\ref{sec:algsep} provides a short glimpse at applications and developments we could not cover in detail in this overview, especially state-of-the art tools or the treatment of concurrency. 
 

\section{Logic and Algebra} \label{LogicAlgebra}

In this section, we discuss the basic algebraic setup 
   and connection to Hilbert-style calculus. The main goal here is to present the class $\algcl{GBI}$ of \emph{generalized} (non-commuta\-tive) BI algebras as a general framework for our paper. However, even when it comes to standard, commutative BI-algebras, we believe the Hilbert-style axiomatization we provide in \S~\ref{sec:logic} has certain advantages over those presented in earlier BI literature.



\subsection{Algebras}
The algebras of bunched implication logic are Heyting algebras with a residuated commutative monoid. 
 A \emph{Heyting algebra} $\m A = (A,\wedge,\vee,\to,\top,\bot)$ is a bounded lattice $(A,\wedge,\vee,\top,\bot)$ s.t. $\to$ is the residual of $\wedge$, i.e.,
$$
x\wedge y\le z\iff y\le x\to z\qquad\text{for all $x,y,z\in A$.}
$$
It follows from this property that $\wedge$ distributes over $\vee$, hence $(A, \wedge, \vee)$ is a distributive lattice, which in turn implies that $\vee$ distributes over $\wedge$.
In fact the residuation property easily implies the following stronger identities: if $\bigvee y_i$ and $\bigwedge y_i$ exist then
$$
x\wedge\bigvee y_i=\bigvee(x\wedge y_i), \ 
x\to\bigwedge y_i=\bigwedge(x\to y_i) \text{ and }
(\bigvee y_i)\to x=\bigwedge(y_i\to x).
$$
 Heyting algebras are the algebraic semantics of intuitionistic logic, with (non-classical) negation defined by $\neg x=x\to \bot$.

A \emph{generalized bunched implication algebra} (\emph{GBI-algebra}), is a tuple $(A,\wedge,\vee,\to,\top,\bot,\cdot,\backslash,/,1)$ where $(A,\wedge,\vee,\to,\top,\bot)$ is a Heyting algebra, $(A,\cdot,1)$ is a monoid and $\backslash,/$ are the left and right residuals of $\cdot$, i.e.,
$$
x\cdot y\le z\iff y\le x\backslash z \iff x\le z/y\qquad\text{for all $x,y,z\in A$.}
$$
We usually write $xy$ instead of $x\cdot y$, and assume that this operation has the highest priority, followed by $\backslash,/$, then $\wedge,\vee$, and finally $\to$.
The residuation property implies that for all existing meets and joins
$$
x(\bigvee y_i)=\bigvee xy_i\qquad
x\backslash\bigwedge y_i=\bigwedge x\backslash y_i\qquad
(\bigvee y_i)\backslash x=\bigwedge y_i\backslash x.
$$
$$
(\bigvee y_i)x=\bigvee y_ix\qquad
(\bigwedge y_i)/x=\bigwedge y_i/x\qquad
x/(\bigvee y_i)=\bigwedge x/y_i.
$$
 GBI-algebras have distributive residuated lattices as reducts. 
 Many other properties and results follow simply from this observation \cite{GalatosJKO07}.

\emph{GBI-homomorphisms} are functions that preserve all the operations, i.e., $f:A\to B$ such that $f(x\bullet y)=f(x)\bullet f(y)$ for all $x,y\in A$, $\bullet\in\{\wedge,\vee,\to,\cdot,\backslash,/\}$, $f(\bot)=\bot$ and $f(1)=1$ ($f(\top)=\top$ follows since $\bot\to\bot=\top$).

The classes of Heyting algebras and of GBI-algebras are denoted by $\algcl{HA}$ and $\algcl{GBI}$ respectively. They are both finitely based equational classes, meaning they are determined by finitely many equations (or inequations, since $s\le t$ and $s=s\wedge t$ are equivalent). For $\mathsf{HA}$ it suffices to take the equations of bounded lattices $(A,\wedge,\vee,\top,\bot)$ together with
$$
x\le y\to((x\wedge y) \vee z)\qquad
x\wedge(x\to y)\le y
$$
and for GBI-algebras one can add the inequations
$$
x\le (xy \vee z)/y\qquad 
((x/y)\wedge z)y\le x\qquad
x\le y\backslash(yx \vee z)\qquad
x((x\backslash y)\wedge z)\le y.
$$
By Birkhoff's \alo{HSP}-theorem, equational classes are precisely the classes that are varieties, i.e., closed under homomorphic images (\Ho), subalgebras (\So) and direct products (\Po), and moreover, for any class $\algvr K$ of algebras (of the same type) the variety $\Vo(\algvr K)$ generated by $\algvr K$ is $\alo{HSP}(\algvr K)$ (for details cf., e.g., \cite{Raftery:here}).

The variety of \emph{bunched implication algebras} (or BI-algebras) is the
 subclass $\algcl{BI}$ of all commutative GBI-algebras. In this case we use the 
more traditional notation of BI logic: $x*y=xy$ and $x\bi y=x\backslash y=y/x$. 
The  subvariety of \emph{Boolean BI-algebras}, axiomatized by $\neg\neg x=x$, is denoted by $\algcl{BBI}$.  

\subsection{Congruences}
An equivalence relation $\theta$ on an algebra $\m A$ is a \emph{congruence} if 
\begin{center}
$x\theta y$ implies $f(z_1,\ldots,x,\ldots,z_n)\theta f(z_1,\ldots,y,\ldots z_n)$
\end{center} for each argument of all fundamental operations $f$ of $\m A$. The set of all such congruences, ordered by inclusion, forms the congruence lattice Con$(\m A)$. The structure of this lattice determines several interesting properties of the algebra and of the class containing the algebras \cite{Raftery:here} so we now consider how to determine congruences of GBI-algebras.

An algebra with a constant operation $e$ is called $e$-\emph{congruence regular} if every congruence relation $\theta$ is determined by its $e$-congruence class 
$[e]_\theta=\{x:x\theta e\}$, i.e.,
$$\text{for all $\theta,\psi$, }[e]_\theta=[e]_\psi\implies \theta=\psi.$$ 
In such an algebra it suffices to describe the $e$-congruence classes,
and the poset of these classes, ordered by inclusion, is isomorphic to the
congruence lattice of the algebra.

For example, groups are $1$-con\-gruence  regular and
Heyting algebras are $\top$-con\-gruence regular, whereas monoids and lattices
(even with bounds and distributivity) are not $e$-congruence regular with respect to any constant operation $e$. 
In the case of Heyting algebras, the $\top$-congruence classes are precisely the lattice-filters of the algebra, i.e., sets $F\subseteq A$ such that ${\uparrow} F\subseteq F$ and $x,y\in F$ imply $x\wedge y\in F$. The congruence $\theta_F$ associated with the filter $F$ is defined by $x\theta_F y\iff x\to y,\ y\to x\in F$.

If we ignore the Heyting operations $\to, \top, \bot$ then GBI-algebras are residuated lattices. Congruence classes of lattices are always convex (i.e., if $x\le y\le z$ and $x,z$ are in a class, then $y$ is also in the class), and as mentioned above, residuated lattices are $1$-congruence regular (see e.g. \cite{JipsenTsinakis02}). The $1$-congruence classes are precisely the convex subalgebras $\m C$ that are closed under conjugation, i.e., for all $a\in A$ and all $x\in C$ it follows that $a\backslash xa\wedge 1, ax/a\wedge 1\in C$. Given such a congruence class $C$, the congruence $\theta_C$ is determined by $x\theta_Cy\iff x\backslash y\wedge y\backslash x\wedge 1\in C$.

From these observations we conclude that GBI-algebras are both $1$-con\-gruence regular and $\top$-congruence regular. 
The following criterion can be used to check whether a residuated lattice congruence is a GBI-congruence.

\begin{theorem}
Suppose $\theta$ is a residuated lattice congruence on the $\to,\bot,\top$-free
reduct of a GBI-algebra $\m A$. Then the following are equivalent: \begin{enumerate}
\item $\theta$ is a GBI-congruence.
\item For all $x,y,z\in A$ if $x\theta y$ then $x{\to}z\,\theta\,y{\to}z$ and $z{\to}x\,\theta\,z{\to}y$.
\end{enumerate}
\end{theorem}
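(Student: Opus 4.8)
The plan is to prove the two implications separately, with the bulk of the work going into (2)$\Rightarrow$(1). For (1)$\Rightarrow$(2), note that if $\theta$ is a GBI-congruence then by definition it respects all fundamental operations, including $\to$, so $x\mathrel\theta y$ immediately gives $x\to z\mathrel\theta y\to z$ and $z\to x\mathrel\theta z\to y$; this direction is essentially a tautology.

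For (2)$\Rightarrow$(1): we are given that $\theta$ is a residuated lattice congruence on the $\{\cdot,\backslash,/,\wedge,\vee,1\}$-reduct that in addition is compatible with $\to$ in both arguments. To conclude $\theta$ is a GBI-congruence we must show it is compatible with the remaining fundamental operations $\to$, $\top$, $\bot$. Compatibility with $\to$ in both arguments is precisely hypothesis (2) (strictly, (2) gives substitutivity in each coordinate separately, and a standard two-step argument — substitute in the first argument, then in the second — yields full compatibility $f(x_1,x_2)\mathrel\theta f(y_1,y_2)$ whenever $x_i\mathrel\theta y_i$). Compatibility with the nullary operations $\top$ and $\bot$ is trivial since a constant is related only to itself. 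Hence $\theta$ preserves every fundamental operation of the GBI-algebra, so it is a GBI-congruence.

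The one genuine subtlety — and what I expect to be the main obstacle — is making sure that a congruence of the residuated-lattice reduct, once it is shown to respect $\to$, really does respect the \emph{bounds} $\top,\bot$ as constants in the GBI-signature, i.e.\ that the signature bookkeeping matches up. Since $\bot\to\bot=\top$ (as noted in the excerpt when discussing GBI-homomorphisms) and $\theta$ is reflexive, $\top\mathrel\theta\top$ is automatic; similarly $\bot\mathrel\theta\bot$. There is no hidden content here, only a check that the list of fundamental operations $\{\wedge,\vee,\to,\top,\bot,\cdot,\backslash,/,1\}$ is exhausted: $\{\wedge,\vee,\cdot,\backslash,/,1\}$ by the hypothesis that $\theta$ is a residuated lattice congruence, $\to$ by (2), and $\top,\bot$ for free. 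I would present the argument in this order: (1)$\Rightarrow$(2) in one line, then for (2)$\Rightarrow$(1) list the operations to be checked, dispatch $\top,\bot$, upgrade the one-coordinate substitutivity in (2) to joint compatibility for $\to$, and conclude.
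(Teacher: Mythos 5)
Your proof is correct; in fact the paper states this theorem without giving any proof, treating it as immediate from the definitions, and your argument is exactly the routine verification it relies on: (1)$\Rightarrow$(2) is definitional, and for (2)$\Rightarrow$(1) the only fundamental operations beyond the $\{\wedge,\vee,\cdot,\backslash,/,1\}$-reduct are $\to$, covered by hypothesis (2), and the constants $\top,\bot$, which every equivalence relation respects by reflexivity. One small remark: since the paper defines a congruence by substitutivity in each argument separately, your two-step upgrade to joint compatibility for $\to$ is not even needed, though it is of course harmless.
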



\subsection{Logic} \label{sec:logic}
Throughout we use algebraic term syntax for logical formulas. Propositional intuitionistic logic uses the symbols $\wedge,\vee,\to,\top,\bot$ to build formulas (= terms) from variables $x,y,z,x_1,\ldots$, and $x\leftrightarrow y$ abbreviates the formula $(x\to y)\wedge(y\to x)$. The consequence relation $\vdash_\text{IL}$ of intuitionistic logic is defined by the following Hilbert system, traditionally denoted by HJ. The axioms are all substitution instances of the formulas below, and modus ponens is the only inference rule:
$$\inferrule{x\quad x\to y}{y}\qquad
\bot\to x\qquad
x\to\top
$$
$$
x\to(y\to x)\qquad
(x\to(y\to z))\to((x\to y)\to(x\to z))\qquad
$$
$$
x\wedge y\to x\qquad
x\wedge y\to y\qquad
x\to(y\to x\wedge y)
$$
$$
x\to x\vee y\qquad
x\to y\vee x\qquad
(x\to z)\to((y\to z)\to(x\vee y\to z))
$$
Given a set $\Gamma$ of formulas, $\Gamma\vdash_\text{IL}\varphi$ holds if there is a finite sequence of formulas $\varphi_1,\ldots,\varphi_n=\varphi$ such that each $\varphi_i$ is in $\Gamma$, is an axiom, or is the result of modus ponens applied to $\varphi_j,\varphi_k$ for some $j,k<i$. For example, a standard (but rather non-obvious) deduction shows that $\to$ is transitive: $\{x\to y,y\to z\}\vdash_\text{IL}x\to z$. The theorems (tautologies) of IL are all the formulas $\varphi$ such that $\emptyset\vdash_\text{IL}\varphi$, or equivalently $\vdash_\text{IL}\varphi\leftrightarrow\top$.

HJ is extended to a Hilbert system HGBI for $\algcl{GBI}$ by adding symbols $\cdot,\backslash,/$, all substitution instances of the following formulas as axioms 
$$
(xy)z\leftrightarrow x(yz)\qquad 
1x\leftrightarrow x\qquad
x1\leftrightarrow x\qquad
$$
and the bidirectional residuation rules
$$
\mprset{fraction={===}}
\inferrule{xy\to z}{y\to x\backslash z}\qquad
\inferrule{xy\to z}{x\to z/y}.
$$
While many other axioms (or rules) can be used to axiomatize $\algcl{GBI}$, 
 the axiomatization given here emphasizes the close relationship between Hilbert systems (of sufficient strength) and equational deduction. Neither approach is as effective as the sequent calculus decision procedure that is outlined in \S~\ref{ProofTheory}, but Birkhoff's system of equational deduction requires perhaps less explanation than the corresponding logical systems, and it allows \emph{substitution of equal terms} based on equalities derived from the axioms or assumptions.
\begin{theorem} \label{th:logalg}
HGBI corresponds to \GBI: $\Gamma\vdash_\text{HBGI}\varphi$ if and only if $\varphi=\top$ can be derived by equational reasoning from $\{\gamma\leftrightarrow\top:\gamma\in \Gamma\}$.
\end{theorem}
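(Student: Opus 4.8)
The plan is to prove the two implications separately, each by induction on the length of a derivation, after first isolating a handful of auxiliary \GBI-(quasi)identities and HGBI-theorems; conceptually this is just the verification that HGBI is algebraizable (in the sense of Blok and Pigozzi) with equivalent algebraic semantics \GBI, the translations being $\varphi\mapsto\{\varphi\approx\top\}$ and $s\approx t\mapsto\{s\leftrightarrow t\}$. Throughout, ``derived by equational reasoning'' is read as the equational consequence relation of the variety \GBI, equivalently provability in Birkhoff's calculus with instantiation of variables restricted to the defining identities of \GBI --- the restriction matching the fact that $\vdash_{\text{HGBI}}$ instantiates only its axioms and not $\Gamma$.

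For the left-to-right direction, induct on an HGBI-derivation of $\varphi$ from $\Gamma$. Every axiom $\alpha$ of HGBI satisfies $\alpha\approx\top$ in all \GBI-algebras --- the \textsf{HJ} axioms because Heyting algebras validate intuitionistic logic, and the monoid axioms $(xy)z\leftrightarrow x(yz)$, $1x\leftrightarrow x$, $x1\leftrightarrow x$ because associativity and the unit laws make the two sides of each $\leftrightarrow$ equal --- so $\alpha\approx\top$ is derivable. For \emph{modus ponens} with premises $\psi$ and $\psi\to\varphi$: from $\psi\approx\top$ we get $\top\to\varphi\approx\psi\to\varphi\approx\top$ by replacement and transitivity, and then the Heyting identity $x\wedge(x\to y)\le y$ at $x:=\top$, $y:=\varphi$ yields $\top\le\varphi$, i.e. $\varphi\approx\top$. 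For each bidirectional residuation rule: $\alpha\beta\to\gamma\approx\top$ means $\alpha\beta\le\gamma$, which by the residuation property --- available since the listed \GBI-inequations do axiomatise residuated lattices --- is equivalent to $\beta\le\alpha\backslash\gamma$, i.e. $\beta\to\alpha\backslash\gamma\approx\top$, and symmetrically for $/$ and for the two converse directions.

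For the right-to-left direction the task is to show HGBI internalises equational reasoning, which I would do by deriving in HGBI: (i) $\vdash\varphi\leftrightarrow\varphi$, symmetry and transitivity of $\leftrightarrow$ as rules, and the \emph{replacement} rules $\varphi\leftrightarrow\psi\vdash(\varphi\bullet\chi)\leftrightarrow(\psi\bullet\chi)$ and $(\chi\bullet\varphi)\leftrightarrow(\chi\bullet\psi)$ for every $\bullet\in\{\wedge,\vee,\to,\cdot,\backslash,/\}$ --- for $\wedge,\vee,\to$ the usual intuitionistic arguments, and for $\cdot,\backslash,/$ the one-sided monotonicities obtained from the bidirectional residuation rules (e.g.\ from $\varphi\to\psi$, using $\chi\psi\to\chi\psi$, the $\backslash$-rule gives $\psi\to\chi\backslash(\chi\psi)$, transitivity gives $\varphi\to\chi\backslash(\chi\psi)$, and the backward $\backslash$-rule gives $\chi\varphi\to\chi\psi$; the other coordinates are analogous, $\backslash$ and $/$ being antitone in one of them); (ii) $\vdash s\leftrightarrow t$ for each defining identity $s\approx t$ of \GBI --- the monoid equations are axioms, the \GBI-inequations reduce via the residuation rules to instances of $x\to x\vee y$ and of $\wedge$-elimination, and the two Heyting inequations are standard intuitionistic theorems (the first by the standard intuitionistic currying, from $x\wedge y\to((x\wedge y)\vee z)$); and (iii) that $\varphi$ and $\varphi\leftrightarrow\top$ are interderivable in HGBI --- $\varphi\vdash\varphi\leftrightarrow\top$ from the axiom $\varphi\to\top$, an instance of $x\to(y\to x)$, and $\wedge$-introduction, while $\varphi\leftrightarrow\top\vdash\varphi$ from $\wedge$-elimination giving $\top\to\varphi$, the fact that $\top$ is an HGBI-theorem, and \emph{modus ponens}. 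With (i)--(iii) in hand, any derivation of $\varphi\approx\top$ from $\{\gamma\approx\top:\gamma\in\Gamma\}$ in the (restricted) equational calculus of \GBI is mirrored line by line by an HGBI-derivation of $\varphi\leftrightarrow\top$ from $\Gamma$: hypotheses go via (iii), instances of \GBI-identities via (ii) and closure of HGBI-theorems under substitution, and the equivalence and replacement steps via (i); finally (iii) again converts $\varphi\leftrightarrow\top$ into $\varphi$.

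I expect the main obstacle to be not any single deep step but the replacement lemmas for $\cdot,\backslash,/$ in all argument positions: this is the one place where the bidirectional residuation rules, rather than intuitionistic reasoning, carry the argument, and each of the six ``slots'' (two for $\cdot$, two each for $\backslash$ and $/$, with the appropriate monotone or antitone behaviour) has to be handled. A secondary point requiring care --- though not difficulty --- is pinning down the reading of ``equational reasoning'' so that the instantiation rule of Birkhoff's calculus is confined to the \GBI-axioms, since the unrestricted rule is already unsound for equational consequence in the presence of variable-containing hypotheses and would break the correspondence with $\vdash_{\text{HGBI}}$.
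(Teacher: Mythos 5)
Your proposal is correct and follows essentially the same route as the paper's proof: establishing algebraizability via the translations $\varphi\mapsto\varphi\approx\top$ and $s\approx t\mapsto s\leftrightarrow t$, with the decisive step being the HGBI-derivability of the Birkhoff congruence (replacement) rules for $\cdot,\backslash,/$ from the bidirectional residuation rules plus transitivity of $\to$ --- exactly the paper's sample derivation of $x\to y \vdash y\backslash z\to x\backslash z$. You simply spell out in more detail the soundness direction and the bookkeeping (GBI identities as HGBI-theorems, interderivability of $\varphi$ and $\varphi\leftrightarrow\top$, restricted instantiation) that the paper delegates to known results for the Heyting and substructural reducts.
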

\begin{proof}
Analogous results are well known for intuitionistic and substructural reducts of GBI; cf. Remark \ref{rem:alg}. The HGBI axioms clearly correspond to the equational monoid axioms. It remains to show that Birkhoff's congruence rules are derivable in HGBI. For example the rule $\inferrule{x\to y}{y\backslash z\to x\backslash z}$ is proved as follows:
$$
1.\ x\to y\quad 
2.\ y\backslash z\to y\backslash z\quad
3.\ y(y\backslash z)\to z\quad
4.\ y\to z/(y\backslash z)
$$
$$
5.\ x\to z/(y\backslash z)\quad 
6.\ x(y\backslash z)\to z\quad
7.\ y\backslash z\to x\backslash z
$$
where transitivity of $\to$ was used for step 5. 
 \takeout{Hence $\inferrule{x\leftrightarrow y}{y\backslash z\leftrightarrow x\backslash z}$.}
 The rules $\inferrule{x\leftrightarrow y}{z\backslash x\leftrightarrow z\backslash y}$,
$\inferrule{x\leftrightarrow y}{xz\leftrightarrow yz}$,
$\inferrule{x\leftrightarrow y}{zx\leftrightarrow zy}$,
$\inferrule{x\leftrightarrow y}{x/z\leftrightarrow y/z}$,
$\inferrule{x\leftrightarrow y}{z/y\leftrightarrow z/x}$ are proved similarly.
\hfill\qed\end{proof}
A Hilbert system HBI for bunched implication logic is obtained by adding an axiom $x*y\to y*x$, in which case the rules for $/$ can be omitted, and the rules for $\backslash$ are rewritten with $\bi$. The resulting system is similar to the one in \cite{Pym02:book}. Alternatively one can use the following system with two more axioms and simpler rules (in addition to the axioms and rules of HJ).
$$
(x*y)*z\leftrightarrow x*(y*z)\qquad 
x*y\to y*x\qquad
x*1\leftrightarrow x\qquad
x*(x\bi y)\to y
$$
$$
x\bi (y\bi z)\leftrightarrow x*y\bi z\qquad 
\inferrule{x\to y}{x*z\to y*z}\qquad
\inferrule{x\to y}{1\to x\bi y}.
$$

\begin{rem} \label{rem:alg}
The connection between HGBI (HBI) and \GBI\ (\BI) exposed by Theorem \ref{th:logalg} is an instance of the phenomenon known as \emph{algebraizability} \cite{BlokP89:ams}. In this volume, an overview is provided by Font \cite{Font:here}. For Heyting and substructural reducts of (G)BI, details can be found, e.g., in Galatos et al. \cite[\S\S\ 1.4.3, 2.6]{GalatosJKO07}. In fact, due to the presence of Heyting $\to$, logics extending HGBI belong to a particularly well-behaved class of \emph{Rasiowa implicative} logics \cite{CintulaN10,Rasiowa74} \cite[\S\ 5]{Font:here}.
\end{rem}

\section{Concrete Models: Standard Models of BI} \label{ModelsBI}

As we already suggested in \S~\ref{sec:intro}, GBI admits a wealth of practically motivated models. 
 In this section, we focus on  models  previously investigated in the commutative setting, using  Pym et al. \cite[\S~4]{PymOHY04:tcs} as our blueprint. Even here, we are going to see potential for non-commutative generalizations (cf. especially \S~\ref{sec:semi}); we are going to explore more ``natively non-commutative'' models in \S~\ref{ModelsGBI}.

\subsection{Generalized PPMs}

In order to streamline the discussion and facilitate checking the GBI axioms for a large class of examples, let us follow the example of Pym et al. \cite[\S~3.6]{PymOHY04:tcs} (see also \cite{GalmicheMP05:mscs}) and define a convenient semantics
\begin{itemize}
\item whose defining properties are easily verifiable,
\item which covers many natural models and yet
\item  avoids the full generality of semantics for distributive substructural logics based on ternary relations. 
\end{itemize}
More specifically, consider preordered partial monoids 
 $\Ppm{M} = (M, \ppm, E, \ppmsb)$ with $\cdot$ lifted to an operation on subsets by $X \cdot Y = \{ z \mid \exists x \in X, y \in Y. x \cdot y \ppmsb z \}$  for $X, Y \subseteq M$ and with
 
 \begin{itemize}
 \item $(M,\ppm)$ being a partial semigroup up to the equivalence relation $\equiv$ defined as $\ppmsb \cap \ppmsp$, i.e., whenever one element of $\{ x \cdot (y \cdot z), (x \cdot y) \cdot z\}$ exists, the other one exists as well, and they are in the same $\equiv$-class,
 \item $E \subseteq M$ being a collection of \emph{unit elements}, i.e., 
 for any $x \in M$, $$\emptyset \neq \{x\} \cdot E  \; \subseteq \; [x]_\equiv \qquad \emptyset \neq  E \cdot \{x\} \; \subseteq \; [x]_\equiv$$
 (we always assume the closure of $E$ under $\equiv$),
 \item the \emph{bifunctoriality condition} 
 \begin{center}
 $x \ppmsb x'$, $y \ppmsb y'$ and $x' \ppm y' \in M$ implies $x\ppm y\in M$ and $x \ppm y \ppmsb x' \ppm y'$
 \end{center}
 holding for any $x, x', y, y' \in M$.
 \end{itemize}
Such a structure will be called a \emph{generalized PPM} (short for \emph{preordered partial monoid}), and the notation $x\cdot y\in M$ is used to indicate that $x\cdot y$ is defined. Most of the time, we will restrict attention to the case where $E = \{\ppmu\}$ for some $\ppmu \in M$; in such a case, we will speak of a \emph{\bro proper\brc PPM}. Define now $\Cmx{\Ppm{M}}$, \emph{the complex algebra of} $\Ppm{M}$,  as the algebra in GBI-signature whose
\begin{itemize} 
\item universe consists of all upsets of $(M,\ppmsb)$,
\item Heyting connectives are interpreted in the standard intuitionistic way,
\item the unit element is defined as the upset of $E$,
\item $X \cdot Y$ is as defined above,
\item the residuals are obtained using the fact that upsets are closed under arbitrary unions and $\cdot$ distributes over these unions.
\end{itemize}

\begin{fact} \label{fact:ppm}
The complex algebra of any PPM is a GBI-algebra.
\end{fact}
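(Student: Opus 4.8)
The plan is to verify separately the three ingredients packaged into the GBI signature: the Heyting reduct, the monoid reduct $(\Cmx{\Ppm{M}},\cdot,1)$, and the existence of the residuals $\backslash,/$. I would dispose of the routine items first. Since $\ppmsb$ is a preorder, the lattice of upsets $\ups{(M,\ppmsb)}$ is a complete Heyting algebra, with $\wedge,\vee$ given by $\cap,\cup$, with $\top=M$, $\bot=\emptyset$, and $X\to Y=\{m\in M\mid{\uparrow}m\cap X\subseteq Y\}$; this is standard and is exactly what ``interpret the Heyting connectives in the standard intuitionistic way'' refers to. One also checks that $X\cdot Y$ and $1={\uparrow}E$ are genuinely upsets — for $X\cdot Y$ this is immediate from transitivity of $\ppmsb$ (from $x\cdot y\ppmsb z\ppmsb z'$ one gets $x\cdot y\ppmsb z'$) — so all operations land in the intended universe.

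For the residuals, I would observe that the definition of $X\cdot Y$ is purely existential, hence $\cdot$ distributes over arbitrary unions in each argument: $(\bigcup_i X_i)\cdot Y=\bigcup_i(X_i\cdot Y)$ and $X\cdot(\bigcup_i Y_i)=\bigcup_i(X\cdot Y_i)$. Thus for fixed $X$ (resp.\ fixed $Y$) the map $Y\mapsto X\cdot Y$ (resp.\ $X\mapsto X\cdot Y$) preserves all joins of the complete lattice $\ups{(M,\ppmsb)}$, and so by the standard adjoint-existence argument it has a right adjoint $X\backslash Z=\bigcup\{Y\in\ups{(M,\ppmsb)}\mid X\cdot Y\subseteq Z\}$ (resp.\ $Z/Y=\bigcup\{X\in\ups{(M,\ppmsb)}\mid X\cdot Y\subseteq Z\}$). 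Being the largest solutions of $X\cdot Y\subseteq Z$, these are the residuals, so the law $X\cdot Y\subseteq Z\iff Y\subseteq X\backslash Z\iff X\subseteq Z/Y$ holds.

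The substantive step, and the only one where the defining conditions of a PPM really get used, is that $(\Cmx{\Ppm{M}},\cdot,1)$ is a monoid. For associativity I would first establish a reduction lemma: $w\in(X\cdot Y)\cdot Z$ if and only if there are $x\in X$, $y\in Y$, $z\in Z$ with $(x\cdot y)\cdot z$ defined and $(x\cdot y)\cdot z\ppmsb w$. The nontrivial direction unpacks $w\in(X\cdot Y)\cdot Z$ to $x\cdot y\ppmsb u$ and $u\cdot z\ppmsb w$ for some intermediate witness $u$, and then invokes the bifunctoriality condition on the data $x\cdot y\ppmsb u$, $z\ppmsb z$, $u\cdot z\in M$ to get $(x\cdot y)\cdot z\in M$ and $(x\cdot y)\cdot z\ppmsb u\cdot z\ppmsb w$; the converse uses only reflexivity of $\ppmsb$. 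The symmetric lemma describes $X\cdot(Y\cdot Z)$ via $x\cdot(y\cdot z)$, and then the partial-semigroup-up-to-$\equiv$ axiom glues the two descriptions: once $(x\cdot y)\cdot z$ is defined, $x\cdot(y\cdot z)$ is defined and $\equiv$-equal to it, hence $\ppmsb$-below the same elements, giving $(X\cdot Y)\cdot Z=X\cdot(Y\cdot Z)$. For the unit laws $1\cdot X=X=X\cdot 1$ I would combine bifunctoriality (to descend from a witness $a\in{\uparrow}E$ to an actual unit $e\in E$ with $e\ppmsb a$) with the unit-element axiom $\emptyset\neq\{x\}\cdot E,\ E\cdot\{x\}\subseteq[x]_\equiv$ (to move freely between $x$ and $e\cdot x$ inside an upset).

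I expect associativity to be the main obstacle. Because $\cdot$ is only partial and only associative up to $\equiv$, one cannot reparenthesize naively; every manipulation must be funneled through the bifunctoriality condition, and the real care lies in tracking ``definedness'' in tandem with $\ppmsb$. Once the reduction lemma is in place, everything else follows from the stated PPM axioms together with the standard theory of complete Heyting algebras of upsets.
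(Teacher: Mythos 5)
Your proof is correct and follows exactly the route the paper intends: the paper itself gives no detailed argument for this Fact, noting only that it is a straightforward generalization of the corresponding facts in Pym et al.\ and Galmiche et al., and its construction already signals your key steps (residuals obtained because $\cdot$ distributes over arbitrary unions of upsets, Heyting reduct standard). Your careful handling of the monoid laws---funnelling every reparenthesization through bifunctoriality and associativity-up-to-$\equiv$, and using bifunctoriality plus the unit axiom for $1={\uparrow}E$---is precisely the verification the paper leaves to the reader, and it checks out.
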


\noindent
This is a straightforward generalization of the facts used by Pym et al. \cite{PymOHY04:tcs} and Galmiche et al.  \cite{GalmicheMP05:mscs} with the obvious difference that we are not assuming commutativity. Of course, from the point of view of a substructural logician, such partial monoids can be turned into instances of ternary relation semantics by setting $Rxyz$ whenever $x \cdot y = z$. We will return to  relational semantics in \S~\ref{DualityGBI} and \S~\ref{ProofTheory}, each time with a somewhat different focus and somewhat different notation.

\subsection{Intuitionistic vs. Classical Resource Models}

Most PPM-style models discussed below can be in fact obtained in two flavours: a monoid with a degenerate or discrete order (thus yielding a boolean GBI-algebra) and an associated intuitionistic structure with a nontrivial order definable in terms of the monoid operation. This has been noted early on in the development of BI logic, leading to G\"odel-McKinsey-Tarski-style modal translations between intuitionistic and classical logics of suitable classes of models \cite[Prop. 9]{IshtiaqOH01:popl} (see also \cite{Galmiche2006}) and commonly used terms \emph{intuitionistic semantics} and \emph{intuitionistic assertion}.\footnote{Interestingly, 
 the intuitionistic BI is embeddable in the classical BBI \cite{Larchey-WendlingG09}, rather than the other way around. Indeed, as pointed out in Litak et al. \cite[\S~4.1]{LitakPR17}, (un)decidability results discussed in \S~\ref{Decidability} entail that no negative translation from BBI to BI can work.} 

Thus, the idea is well known, but in our setting we can present it in a rather convenient way. Let a 
 PME (a \emph{partial monoid up to equivalence}) 
 be a generalized PPM where the ordering is an equivalence relation. Obviously, this happens  iff $\ppmsb$ coincides with $\equiv$ as defined above. 
 
 We speak of CPME (\emph{commutative partial monoid up to equivalence}) when the commutativity law $x \cdot y \equiv y \cdot x$  holds.  While commutativity makes transition from the boolean to the intuitionistic setting much smoother, we can do without it, at the expense of introducing some additional apparatus. 
 Given  a (not necessarily commutative!) PME $\Ppm{M} = (M, \ppm, E, \equiv)$, let  $\CM$ be 
   the collection of those  $x \in M$ for which
 \begin{itemize}
 \item for any $y \in M$, whenever $x \cdot y \in M$, there exists $y'$ s.t. $x \cdot y  \equiv  y' \cdot x$ and
 \item for any $y \in M$, whenever $y \cdot  x \in M$, there exists $y'$ s.t. $y \cdot x \equiv  x \cdot y'$.
 \end{itemize}

 
 \begin{fact} \label{fact:cm}
In any PME $\Ppm{M} = (M, \ppm, E, \equiv)$:
 \begin{upbroman}
 \item $E \subseteq \CM$,
 \item 
 $\CM \cdot \CM \subseteq \CM$, that is, $\CM \subseteq \CM \backslash \CM$, 
 \item  For any $X \in \Cmx{\Ppm{M}}$, $X \cdot \CM = \CM \cdot X$ and $\CM \backslash X = X / \CM$,
 \item Whenever $\Ppm{M}$ is commutative, $\CM = M$.
\end{upbroman}
\end{fact}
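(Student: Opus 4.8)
The plan is to verify the four items essentially by unwinding the definition of $\CM$ and pushing the required elements through the bifunctoriality and unit conditions. For (i), given $x \in E$ and $y \in M$ with $x \cdot y \in M$, the unit axiom $\emptyset \neq \{x\} \cdot E \subseteq [x]_\equiv$ is not quite what is needed; instead one uses the other unit axiom applied to $y$: there is $u \in E$ with $y \cdot u \equiv y$, hence $x \cdot y \equiv x \cdot (y \cdot u)$, and associativity-up-to-$\equiv$ rewrites this as $(x \cdot y) \cdot u$; but more directly, since $x \in E$, $y \cdot x \equiv y \equiv x \cdot y'$ for a suitable unit $y'$, giving the witness. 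Symmetrically for the second clause. I would write this out carefully, being explicit about which of the two unit inclusions is invoked at each step, since the non-commutative setting makes the left/right bookkeeping the only real subtlety here.

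For (ii), take $x, x' \in \CM$ and suppose $(x \cdot x') \cdot y \in M$. By associativity-up-to-$\equiv$, $x \cdot (x' \cdot y) \in M$, so $x' \cdot y \in M$; applying the defining property of $x'$ gives $x' \cdot y \equiv y' \cdot x'$ for some $y'$, then $x \cdot (x' \cdot y) \equiv x \cdot (y' \cdot x') \equiv (x \cdot y') \cdot x'$, and then applying the defining property of $x$ to $x \cdot y' \in M$ (which exists by bifunctoriality, reasoning through $\equiv$) lets us move $x$ to the right as well, yielding $(x \cdot x') \cdot y \equiv y'' \cdot (x \cdot x')$. The mirror argument handles $y \cdot (x \cdot x') \in M$. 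The reformulation ``$\CM \subseteq \CM \backslash \CM$'' is then just the observation that $\CM$ is an upset (which it is: if $x \ppmsb x''$ and $x \in \CM$, bifunctoriality transfers the defining conditions to $x''$) together with the residuation adjunction applied in $\Cmx{\Ppm{M}}$.

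For (iii), the containment $X \cdot \CM \subseteq \CM \cdot X$ is proved pointwise: if $z \in X \cdot \CM$ then $x \cdot c \ppmsb z$ with $x \in X$, $c \in \CM$; the defining property of $c$ (second clause) gives $x \cdot c \equiv c \cdot x'$... wait, that moves $c$ to the left but introduces $x'$ not in $X$ — instead use the first clause of the definition of $c$ relative to the argument on its left, i.e.\ since $x \cdot c \in M$ we get $x \cdot c \equiv c' \cdot x$ with $c' \equiv$-related to something; one checks $c' \in \CM$ using that $\CM$ is closed under $\equiv$ (which follows from it being an upset and $\equiv \subseteq \ppmsb$), and then $c' \cdot x \ppmsb z$ with $x \in X$ witnesses $z \in \CM \cdot X$. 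The reverse containment is symmetric, giving $X \cdot \CM = \CM \cdot X$. The residual identity $\CM \backslash X = X / \CM$ then follows formally: $y \in \CM \backslash X$ iff $\CM \cdot \{y\}{\uparrow} \subseteq X$ iff (by the just-proved commuting identity applied to the principal upset $\{y\}{\uparrow}$, noting $\CM \cdot Y = Y \cdot \CM$ holds for all $Y \in \Cmx{\Ppm M}$, in particular principal ones) $\{y\}{\uparrow} \cdot \CM \subseteq X$ iff $y \in X / \CM$.

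Finally, (iv) is immediate: if $\Ppm{M}$ is commutative then $x \cdot y \equiv y \cdot x$ always, so both defining clauses of $\CM$ hold trivially with $y' = y$, whence $\CM = M$.

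\smallskip
\noindent
The main obstacle I anticipate is purely combinatorial rather than conceptual: in (ii) and (iii), each step that ``moves an element past $\cdot$'' produces a fresh existentially-quantified witness, and one must repeatedly confirm that these witnesses stay inside $\CM$ and that the chain of $\equiv$'s composes legitimately (using that $\equiv = {\ppmsb} \cap {\ppmsp}$ is genuinely an equivalence compatible with $\cdot$ via bifunctoriality). Keeping the left/right asymmetry straight — in particular never silently assuming commutativity — is where care is required; the individual manipulations are routine.
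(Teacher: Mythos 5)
The paper states this Fact without proof, so there is nothing to compare line by line; judged on its own, your treatment of (ii) and (iv) is sound (the reassociate-and-push argument for (ii), together with the observation that $\CM$ is closed under $\equiv$ by bifunctoriality, works, and (iv) is immediate), but (i) and (iii) have genuine gaps, located exactly where you wave your hands. For (i): to put $x\in E$ into $\CM$ you must, given that $x\cdot y$ is defined, produce some $w$ with $w\cdot x$ \emph{defined} and $w\cdot x\equiv x\cdot y$. Your ``more directly'' step takes $w=y$ and asserts $y\cdot x\equiv y$, silently assuming $y\cdot x$ is defined; the unit axioms only guarantee that \emph{some} unit composes with $y$ on that side, not this particular $x$. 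Your first route fares no better: it ends in $(x\cdot y)\cdot u$ with a unit $u$ on the right, which is not of the required shape $w\cdot x$ unless $u$ can be chosen to be $x$. This cannot be patched from the stated axioms: in the PME $M=P\times P$, $|P|\ge 2$, with $(a,b)\cdot(b,c)=(a,c)$, $E$ the diagonal and $\equiv$ equality (the structure underlying the discrete-order model of \S~\ref{sec:weakening}), no element at all lies in $\CM$, so $E\not\subseteq\CM$. A proof of (i) needs either that $E$ is a single $\equiv$-class (then non-emptiness of $\{y\}\cdot E$ plus bifunctoriality force $y\cdot x$ to be defined and $\equiv y$, and $w=y$ works), or the definition of $\CM$ read with the roles of the two factors swapped (the $\CM$-element is the one that moves and may be replaced, its partner stays put) --- the reading the paper's own proof of Theorem~\ref{inttocl}\,(iii) implicitly uses, and under which your first route does go through. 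You invoke neither.

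For (iii): you correctly notice that the second clause of $c\in\CM$ yields $x\cdot c\equiv c\cdot x_1$ with a new right-hand factor $x_1$ that need not lie in $X$, but the proposed repair is illusory. Neither clause of the definition ever produces a decomposition $c'\cdot x$ retaining the original $x$ and introducing a new $c'$: both clauses keep the $\CM$-element fixed and replace its partner. So the step ``since $x\cdot c\in M$ we get $x\cdot c\equiv c'\cdot x$'' is exactly the statement you need and is unsupported; and even granting it, the justification ``$c'\in\CM$ because $\CM$ is closed under $\equiv$'' is a non sequitur, since $c'$ is in general not $\equiv$-related to $c$ (only the two products are $\equiv$). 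The gap is real: in the four-element total PME of all self-maps of $\{0,1\}$ under composition ($\equiv$ equality, $E=\{\mathrm{id}\}$), both bijections belong to $\CM$, and for $X$ the singleton of a constant map one gets $X\cdot\CM=X$ while $\CM\cdot X$ contains both constant maps, so $X\cdot\CM\ne\CM\cdot X$. Your final residuation step (deriving $\CM\backslash X=X/\CM$ from the commutation identity applied to principal upsets) is fine, but it rests on that unproved identity, which as stated can fail; as with (i), it becomes provable only after amending the hypotheses or the reading of $\CM$, and any honest write-up should make that explicit.
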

 
 \noindent
 Now let us define the \emph{substate relation}\footnote{In the theory of semigroups, one would rather use the name \emph{algebraic preordering}. It is also known as \emph{divisibility relation} for commutative semigroups.}, generalizing the corresponding definition for \emph{separation algebras} (see below) proposed by Calcagno et al. \cite{Calcagno2007}:
\begin{center}
$x \ipmsb y$ \qquad iff \qquad $\exists z \in \CM.\, x \cdot z \equiv y$.
\end{center}

\begin{theorem} \label{inttocl}
Let $\Ppm{M} = (M, \ppm, E, \equiv)$ be a PME. Then
 \begin{upbroman}
\item the complex algebra $\Cmx{\Ppm{M}}$ of unions of equivalence classes is a boolean GBI-algebra,
\item $\ipmsb$ is a preorder, with the associated equivalence relation ${\intu{\equiv}} = \ipmsb \cap \ipco$ containing the original $\equiv$,
\item $\intu{\Ppm{M}} = (M, \ppm, E, \ipmsb)$ is a generalized PPM and hence $\Cmx{(\intu{\Ppm{M}})}$ is a GBI-algebra, 
\item Elements of $\Cmx{(\intu{\Ppm{M}})}$ are exactly those sets $A$ of equivalence classes in $\Ppm{M}$ which in $\Cmx{\Ppm{M}}$ satisfy one of the following equivalent conditions:

\medskip

\begin{tabular}{llll}
$A \subseteq A/\CM$, & $\CM \subseteq A \backslash A$, & $A \cdot \CM \subseteq A$, & $A \cdot \CM = A$, \\
$A \subseteq \CM \backslash A$, & $\CM \subseteq A / A$, &  $\CM \cdot A \subseteq A$, &  $\CM \cdot A = A$.
\end{tabular}
\medskip
\item[\bro v\brc] For any $A, B \in \Cmx{(\intu{\Ppm{M}})}$ s.t. $B \subseteq \CM / \CM$ \bro where $\CM / \CM$  denotes an element of  $\Cmx{\Ppm{M}}$ rather than $ \Cmx{(\intu{\Ppm{M}})}$\brc,  $A \cdot B \leq A$ holds in $\Cmx{(\intu{\Ppm{M}})}$.  Hence, whenever $\Ppm{M}$ is a CPME, $\Cmx{(\intu{\Ppm{M}})} \in \mathsf{BI_w}$ \bro cf. \S~\ref{Subvarieties}\brc.
\end{upbroman}
\end{theorem}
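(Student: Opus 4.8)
The plan is to dispatch the five parts in order; the only genuinely delicate point is the bifunctoriality clause hidden in (iii). Parts (i) and (ii) are bookkeeping. For (i): a PME is by definition a generalized PPM, so Fact~\ref{fact:ppm} already hands us a GBI-algebra, and since $\ppmsb$ is an equivalence its universe of upsets is just the lattice of unions of $\equiv$-classes, on which the Heyting implication is the Boolean one; hence $\neg\neg X=X$. For (ii): reflexivity of $\ipmsb$ uses a unit $e\in E\subseteq\CM$ (Fact~\ref{fact:cm}(i)) with $x\ppm e\equiv x$; transitivity reassociates $x\ppm(a\ppm b)\equiv(x\ppm a)\ppm b$ and invokes $\CM\ppm\CM\subseteq\CM$ (Fact~\ref{fact:cm}(ii)); $\intu{\equiv}$ is an equivalence as the symmetric part of a preorder; and $\equiv\subseteq\intu{\equiv}$ follows once again by multiplying by a unit.

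For (iii) I would first record a two-sided description of the substate relation: applying Fact~\ref{fact:cm}(iii) to $[x]_\equiv\in\Cmx{\Ppm{M}}$ gives that $x\ipmsb x'$ iff $x\ppm a\equiv x'$ for some $a\in\CM$, iff $a\ppm x\equiv x'$ for some $a\in\CM$. Now suppose $x\ipmsb x'$, $y\ipmsb y'$ and $x'\ppm y'$ is defined. The key move is to pick the witnesses \emph{on opposite sides}, say $a\ppm x\equiv x'$ and $y\ppm b\equiv y'$ with $a,b\in\CM$: bifunctoriality of $\Ppm{M}$ then makes $(a\ppm x)\ppm(y\ppm b)$ defined and $\equiv x'\ppm y'$, and reassociating it as $a\ppm((x\ppm y)\ppm b)$ forces $x\ppm y$ to be defined. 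Moreover $x'\ppm y'\equiv a\ppm((x\ppm y)\ppm b)$ sits inside the two-sided product $\CM\ppm[x\ppm y]_\equiv\ppm\CM$, which by Fact~\ref{fact:cm}(iii) (to move the left factor past $[x\ppm y]_\equiv$) and Fact~\ref{fact:cm}(ii) equals $[x\ppm y]_\equiv\ppm\CM$; hence $x'\ppm y'\equiv(x\ppm y)\ppm c$ for some $c\in\CM$, that is, $x\ppm y\ipmsb x'\ppm y'$. The partial-semigroup and unit conditions for $\intu{\Ppm{M}}$ are inherited from $\Ppm{M}$ because the defined products are unchanged and $\equiv\subseteq\intu{\equiv}$, so Fact~\ref{fact:ppm} applies to $\intu{\Ppm{M}}$ and $\Cmx{(\intu{\Ppm{M}})}$ is a GBI-algebra. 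I expect this to be the main obstacle: without commutativity one cannot simply merge $(x\ppm a)\ppm(y\ppm b)$ into $(x\ppm y)\ppm(a\ppm b)$, and the substitute is precisely the opposite-sides trick together with the collapse $\CM\ppm[x\ppm y]_\equiv\ppm\CM=[x\ppm y]_\equiv\ppm\CM$.

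For (iv) I would observe that the eight displayed conditions are pairwise equivalent by the residuation laws ($A\ppm\CM\subseteq A\iff A\subseteq A/\CM\iff\CM\subseteq A\backslash A$, and the mirror statements on the other side), by Fact~\ref{fact:cm}(iii) ($A\ppm\CM=\CM\ppm A$), and by the trivial inclusion $A\subseteq A\ppm\CM$, which upgrades the inclusions to equalities. That $A$ is an element of $\Cmx{(\intu{\Ppm{M}})}$ iff $A$ is a union of $\equiv$-classes with $A\ppm\CM\subseteq A$ in $\Cmx{\Ppm{M}}$ is then immediate: any $\ipmsb$-upset is $\intu{\equiv}$-closed, hence $\equiv$-closed, and contains $x\ppm c$ whenever it contains $x$ (as $x\ipmsb x\ppm c$); conversely, if $A$ is $\equiv$-closed with $A\ppm\CM\subseteq A$ and $x\in A$, $x\ipmsb z$, then $z\equiv x\ppm c\in A\ppm\CM\subseteq A$.

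For (v) I would express the monoid product of $\Cmx{(\intu{\Ppm{M}})}$ through that of $\Cmx{\Ppm{M}}$: unwinding $x\ppm y\ipmsb z\iff\exists c\in\CM.\,(x\ppm y)\ppm c\equiv z$ shows that the product of $A,B$ in $\Cmx{(\intu{\Ppm{M}})}$ equals $(A\ppm B)\ppm\CM=A\ppm(B\ppm\CM)$, all right-hand products taken in $\Cmx{\Ppm{M}}$. If $B\subseteq\CM/\CM$ then $B\ppm\CM\subseteq(\CM/\CM)\ppm\CM\subseteq\CM$, so the product of $A$ and $B$ is contained in $A\ppm\CM=A$ by (iv); this is exactly $A\ppm B\le A$ in $\Cmx{(\intu{\Ppm{M}})}$. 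When $\Ppm{M}$ is a CPME we have $\CM=M$ (Fact~\ref{fact:cm}(iv)), so $\CM/\CM$ is the top element of $\Cmx{\Ppm{M}}$ and the side condition on $B$ becomes vacuous; then $B=1\ppm B\le 1$ for every $B$ forces $1=\top$, and since commutativity of $\Ppm{M}$ descends to $\Cmx{(\intu{\Ppm{M}})}$, we conclude $\Cmx{(\intu{\Ppm{M}})}\in\mathsf{BI_w}$ (cf.\ \S~\ref{Subvarieties}).
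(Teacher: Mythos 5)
Your proposal is correct and, in its overall architecture, mirrors the paper's proof part for part: (i) is Fact~\ref{fact:ppm} plus closure of unions of $\equiv$-classes under complementation, (ii) uses Fact~\ref{fact:cm}(i)--(ii) together with the unit and associativity laws, (iv) is the definition of $\ipmsb$-upsets combined with residuation and Fact~\ref{fact:cm}(iii), and (v) reduces to $B\cdot\CM\subseteq\CM$ and $\ipmsb$-upward closure --- your reformulation of the product of $\Cmx{(\intu{\Ppm{M}})}$ as $(A\cdot B)\cdot\CM$ computed in $\Cmx{\Ppm{M}}$ is just the paper's pointwise argument stated at the level of complex products. The one genuine divergence is in (iii), which both you and the paper single out as the crux. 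The paper keeps both witnesses on the right ($x\cdot z_x\equiv x'$, $y\cdot z_y\equiv y'$) and then invokes the definition of $\CM$ to slide $z_x$ rightwards past $y$, obtaining some $z'_x$ with $x\cdot y\cdot z'_x\cdot z_y\equiv x'\cdot y'$ and then applying Fact~\ref{fact:cm}(ii); this step tacitly requires the displaced element $z'_x$ to lie in $\CM$, which the definition of $\CM$ as literally stated does not immediately provide (it keeps the $\CM$-element fixed and changes the other factor). You instead extract the two-sided description $x\ipmsb x'\iff\exists a\in\CM.\ a\ppm x\equiv x'$ from Fact~\ref{fact:cm}(iii) applied to the class $[x]_\equiv$ (a legitimate element of $\Cmx{\Ppm{M}}$), choose the witnesses on opposite sides, and collapse $\CM\cdot[x\ppm y]_\equiv\cdot\CM$ into $[x\ppm y]_\equiv\cdot\CM$ via Fact~\ref{fact:cm}(ii)--(iii). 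What this buys is robustness: nothing ever has to be commuted past $y$, so you never need to track whether displaced witnesses stay in $\CM$; the price is a heavier reliance on Fact~\ref{fact:cm}(iii) rather than on the raw definition of $\CM$, which is a fair trade since the paper's own proof of (iv) already uses that fact.
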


\begin{proof}\

 \begin{upbroman}
\item A direct corollary of Fact \ref{fact:ppm}: unions of equivalence classes are upsets of PME and they are closed under complementation. 

\item Transitivity of $\ipmsb$  follows from associativity of $\cdot$ and Fact \ref{fact:cm}.(ii), whereas  reflexivity of $\ipmsb$  follows from the monoidal unit law and Fact \ref{fact:cm}.(i) (both up to bifunctoriality of $\equiv$ and equivalence). The latter assures also the containment claim, jointly with transitivity of $\equiv$.  

\item In the light of (ii), we only need to ensure bifunctoriality of $\ipmsb$; note that this is the first time when we use the fact that the substate relation is defined in terms of $\CM$. Assume $z_x, z_y \in \CM$ s.t. $x \cdot z_x \equiv x'$ and $y \cdot z_y \equiv y'$. By bifunctoriality of $\equiv$, we get that $(x \cdot z_x) \cdot (y \cdot z_y) \equiv x' \cdot y'$. Iterating the associativity law yields $x \cdot (z_x \cdot y) \cdot z_y \equiv x' \cdot y'$. Now we use the definition of $\CM$ to pick a suitable $z'_x$ s.t.  $x \cdot y  \cdot z'_x \cdot z_y \equiv x' \cdot y'$. Thanks to Fact  \ref{fact:cm}.(ii), we obtain that $ z'_x \cdot z_y \in \CM$.

\item By definition, a set of equivalence classes $A$ is an element of $\Cmx{(\intu{\Ppm{M}})}$ iff for any $a \in A$ and $c \in \CM$,  it is the case that $a \cdot c \in A$. This is an equivalent way of stating that $\Cmx{\Ppm{M}}$, it holds that $A \cdot \CM \subseteq A$. The rest follows form Fact \ref{fact:cm}.(iii).

\item 
Assume $A, B \in \Cmx{(\intu{\Ppm{M}})}$, $B \subseteq \CM$ and $x \in A \cdot B$. That is, there are $a \in A, b \in B, z \in \CM$ s.t. $a \cdot b \cdot z \equiv x$. 
 By assumption, we get that $b \cdot z \in \CM$, thus $a \ipmsb x$ and we just use the fact that $A$, like all elements of $\Cmx{(\intu{\Ppm{M}})}$, is $\ipmsb$-upward closed. \hfill\qed
\end{upbroman}
 
 \end{proof}

One problem with this construction is that ${\intu{\equiv}}$ may happen to be bigger than the original $\equiv$, also when $\equiv$ is just the diagonal (equality relation). 
 Let us say that a PME \begin{itemize}
 \item  is \emph{right-cancellative} if $x \cdot y \equiv x \cdot y'$  implies $y \equiv y'$  (in the presence of commutativity this implies left-cancellativity) and
\item satisfies \emph{indivisibility of units} if
$x \cdot y \in E$ implies $x \in E$ (and hence also $y \in E$).
 \end{itemize}


\begin{theorem} \label{th:cancel}
Let $\Ppm{M} = (M, \ppm, E, \equiv)$ be a PME. Whenever  $\Ppm{M}$ is right-cancel\-lative and satisfies indivisibility of units, then the associated equivalence relation ${\intu{\equiv}} = \ipmsb \cap \ipco$ 
 of  $\intu{\Ppm{M}}$ is the same as the original $\equiv$. 
 \end{theorem}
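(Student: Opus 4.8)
The plan is to prove the nontrivial inclusion ${\intu{\equiv}} \subseteq {\equiv}$, since the converse ${\equiv} \subseteq {\intu{\equiv}}$ is already contained in Theorem~\ref{inttocl}.(ii). So I would start from $x \intu{\equiv} y$, i.e.\ $x \ipmsb y$ and $y \ipmsb x$, and unfold the definition of the substate relation to obtain witnesses $z, z' \in \CM$ with $x \cdot z \equiv y$ and $y \cdot z' \equiv x$; in particular both of these products are defined.

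The next step is to collapse these two equations into a single one of the form $x \cdot (z \cdot z') \equiv x$. Since $\equiv$ is $\ppmsb \cap \ppmsp$, it inherits bifunctoriality from $\ppmsb$, so from $x\cdot z \equiv y$ together with the fact that $y\cdot z'$ is defined we get that $(x\cdot z)\cdot z'$ is defined and $(x\cdot z)\cdot z' \equiv y\cdot z' \equiv x$. The partial-semigroup-up-to-$\equiv$ law then gives that $z\cdot z'$ and $x\cdot(z\cdot z')$ are defined and $x\cdot(z\cdot z') \equiv x$. Now I would invoke the unit axiom: since $\emptyset \neq \{x\}\cdot E$, there is some $e_0 \in E$ with $x\cdot e_0$ defined and $x\cdot e_0 \equiv x$. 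Hence $x\cdot(z\cdot z') \equiv x \equiv x\cdot e_0$, and right-cancellativity cancels $x$ on the left, yielding $z\cdot z' \equiv e_0$, hence $z\cdot z' \in E$ (recall $E$ is assumed $\equiv$-closed). Indivisibility of units then forces $z \in E$ (and hence also $z' \in E$).

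The conclusion is then immediate: as $z \in E$ and $x\cdot z$ is defined, $x\cdot z \in \{x\}\cdot E \subseteq [x]_\equiv$, so $x\cdot z \equiv x$; combined with $x\cdot z \equiv y$ this gives $x \equiv y$, as required. The main thing to be careful about --- and the only real obstacle --- is the partiality bookkeeping in the middle step: before each use of associativity or cancellation one has to check that the relevant composite actually lies in $M$, which is precisely what bifunctoriality of $\equiv$ and the partial-semigroup law are there to supply. A minor but related subtlety is that $E$ need not be a singleton, so one cannot write $x = x\cdot\uni$ outright but must first extract a unit $e_0$ that composes with $x$.
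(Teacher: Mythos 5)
Your proof is correct and follows essentially the same route as the paper's: reduce to the inclusion ${\intu{\equiv}}\subseteq{\equiv}$ via Theorem~\ref{inttocl}.(ii), combine the two witnesses to get $x\cdot(z\cdot z')\equiv x\cdot e$, apply right-cancellativity and indivisibility of units to conclude $z\in E$ and hence $x\equiv y$. Your extra care with partiality (bifunctoriality, associativity up to $\equiv$) and with $E$ not being a singleton only makes explicit what the paper's terse argument leaves implicit.
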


\begin{proof}
We have already established in Theorem \ref{inttocl}.(ii) that ${\intu{\equiv}}$ contains the original $\equiv$. Thus, we have only to show the converse inclusion. Assume then that $z_x, z_y \in \CM$, $x \cdot z_x \equiv y$ and $y \cdot z_y \equiv x$. Therefore,  $x \cdot z_x \cdot z_y \equiv x \cdot e$.   Right-cancellativity implies that $z_x \cdot z_y \equiv e$ and indivisibility of units implies that $z_x \in E$, hence $x \equiv y$.     
\hfill\qed\end{proof}
Whenever  $\equiv$ in $\Ppm{M}$ is the equality relation (as it happens in most natural examples), Theorem \ref{th:cancel} says that cancellativity and indivisibility of units of $\Ppm{M}$ entail that $\ipmsb$ is a partial ordering.

\emph{Separation algebras} \cite{Calcagno2007}  are  CPME's (i.e., \emph{commutative} PME's) which moreover are cancellative,  have $E=\{e\}$, i.e., are  \emph{\bro proper\brc PPM}s in our terminology, and where $\equiv$ is just the identity relation. 

We are now ready to instantiate this framework to specific applications.

\subsection{Resource Allocation and Generalized Effect Algebras} \label{Effect}

Given any set (which is thought of as the supply of \emph{resources}), one can impose a separation algebra structure on the set of all its subsets by taking $x \ppm y$ to be $x \cup y$ whenever these two sets are disjoint and undefined otherwise. The order can be taken to be discrete (equality) or one can transfer it via Theorems \ref{inttocl} and \ref{th:cancel}, obtaining ordinary inclusion relation as the ordering. 
The empty set is the identity element, and the collection of finite subsets forms a subalgebra. This example is discussed in detail in \S~4.3 of Pym et al. \cite{PymOHY04:tcs} (see also \cite{IshtiaqOH01:popl,Reynolds00:intuitionistic,Reynolds02:lics}). 

A \emph{generalized effect algebra} is a separation algebra that satisfies the \emph{positivity law}: if $x\ppm y=e$ then $x=e=y$. 
 This holds, e.g., for the separation algebra defined by disjoint union. 
 The more specialized concept of \emph{effect algebra} was defined by Foulis and Bennett \cite{FoulisBennett94} as an abstraction of \emph{quantum effect operators} in Hilbert space (i.e., self-adjoint operators with spectrum in the unit interval). These are generalized effect algebras with a constant $\top$ satisfying in addition the \emph{orthosupplementation} law: for every $x$, there exists a unique $y$ s.t. $x \cdot y$ exists and $x \cdot y = \top$.

\subsection{Resource Separation, Memory and the Heap Model}   \label{HeapModel}

The next example follows the same idea as the separation algebra given by
disjoint union
\cite{IshtiaqOH01:popl,Reynolds00:intuitionistic,Reynolds02:lics,PymOHY04:tcs,OHearn12:nato,DemriD15:jancl}. It is also probably the one most responsible for the success of BI in computer science. This time, resources are interpreted concretely  as portions of computer memory. A good overview of various possible notions of \emph{memory models} can be found in the recent work of Brotherston and Kanovich \cite[\S~2]{BrotherstonK14:jacm} and also in Demri and Deters \cite{DemriD15:jancl}.

\newcommand{\nil}{\mathsf{nil}}

More specifically, given an infinite set of \emph{locations} $L$ and a set of \emph{record values} $RV$, the latter possibly with some additional structure, we define \emph{heaps} (or \emph{heaplets}, as suggestively named by Berdine et al. \cite{Berdine2006}) as finite partial functions from $L$ to $RV$. Particularly when reasoning about linked data structure, it is  common to demand 
 that we have in addition a function from $RV$ to $L~\cup~\{\nil\}$, 
 where $\nil$ is a fixed null pointer. Actually,  separation logic overviews quite often restrict attention to single-linked lists, defining $RV$ to be $V \times (L \cup \{ \nil \})$ and the set of \emph{base values} $V$ is typically taken to be, e.g., $\inte \cup L \cup \{ \nil \})$. 
There are other possible choices for $RV$, for example it can be taken to be $V^2$. 

One obtains a separation algebra structure on heaps by setting $h \ppm h'$ to be their union when domains of $h$ and $h'$ are disjoint and undefined otherwise. Again, 
  the intuitionistic option offered by Theorems \ref{inttocl} and \ref{th:cancel} 
 orders heaps by inclusion between their graphs. Interestingly, one of the earliest papers 
 on separation logic \cite{Reynolds00:intuitionistic} took the latter route (cf.  also \cite{IshtiaqOH01:popl,Reynolds02:lics,PymOHY04:tcs}). 

Let us mention here one more possible tweak to these models, which makes them  closer to memory models of actual programming languages and more convenient from the point of view of development of program logics as discussed in \S~\ref{SeparationLogic}. It is also our first opportunity to use generalized rather than proper PPM's. Namely, assume that in addition to the collection of locations $L$, we also have a collection of ordinary \emph{program variables} $\mathit{PVar}$, and in addition to record values $RV$, we also have \emph{store values} (or \emph{stack values}) $\mathit{Val}$. We define then \emph{stores} (or \emph{stacks}) as mappings from $\mathit{PVar}$ to $\mathit{Val}$, either total or (finite) partial ones,\footnote{Brotherston and Kanovich \cite[\S~2.2]{BrotherstonK14:jacm} stick to the finite partial definition, but the total one is arguably more natural and common (see e.g. \cite{PierceSF,Winskel93,DemriD15:jancl}); this is one of differences between stores (stacks) and heap(let)s. Especially under the total perspective the name \emph{store}, used also by Demri and Deters \cite{DemriD15:jancl} seems more adequate. Nevertheless, both perspectives can be brought together: one can think of the constant function $\lambda x : \mathit{PVar}. 0$ as the \emph{default} or \emph{unintialized} stack (store) and restrict the attention to stacks almost everywhere equal to zero.} and  the \emph{store-and-heap}  (or \emph{stack-and-heap})  \emph{model} \cite[\S~2.2]{BrotherstonK14:jacm}) as consisting of pairs $(s, h)$ with $s$ a store and $h$ a heap. The set of units $E$ is defined then as the collection of all pairs $(s,\emptyset)$.  We say $(s, h) \cdot (s', h')$ is defined whenever $s = s'$ and $h \cdot h'$ (as introduced above) is defined. As we are going to see in \S~\ref{SeparationLogic}, having stores at our disposal we do not need anymore the above-mentioned restrictions on the structure of $RV$ such as the one that each $RV$ should contain a pointer to another $RV$.

\pjnt{Which additional conditions does the preceding sentence refer to?}\tlnt{is it clear now?}

\subsection{Ambient Logic, Trees and Semistructured Data} \label{sec:semi}

Pym et al. \cite[\S~4.2]{PymOHY04:tcs} illustrate how to obtain a PPM using  Cardelli and Gordon's \emph{ambient logic} \cite{CardelliG00:popl}. This influential formalism was developed  further in a number of references, some of them  focusing on reasoning about trees and semistructured data \cite{CardelliG04:mscs}.  Subsequent developments included \emph{context logic}, a formalism specifically intended for analyzing dynamic
updates to tree-like structures with pointers (such as XML
with identifiers and idrefs) \cite{CalcagnoGZ05:popl,CalcagnoDY10:ic}. As this example generalizes particularly nicely to the non-commutative setting, we discuss it in more detail. 

Consider a set of labels $Lab$. The set of \emph{labelled trees} (which might be more adequately called \emph{labelled forests} in the terminology of W3C specifications 
  \cite[\S~3.4]{CardelliG04:mscs}) is given by the following syntax:
\newcommand{\fore}{\cdot}
$$
S, T ::= 0 \mid a[S] \mid S \fore T,
$$
where $a \in Lab$. This is a standard way to represent semistructured data like XML documents. One identifies forests using the equivalence relation generated by associativity of $\fore$ and $0$ being a neutral element for $\fore$, which obviously yields a generalized PPM.  
 In this free construction of labelled trees, 
the operation $\fore$  is total and indeed it was intended to be total in several references \cite{CardelliG04:mscs,PymOHY04:tcs}, but as pointed out by e.g. Calcagno et al. \cite[\S~2]{CalcagnoGZ05:popl}, it is natural to restrict the attention to trees with uniquely identifying labels. Under such an assumption, $S \fore T$ is defined only if the labels occurring in $S \fore T$ are disjoint---and thus we have yet another example of a partially defined monoid. 

However, from our point of view it is even more interesting to note that while almost all references mentioned in this subsection insists on commutativity of $\fore$, it is hardly the most obvious  assumption. In fact, not only are XML documents defined as finite \textit{sibling-ordered} trees, but official specifications of languages standardized by W3C for the purpose of querying and navigating XML documents like XPath and XQuery allows explicit access to the sibling order (see, e.g., ten Cate et al. \cite{tCateLM10:jal,tCateFL10:jancl} for more information, 
 including a discussion of the relationship  of these formalisms to modal logics). And, needless to say, any representation of trees for storage or manipulation purposes would  involve ordering on nodes; in short, $\fore$ should be thought of as creating lists rather than multisets. While this issue is occasionally discussed in the 
 literature 
 \cite[\S\S~3.1 \& 3.4]{CardelliG04:mscs}, most references 
 tend to glide over this problem. Dropping the requirement of commutativity makes the complex algebra of such a PPM an instance of a GBI algebra which is not a BI algebra.

Finally, let us note that one obtains 
 a GBI algebra with a non-boolean Heyting reduct by replacing the discrete order on trees by $S \ppmsb T$ defined as ``$S$ is a \emph{generated subtree} (or, strictly speaking, a \emph{generated subforest}) of $T$''.

\subsection{Costs, Logic Programming and Petri Nets} \label{sec:petri}

Pym et al. \cite[\S~4]{PymOHY04:tcs} describe three other classes of  CS-motivated PPM's giving rise to natural BI complex algebras. In brief, they are  as follows:
\begin{itemize}
\item an adjustment of the Petri net semantics of linear logic described by Engberg and Winskel \cite{EngbergW97:apal}. An interesting feature of this example is that the PPM in question illustrates the benefits of allowing preorders instead of insisting on posets. Modelling of  Petri nets using separation algebras is discussed by Calcagno et al. \cite[\S~2]{Calcagno2007};
\item a logic programming model of Armel{\'{\i}}n  and Pym \cite{ArmelinP01:ijcar}  based on a commutative total PPM of \emph{hereditary Harrop bunches} and 
\item a \emph{money and cost} example, 
  tailored to highlight both similarities and differences with Girard's \emph{Marlboros and Camels} linear logic example.
\end{itemize}
 Brotherston and Calcagno \cite[\S~5]{BrotherstonC10:lmcs} provide some additional commutative models, focusing on \emph{involutive} boolean ones, i.e., those whose dual algebras belong to the variety denoted in \S~\ref{Subvarieties} as $\mathsf{InBBI}$ (Brotherston and Calcagno \cite{BrotherstonC10:lmcs} use the term \emph{classical}).


\section{Essentially Noncommutative Models} \label{ModelsGBI}

Finally, we present two more classes of examples, illustrating the advantages of dropping the assumption of commutativity even more starkly than \S~\ref{sec:semi}.

\subsection{Weakening Relations and Relation Algebras} \label{sec:weakening}
\newcommand{\inv}[1]{{\Ppm{#1}}^{\partial}}
\newcommand{\relc}{\,\circ\,}
\newcommand{\wrel}[1]{\mathit{wRel}\Ppm{#1}}

This example involves  generalized 
 PPM's. Consider a poset $\Ppm{X}=(X, \ppmsb)$. Say that $R$ is a \emph{weakening relation} on $\Ppm{X}$ iff ${\ppmsb} \relc R \relc {\ppmsb} = R$, where $\relc$ is the relation composition. The collection of all the weakening relations on $\Ppm{X}$ is written as $\wrel{X}$.

\begin{fact} \label{fact:wrel}
$\wrel{X}$ is closed under arbitrary unions $\bigcup$ and intersections $\bigcap$, with $\relc$ distributing over $\bigcup$ and $\ppmsb$ being the neutral element of $\relc$. 
\end{fact}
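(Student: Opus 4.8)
The plan is to derive everything from the fact that $\ppmsb$, being a partial order, is reflexive and transitive: writing $\Delta$ for the diagonal relation, this means $\Delta \subseteq {\ppmsb}$ and ${\ppmsb}\relc{\ppmsb} \subseteq {\ppmsb}$, which together give ${\ppmsb}\relc{\ppmsb} = {\ppmsb}$. First I would record the elementary observation that for an \emph{arbitrary} binary relation $R$ on $X$, reflexivity of $\ppmsb$ yields $R \subseteq {\ppmsb}\relc R\relc{\ppmsb}$, so that $R \in \wrel{X}$ iff the reverse inclusion ${\ppmsb}\relc R\relc{\ppmsb} \subseteq R$ holds; moreover for any weakening $R$ one has ${\ppmsb}\relc R = R = R\relc{\ppmsb}$, the inclusions $\supseteq$ coming from reflexivity and the inclusions $\subseteq$ from monotonicity of $\relc$ applied inside the equality ${\ppmsb}\relc R\relc{\ppmsb} = R$. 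In particular ${\ppmsb}$ is itself a weakening relation, and it acts as a two-sided identity for $\relc$ on $\wrel{X}$.

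Next I would invoke the standard fact that relation composition distributes over arbitrary unions in each argument: $S\relc(\bigcup_i T_i) = \bigcup_i (S\relc T_i)$ and $(\bigcup_i S_i)\relc T = \bigcup_i (S_i\relc T)$. Applying this twice gives ${\ppmsb}\relc(\bigcup_i R_i)\relc{\ppmsb} = \bigcup_i({\ppmsb}\relc R_i\relc{\ppmsb})$, so if every $R_i$ is a weakening relation then so is $\bigcup_i R_i$. To state the distribution of $\relc$ over $\bigcup$ as an identity among elements of $\wrel{X}$ I would first note that $\relc$ is a well-defined operation there, i.e.\ $R\relc S \in \wrel{X}$ whenever $R, S \in \wrel{X}$, which is immediate from ${\ppmsb}\relc(R\relc S)\relc{\ppmsb} = ({\ppmsb}\relc R)\relc(S\relc{\ppmsb}) = R\relc S$; the distributivity law then transfers verbatim.

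For closure under arbitrary intersections the point is that composition is only monotone, not additive, with respect to intersections, so I would argue one-sidedly: by monotonicity of $\relc$, ${\ppmsb}\relc(\bigcap_i R_i)\relc{\ppmsb} \subseteq \bigcap_i({\ppmsb}\relc R_i\relc{\ppmsb}) = \bigcap_i R_i$ when each $R_i$ is weakening, while the reverse inclusion is the always-true one noted above; hence $\bigcap_i R_i \in \wrel{X}$. There is no genuine obstacle here — everything reduces to reflexivity, transitivity and monotonicity of relation composition; the only spot needing a moment's care is precisely this failure of $\relc$ to distribute over $\bigcap$, which forces the intersection case to be settled via inclusions rather than via the identity used for unions.
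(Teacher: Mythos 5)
Your argument is correct and complete, but it is not the route the paper takes. You verify everything by hand from three primitives: reflexivity and transitivity of $\ppmsb$ (giving ${\ppmsb}\relc{\ppmsb}={\ppmsb}$ and the ``free'' inclusion $R\subseteq{\ppmsb}\relc R\relc{\ppmsb}$), complete additivity of $\relc$ over unions, and mere monotonicity over intersections --- and your one-sided treatment of the intersection case is exactly the right way to handle the failure of $\relc$ to distribute over $\bigcap$. The paper instead obtains Fact~\ref{fact:wrel} as a corollary of Fact~\ref{fact:ppm}: it exhibits $\wrel{X}$ as the complex algebra of the generalized PPM $M=\Ppm{X}\times\inv{X}$ with unit set $E=\{(x,x)\mid x\in X\}$ and partial operation $(x,y)\cdot(y',z)=(x,z)$ when $y=y'$; under this identification the weakening relations are precisely the upsets of $M$, the lifted product is relation composition, and the upset of $E$ is $\ppmsb$, so the closure, distributivity and unit laws (and indeed the whole GBI structure, residuals included) fall out of the general machinery for complex algebras of PPMs. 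What your approach buys is self-containedness and transparency --- every claimed identity is checked directly, with no need to verify that the PPM encoding matches relational composition (a step the paper leaves implicit as ``obvious''); what the paper's approach buys is uniformity, since one application of Fact~\ref{fact:ppm} delivers not just the stated closure properties but the full GBI-algebra structure on $\wrel{X}$, whereas from your lemmas that structure still has to be assembled (e.g.\ the residuals via closure of upward-closed families under unions).
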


Consequently,  $\wrel{X}$ carries the structure of a GBI algebra and is called the \emph{full weakening relation algebra} generated by $\Ppm{X}$.

It is possible to see $\wrel{X}$ as a complex algebra of a  (generalized!) PPM. Set $\inv{X} = (X, \ppmsp)$ and consider 
$M = \Ppm{X} \times \inv{X}$. The set of unit elements $E$ and the partial monoid operation $\cdot$ are defined then in an obvious way, i.e., $E = \{ (x,x) \mid x \in X \}$ and $(x, y) \cdot (y', z)=(x, z)$ whenever $y = y'$ and undefined otherwise. Fact \ref{fact:wrel} is then obtained as a corollary of Fact \ref{fact:ppm}.


What is particularly interesting about this example is that when we restrict attention to discrete $\ppmsb$, we obtain exactly what is known as \emph{full set relation algebras} \cite[Def. 5.3.2]{HenkinMT85:book2} or \emph{square relation algebras} \cite[Ch. 6.0.3]{Maddux2006:book}. 

\subsection{Language Models} \label{sec:language}

Consider an alphabet $\Sigma$; as usual, we write the set of words in $\Sigma$ as $\Sigma^*$. The notions of \emph{language} and \emph{regular language} are standard and so is the notion of composition of languages. It is well known  that the set of regular languages, just like of all languages, is closed under finite unions and intersections, residuals and boolean complementation (cf., e.g., \cite[\S~3.2]{Pratt91:lncs}). Therefore, both arbitrary languages and regular languages over a given $\Sigma$ form a nice example of a boolean GBI. In fact, we can see this as another instance of the PPM setting, but once again dropping the assumption of commutativity was crucial to achieve full generality.


\section{Subvarieties of GBI-algebras and InGBI-algebras} \label{Subvarieties}

 A subvariety of \GBI\ (the variety of all GBI-algebras) is any subclass that is closed under \alo{HSP}, or equivalently any subclass that is determined by a set of identities (including the equational axioms of \GBI). The collection of all subvarieties of \GBI\ is denoted by $\Lambda_\text{GBI}$ or simply $\Lambda$. Since subvarieties are determined by sets of identities, $\Lambda$ contains at most continuum many subvarieties. Jankov \cite{Jankov68} showed that this upper bound is reached by 
 subvarieties of Heyting algebras, hence the same is true for GBI. Subvarieties are ordered by inclusion, and $\Lambda$ is in fact an algebraic distributive lattice, with $\algvr U\wedge \algvr W=\algvr U\cap \algvr W$ and $\algvr U\vee \algvr W=\Vo(\algvr U\cup\algvr W)$. The least element is the trivial variety $\algcl O$ of one-element GBI-algebras, and the largest element is $\algcl{GBI}$. 

For an involutive GBI-algebra, we first need to expand the language with a new constant symbol $0$, which is used to term-define the \emph{linear negations} ${\sim} x=x\backslash 0$ and $-x=0/x$. Then we add the identities 
${\sim}{-}x=x=-{\sim}x$ to define the variety \algcl{InGBI}.

Some prominent subvarieties\footnote{The reader is encouraged to compare this list with Galatos et al. 
 \cite[\S~3.5]{GalatosJKO07}.} of \algcl{GBI} and \algcl{InGBI} are:

\begin{itemize}
\item The variety $\algcl{BI}$ defined relative to $\algcl{GBI}$ by $xy=yx$.
\item The variety $\algcl{GBI_w}$ of GBI-algebras that satisfy the structural rule of \emph{weakening}, defined by the identity $x\cdot y\le x$, or equivalently by $\top =1$.
\item The variety $\algcl{BGBI}$ of \emph{Boolean GBI-algebras}, defined by $\neg\neg x=x$. It is also known as $\algcl{RM}$, the variety of \emph{residuated Boolean monoids} \cite{Jipsen92}.
\item The variety \algcl{CyGBI} of cyclic involutive GBI algebras, defined relative to \algcl{InGBI} by ${\sim} x=-x$.
\item The variety \algcl{InBI} of involutive BI-algebras, defined relative to \algcl{InGBI} by $xy=yx$.
\item The variety $\algcl{wRRA}$ of \emph{weakening representable relation algebras}, generated by all full weakening relation algebras (cf. \S~\ref{sec:weakening}). 
\item The variety $\algcl{SeA}$ of \emph{sequential algebras}, defined relative to $\algcl{BGBI}$ by the Euclidean law 
\begin{center}
$(x\triangleright y)\cdot z\le x\triangleright(y\cdot z)$ where $x\triangleright y=\neg(x\backslash\neg y)$ \cite{JipsenMaddux97}.
\end{center}
\item The variety $\algcl{RA}$ of \emph{relation algebras}, defined by $x\triangleright y=(x\triangleright 1)y$ \cite{JonssonTsinakis93}. 
The term $x\triangleright 1$ is
the \emph{converse operation} in relation algebras, denoted by $x^\smallsmile$.
\item The variety $\algcl{RRA}$ of \emph{representable relation algebras}, generated by all full relation algebras  (cf. \S~\ref{sec:weakening}). 
\item The variety $\algcl{CRA}$ of \emph{commutative relation algebras}, defined relative to $\algcl{RA}$ by $xy=yx$.
\item The variety $\algcl{GRA}$ of group relation algebras, generated by all complex algebras of groups.
\item The variety $\algcl{SRA}$ of \emph{symmetric relation algebras}, defined relative to $\algcl{RA}$ by $x^\smallsmile =x$.
\item The variety $\algcl{BBI}$ of \emph{Boolean BI-algebras} (= $\algcl{CRM}$ in \cite{Jipsen92}).
\item $\algcl{LGBI}$ is generated by all \emph{linearly ordered GBI-algebras}, or equivalently defined by the identity
\begin{center}
 $(x\to y)\vee(y\to x)=\top$.
 \end{center}
\item The variety $\algcl{BLBI}$ of \emph{basic logic BI-algebras}, defined by $x\wedge y=(x/y)\cdot y$.
\item The variety $\algcl{HA}$ of \emph{Heyting algebras}, defined by $xy=x\wedge y$.
\item The variety $\algcl{GA}$ of \emph{G\"odel algebras}, defined by $(x\to y)\vee(y\to x)=\top$ and $xy=x\wedge y$.
\item The variety $\algcl{MVBI}$ of \emph{many-valued BI-algebras}, defined relative to $\algcl{BLBI}$ by $(x\bi\bot)\bi\bot=x$.
\end{itemize}

Figure~\ref{4elt} shows how these and some other varieties are related to each other. However, the picture is just a subposet of the infinite lattice of subvarieties of GBI and cannot be used to deduce joins and meets of varieties.

\setlength{\arraycolsep}{8pt}
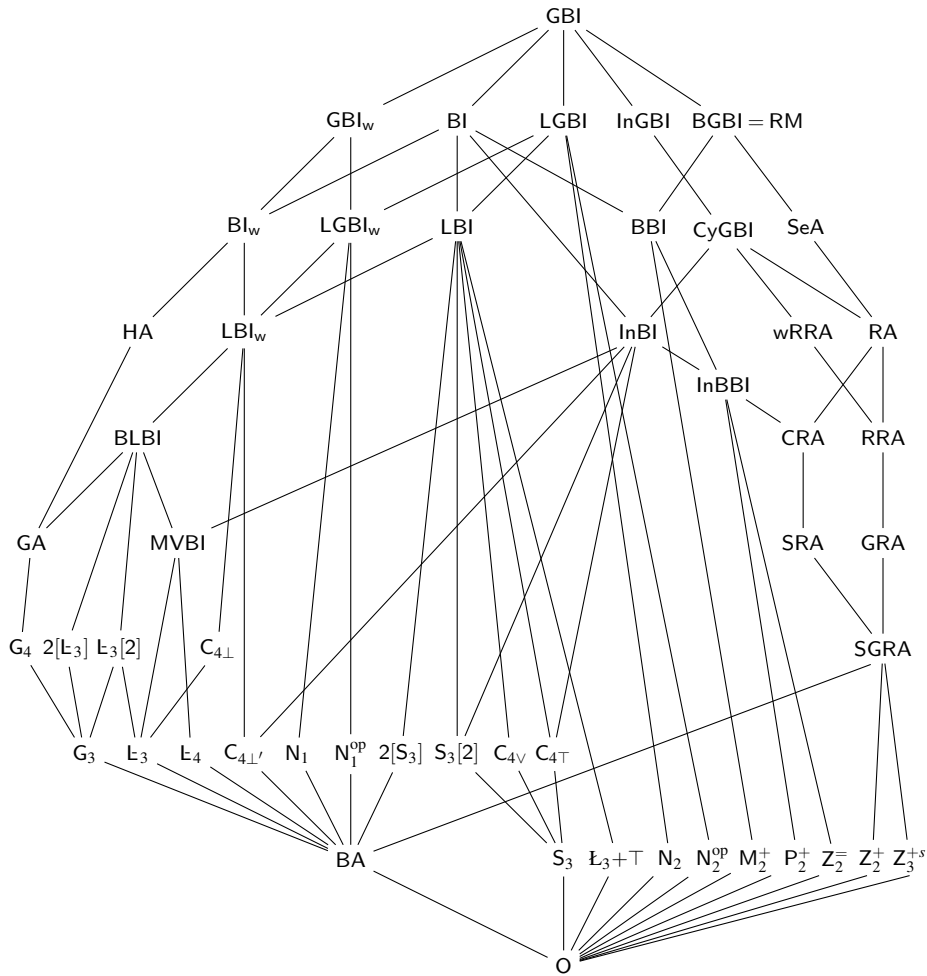
\begin{figure}
\begin{tikzpicture}[scale=0.7]
\node(T)at(2,0){$\algcl O$};
\node(BA)at(-2,2){$\algcl{BA}$};
\node(31)at(-6,4){$\Lv_3$};
\node(32)at(-7,4){$\algcl G_3$};
\node(33)at(2,2){$\algcl S_3$};
\node(1)at(-5,4){$\Lv_4$};
\node(5)at(-4,4){$\algcl C_{4\bot'}$};
\node(7)at(-3,4){$\algcl N_1$};
\node(8)at(-2,4){$\algcl N_1^\text{op}$};
\node(6)at(-4.5,6){$\algcl C_{4\bot}$};
\node(2)at(-6.35,6){$\Lv_3[\algcl 2]$};
\node(3)at(-7.35,6){$\algcl 2[\Lv_3]$};
\node(4)at(-8.2,6){$\algcl G_4$};
\node(13)at(-1.05,4){$\algcl 2[\algcl S_3]$};
\node(11)at(1,4){$\algcl C_{4\vee}$};
\node(10)at(0,4){$\algcl S_3[\algcl 2]$};
\node(9)at(1.8,4){$\algcl C_{4\top}$};
\node(12)at(3,2){$\algcl\L_3{+}\top$};
\node(14)at(4,2){$\algcl N_2$};
\node(15)at(4.8,2){$\algcl N_2^\text{op}$};
\node(M2)at(5.6,2){$\algcl M_2^+$};
\node(P2)at(6.4,2){$\algcl P_2^+$};
\node(Z21)at(7.1,1.96){$\algcl Z_2^{=}$};
\node(Z2)at(7.8,2){$\algcl Z_2^+$};
\node(Z3)at(8.5,2){$\algcl Z_3^{+s}$};
\node(MVBI)at(-5.25,8){$\mathsf{MVBI}$};
\node(GA)at(-8,8){$\mathsf{GA}$};
\node(BLBI)at(-6,10){$\mathsf{BLBI}$};
\node(GBIw)at(-2,16){$\mathsf{GBI_w}$};
\node(BIw)at(-4,14){$\mathsf{BI_w}$};
\node(LBIw)at(-4,12){$\mathsf{LBI_w}$};
\node(LGBIw)at(-2,14){$\mathsf{LGBI_w}$};
\node(LBI)at(0,14){$\mathsf{LBI}$};
\node(LGBI)at(2,16){$\mathsf{LGBI}$};
\node(GBI)at(2,18){$\mathsf{GBI}$};
\node(BBI)at(3.6,14){$\mathsf{BBI}$};
\node(BGBI)at(5,16){$\mathsf{\qquad \ BGBI=RM}$};
\node(SeA)at(6.5,14){$\mathsf{\ SeA}$};
\node(BI)at(0,16){$\mathsf{BI}$};
\node(HA)at(-6,12){$\mathsf{HA}$};
\node(RA)at(8,12){$\mathsf{RA}$};
\node(CRA)at(6.5,10){$\mathsf{CRA}$};
\node(SRA)at(6.5,8){$\mathsf{SRA}$};
\node(InGBI)at(3.5,16){$\mathsf{InGBI}$};
\node(CyGBI)at(5,13.9){$\mathsf{CyGBI}$};
\node(InBBI)at(5,11){$\mathsf{InBBI}$};
\node(InBI)at(3.4,12){$\mathsf{InBI}$};
\node(wRRA)at(6.5,12){$\mathsf{wRRA}$};
\node(RRA)at(8,10){$\mathsf{RRA}$};
\node(GRA)at(8,8){$\mathsf{GRA}$};
\node(SGRA)at(8,6){$\mathsf{SGRA}$};
\draw(T)--(BA)--(31);
\draw(BA)--(1)--(MVBI);
\draw(BA)--(5)--(LBIw);
\draw(BA)--(7)--(LGBIw);
\draw(BA)--(8)--(LGBIw);
\draw(T)--(14)--(LGBI);
\draw(T)--(15)--(LGBI);
\draw(BA)--(32);
\draw(BA)--(13);
\draw(BA)--(7.77,5.7);
\draw(T)--(33);
\draw(31)--(6)--(LBIw);
\draw(31)--(2)--(BLBI);
\draw(3)--(BLBI);
\draw(31)--(MVBI);
\draw(T)--(12)--(LBI);
\draw(32)--(2);
\draw(32)--(3);
\draw(32)--(4)--(GA);
\draw(13)--(LBI);
\draw(33)--(11)--(LBI);
\draw(33)--(10)--(LBI);
\draw(33)--(9)--(LBI);
\draw(T)--(M2)--(BBI)(InBBI)--(BBI)--(BI);
\draw(T)--(6.8,1.7)(Z21)--(InBBI);
\draw(T)--(7.7,1.7)(Z2)--(SGRA);
\draw(T)--(8.5,1.7)(Z3)--(SGRA);
\draw(T)--(5.9,1.7)(P2)--(InBBI);
\draw(SGRA)--(SRA)--(CRA)--(RA);
\draw(SGRA)--(GRA)--(RRA)--(RA);
\draw(RRA)--(wRRA)--(CyGBI)--(InGBI)--(GBI);
\draw(RA)--(CyGBI)--(InBI)--(InBBI)--(CRA);
\draw(RA)--(SeA)--(BGBI);
\draw(BI)--(InBI);
\draw(LGBIw)--(LGBI);
\draw(LBIw)--(LGBIw);
\draw(LGBIw)--(GBIw);
\draw(LBI)--(BI);
\draw(LBIw)--(BIw)--(GBIw)--(GBI);
\draw(GA)--(HA)--(BIw)--(BI);
\draw(GBI)--(BGBI)--(BBI);
\draw(LBIw)--(LBI)--(LGBI)--(GBI)--(BI);
\draw(MVBI)--(BLBI)--(LBIw);
\draw(MVBI)--(3,11.7)(InBI)--(5)(InBI)--(9)(InBI)--(10);
\draw(GA)--(BLBI);
\end{tikzpicture}

\caption{Some subvarieties of GBI ordered by inclusion. Algebras are given in Table~\ref{4eltalgs} and denote the variety they generate (in the corresponding font).}\label{4elt}
\end{figure}

Recall that an algebra is \emph{subdirectly irreducible} if its congruence lattice has a minimal nontrivial congruence, and that any algebra is a subalgebra of a product of its subdirectly irreducible homomorphic images (cf. \cite{Raftery:here}).
The subdirectly irreducible members of $\algvr V$ are denoted by $\algvr V_\text{SI}$,
hence $\algvr V=\alo{SP}(\algvr V_\text{SI})$.

Every variety $\algvr V$ is equal to $\Vo(\m A)$ for some algebra $\m A\in \algvr V$ since varieties contain countably generated free algebras. A variety is \emph{finitely generated} if it is of the form $\Vo(\m A)$ for some \emph{finite} algebra. In this case, if $\Vo(\m A)$ is congruence distributive (i.e., all members have distributive congruence lattices), then J\'onsson's Lemma implies $\Vo(\m A)_\text{SI}\subseteq \alo{HS}(\m A)$. Hence any subvariety of a finitely generated congruence distributive variety is again finitely generated and for a finite algebra $\m A$, $\Vo(\m A)$ has only finitely many subvarieties. Since we also have $\Vo(\{\m A,\m B\})=\Vo(\m A\times\m B)$, 
 finitely generated subvarieties of $\Lambda$ form a lattice ideal. In particular, the varieties of residuated lattices and GBI-algebras are congruence distributive, since they are varieties of lattice-ordered algebras and all lattices have distributive congruence lattices.

As a result we can investigate the bottom of the lattice $\Lambda$ by investigating finite subdirectly irreducible GBI-algebras. In any GBI-algebra one has $\bot\le x\backslash \bot$, hence $x\bot=\bot$. It follows that a GBI-algebra with $1=\bot$ must be trivial, hence it generates the variety $\algcl O$. 
The smallest nontrivial GBI-algebra is the 2-element Boolean algebra, with $1=\top$, $\cdot=*=\wedge$ and ${\bi} = {\to}$. Naturally this algebra generates the variety $\algcl{BA}$ of Boolean algebras.

A 3-element lattice must be linearly ordered, so we can assume $\m A=\{\bot<a<\top\}$. There are in fact 3 such algebras: The 3-element G\"odel algebra $\m G_3$ where $a*a=a$ and $1=\top$, the 3-element MV-algebra $\Lu_3$ where $a*a=\bot$ (hence $1=\top$), and the Sugihara algebra $\m S_3$ where $a*a=a=1$. The operations $\to, \bi$ are uniquely determined by the order and the monoid operation, and it is easy to check that these algebras are subdirectly irreducible.

An algebra is \emph{simple} if it has exactly two congruences, and it is \emph{strictly simple} if, in addition, it has no proper subalgebras. Using J\'onsson's Lemma it is easy to see that strictly simple algebras generate varieties that only contain $\algcl O$ as proper subvariety. Note that $\m{S}_3$ is strictly simple, $\Lu_3$ is simple but not strictly simple, and $\m{G}_3$ is subdirectly irreducible but not simple. Both $\Lu_3$ and $\m{G}_3$ have a subalgebra isomorphic to the 2-element Boolean algebra, hence they generate varieties with two proper subvarieties.

There are several methods for constructing 
 and combining residuated lattices. 
 We consider two that also apply to GBI-algebras. These constructions are used in Table~\ref{4eltalgs} to provide convenient names for some of the algebras.

\emph{Generalized ordinal sum}: This construction, denoted as $\m A[\m B]$ and described in detail by Galatos et al. \cite{GalatosR04:sl,Galatos05:au} \cite[\S~9.6.1]{GalatosJKO07}, is applicable with certain restrictions ($\m A$ must be \emph{admissible} by $\m B$).  
If $\m A$ satisfies $1=\top$ then this is the usual ordinal sum of two bounded lattice-ordered algebras.

\emph{Adding a new top}: Let $\m A$ be a GBI-algebra with $1=\top$. The algebra $\m A+\bar\top$ is defined by $\m A+\bar\top=A\cup\{\bar\top\}$ where $\bar\top$ is strictly greater than $\top$. The fusion operation is extended by $\bar\top\cdot a=a=a\cdot\bar\top$ for all $a\in A-\{1\}$. Hence $\bar\top$ is almost an identity element, except that $\bar\top1=\bar\top$. It is easy to check that the operation $\cdot$ is associative, and the residuals are definable in terms of $\cdot$ and $\wedge$. Therefore the algebra $\m A+\bar\top$ is also a GBI-algebra.

 There are exactly 20 nonisomorphic GBI-algebras with 4 elements (see Table~\ref{4eltalgs} and Figure~\ref{4elt}). The number of nonisomorphic join-preserving monoid operations on a finite distributive lattice increase rapidly: 
\begin{center}
\setlength{\tabcolsep}{10pt}
\begin{tabular}{|c|c|c|c|c|c|c|c|}\hline
$n=$&2&3&  4&    5&    6&      7&8\\\hline
GBI &1 &3&20&115&899&7782&80468\\
BI    &1 &3&16&  70&399&2261&14358\\\hline
\end{tabular}
\end{center}

\begin{table}
\setlength{\arraycolsep}{1pt}
$\begin{array}{c|cc}
\Lu_3&a&1\\\hline
a&\bot&a\\
1&a&1\\
\end{array}$
\quad
$\begin{array}{c|ccc}
\Lu_4&a&b&1\\\hline
a&\bot&\bot&a\\
b&\bot&a&b\\
1&a&b&1\\
\end{array}$
\quad
$\begin{array}{c|ccc}
\Lu_3[2]&a&b&1\\\hline
a&\bot&a&a\\
b&a&b&b\\
1&a&b&1\\
\end{array}$
\quad
$\begin{array}{c|ccc}
2[\Lu_3]&a&b&1\\\hline
a&a&a&a\\
b&a&a&b\\
1&a&b&1\\
\end{array}$
\quad
$\begin{array}{c|ccc}
\m{C}_{4\bot}&a&b&1\\\hline
a&\bot&\bot&a\\
b&\bot&\bot&b\\
1&a&b&1\\
\end{array}$
\quad
$\begin{array}{c|ccc}
\m{C}_{4\bot'}&a&b&1\\\hline
a&\bot&\bot&a\\
b&\bot&b&b\\
1&a&b&1\\
\end{array}$
\linebreak

\quad

$\begin{array}{c|cc}
\m{G}_3&a&1\\\hline
a&a&a\\
1&a&1\\
\end{array}$
\quad
$\begin{array}{c|ccc}
\m{G}_4&a&b&1\\\hline
a&a&a&a\\
b&a&b&b\\
1&a&b&1\\
\end{array}$
\quad \ 
$\begin{array}{c|ccc}
\m{N}_1&a&b&1\\\hline
a&\bot&\bot&a\\
b&a&b&b\\
1&a&b&1\\
\end{array}$
\quad
$\begin{array}{c|ccc}
\m{N}_1^\text{op}&a&b&1\\\hline
a&\bot&a&a\\
b&\bot&b&b\\
1&a&b&1\\
\end{array}$
\quad
$\begin{array}{c|ccc}
\m{C}_{4\vee}&1&a&\top\\\hline
1&1&a&\top\\
a&a&a&\top\\
\top&\top&\top&\top\\
\end{array}$
\quad
$\begin{array}{c|ccc}
\m{C}_{4\top}&1&a&\top\\\hline
1&1&a&\top\\
a&a&\top&\top\\
\top&\top&\top&\top\\
\end{array}$
\linebreak

\quad

$\begin{array}{c|cc}
\m{S}_3&1&\top\\\hline
1&1&\top\\
\top&\top&\top\\
\end{array}$
\quad
$\begin{array}{c|ccc}
\m{S}_3[2]&a&1&\top\\\hline
a&a&a&\top\\
1&a&1&\top\\
\top&\top&\top&\top\\
\end{array}$
\quad
$\begin{array}{c|ccc}
2[\m{S}_3]&a&1&\top\\\hline
a&a&a&a\\
1&a&1&\top\\
\top&a&\top&\top\\
\end{array}$
\quad
$\begin{array}{c|ccc}
\m{N}_2&a&1&\top\\\hline
a&a&a&a\\
1&a&1&\top\\
\top&\top&\top&\top\\
\end{array}$
\quad
$\begin{array}{c|ccc}
\m{N}_2^\text{op}&a&1&\top\\\hline
a&a&a&\top\\
1&a&1&\top\\
\top&a&\top&\top\\
\end{array}$
\quad
$\begin{array}{c|ccc}
\Lu_3{+}\top&a&1&\top\\\hline
a&\bot&a&a\\
1&a&1&\top\\
\top&a&\top&\top\\
\end{array}$
\linebreak

\quad

\indent \qquad \qquad \qquad$\begin{array}{c|ccc}
\m P_2^+&1&0&\top\\\hline
1&1&0&\top\\
0&0&\bot&\top\\
\top&\top&\top&\top\\
\end{array}$
\quad \  \ 
$\begin{array}{c|ccc}
\mathbb Z_2^+&1&0&\top\\\hline
1&1&0&\top\\
0&0&1&\top\\
\top&\top&\top&\top\\
\end{array}$
\quad \  \ 
$\begin{array}{c|ccc}
\mathbb Z_3^{+s}&1&0&\top\\\hline
1&1&0&\top\\
0&0&\top&\top\\
\top&\top&\top&\top\\
\end{array}$
\quad \  \ 
$\begin{array}{c|ccc}
\m M_2^+&1&0&\top\\\hline
1&1&0&\top\\
0&0&0&\top\\
\top&\top&\top&\top\\
\end{array}$
\quad \ 
$\begin{array}{c|ccc}
2\times 2&1&0&\top\\\hline
1&1&\bot&1\\
0&\bot&0&0\\
\top&1&0&\top\\
\end{array}$
\caption{Multiplicative operation tables for all 3- and 4-element GBI-algebras ($\bot$ not listed since $\bot x=\bot=x\bot$). The order is linear, except for the last row where it is Boolean.\label{4eltalgs}}
\end{table}

\noindent
Any finite distributive residuated lattice is the reduct of a GBI-algebra, but this observation does not extend to the infinite setting. 
  For example, pick a bounded distributive lattice 
 that is not the reduct of a Heyting algebra, and that has an atom $1$. On such a lattice one can define a fusion operation by 
$$x\cdot y=\begin{cases}1&\text{ if }x\ne\bot\ne y\\
\bot &\text{ otherwise}
\end{cases}$$ and check that it is a monoid operation that is residuated.




\section{Semantics via Duality} \label{RelSem}

Heyting algebras and (G)BI-algebras provide algebraic semantics for intuitionistic logic and (noncommutative) bunched implication logic, respectively (cf. \S~\ref{sec:logic} and especially Remark \ref{rem:alg}). However the algebras that are of interest can be rather large, or they may have quite complicated order structure. Since they have distributive lattice reducts, it is useful to consider smaller or more concrete combinatorial structures from which the lattice order can be recovered. Considering the categories \algcl{HA} and \algcl{GBI}, with homomorphisms as morphisms, one would like to have equivalent or dually equivalent categories. For \algcl{HA} there is a well-developed duality theory based on Esakia spaces, and we briefly recall the relevant details here. Adding a suitable ternary Kripke relation extends this duality to GBI-algebras as well as to involutive GBI-algebras. Finally we consider a relational semantics based on residuated frames, since this is closely related to the proof theory that we present in \S~\ref{ProofTheory}.

Before presenting the topological dualities, we first consider Birkhoff's duality for finite distributive lattices, and its extension to complete and perfect distributive lattices. Note that for an element $a$ in a lattice, $\bigvee\{x\mid x<a\}$ always exists, and is either $a$ or a dual cover of $a$. In the latter case, $a$ is said to be \emph{completely join-irreducible},
and its dual cover is denoted by $a_*$.
The set of completely join-irreducible elements of a lattice $\m L$ is denoted by $J(\m L)$. Dually, a \emph{completely meet-irreducible} element $b$ satisfies $b\prec\bigwedge\{x\mid b<x\}=b^*$, and the set of all such elements is denoted $M(\m L)$.

A lattice is \emph{complete} if all joins and meets exist. Even for a complete lattice, $J(\m L)$ and/or $M(\m L)$ may be empty, as happens for the unit interval of real numbers.

A lattice is \emph{join-perfect} if every element is the join of completely join-irreducible elements, it is \emph{meet-perfect} if very element is the meet of completely meet-irreducible elements, and it is \emph{perfect} if both conditions hold. For example, a Boolean algebra is join-perfect if it is atomic (= every element is a join of atoms, defined as minimal non-zero elements), and every finite lattice is perfect. For a Boolean algebra, being 
join-perfect is equivalent to being meet-perfect since complementation is a dual isomorphism. However, even for complete distributive lattices this is not the case, as can be seen from the join-perfect distributive lattice $\mathbb N\times\mathbb N$ completed with a top element, since it has no completely meet-irreducible elements.
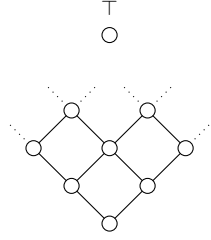
\begin{figure}[t!]
\begin{center}
\begin{tikzpicture}[scale=.5, every node/.style={circle, draw, fill=white, inner sep=2pt}]
\draw(0,5)node[label={$\top$}]{};
\draw[dotted](-2,2)--(-2.7,2.7);
\draw(0,0)--(-1,1)--(-2,2);
\draw[dotted](-1,3)--(-1.7,3.7);
\draw(1,1)--(0,2)--(-1,3);
\draw[dotted](1,3)--(.3,3.7);
\draw(2,2)--(1,3);
\draw[dotted](2,2)--(2.7,2.7);
\draw(0,0)node{}--(1,1)node{}--(2,2)node{};
\draw[dotted](1,3)--(1.7,3.7);
\draw(-1,1)node{}--(0,2)node{}--(1,3)node{};
\draw[dotted](-1,3)--(-.3,3.7);
\draw(-2,2)node{}--(-1,3)node{};

\end{tikzpicture}
\end{center}

\caption{A complete join-perfect distributive lattice that is not meet-perfect.
The dual lattice is a Heyting algebra that is not join-perfect.\label{perfect}}
\end{figure}

\begin{lemma}
Any join-perfect Heyting algebra is also meet-perfect, hence a perfect lattice.
\end{lemma}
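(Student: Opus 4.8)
The plan is to reduce the claim to a \emph{separation property} and then witness that property by a Heyting term built from the given completely join-irreducible elements. So let $\m L$ be a join-perfect Heyting algebra. It suffices to prove: for all $a,b\in\m L$ with $b\not\le a$ there is $m\in M(\m L)$ with $a\le m$ and $b\not\le m$. Granting this, fix $a\in\m L$ and let $c$ be any lower bound of $T_a:=\{m\in M(\m L):a\le m\}$; if $c\not\le a$, the separation property applied to $a$ and $c$ produces an $m\in T_a$ with $c\not\le m$, contradicting that $c$ is a lower bound of $T_a$. Hence $c\le a$, so $a=\bigwedge T_a$ (and in particular this meet exists), i.e.\ $\m L$ is meet-perfect, and therefore perfect.

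For the separation property, first use join-perfectness: $b$ is a join of completely join-irreducible elements, and they cannot all lie below $a$ (else $b\le a$), so pick $p\in J(\m L)$ with $p\le b$ and $p\not\le a$. Let $p_*$ be the dual cover of $p$ (it exists because $p$ is completely join-irreducible) and set $m:=p\to p_*$; this is the correspondence $p\mapsto p\to p_*$ between join- and meet-irreducibles familiar from the theory of Heyting algebras. The key verifications are: (i) from the Heyting identity $p\wedge(p\to p_*)\le p_*$ we get $p\wedge m\le p_*<p$, so $p\not\le m$; since $p\le b$ this yields $b\not\le m$, and since $p\le\top$ it yields $m\ne\top$. (ii) Because $p_*$ is the greatest element strictly below $p$, for every $x\in\m L$ we have $x\le m\iff p\wedge x\le p_*\iff p\wedge x<p\iff p\not\le x$; in particular $p\not\le a$ gives $a\le m$. (iii) By the contrapositive of (ii), every $x$ with $m<x$ satisfies $p\le x$, so $p\vee m$ is the \emph{least} element strictly above $m$; hence $m\prec p\vee m=\bigwedge\{x:m<x\}$, which is precisely the condition for $m\in M(\m L)$. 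Thus $m$ is the required completely meet-irreducible element.

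The step that needs the most care is (iii): one must check both that $p\vee m$ covers $m$ (nothing strictly between them) \emph{and} that $p\vee m$ is the infimum of the whole set $\{x:m<x\}$, so that $m$ is \emph{completely} meet-irreducible in the sense used in the excerpt; both follow from the equivalence in (ii), but they deserve to be stated explicitly. It is also worth noting that no completeness hypothesis is invoked: $p_*$ exists by the definition of complete join-irreducibility, $m=p\to p_*$ exists in every Heyting algebra, and each supremum or infimum appealed to along the way (the join witnessing $b$, $\bigwedge T_a$, $\bigwedge\{x:m<x\}$) is produced by the argument itself rather than assumed to exist.
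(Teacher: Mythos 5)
Your proof is correct and follows essentially the same route as the paper's: given $b\nleq a$, you pick a completely join-irreducible $p\le b$ with $p\nleq a$ and use the element $p\to p_*$ (the paper's $d=c\to c_*$) as the separating completely meet-irreducible, with the same covering argument that everything strictly above it lies above $p$. Packaging this as an explicit separation property rather than the paper's argument by contradiction with a lower bound of $S$ is only a cosmetic reorganization.
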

\begin{proof} Suppose $a$ is an element in a join-perfect Heyting algebra $\m A$, and let $S=\{x\in M(\m A)\mid a\le x\}$. If $a\ne\bigwedge S$ then there exists a lower bound $b$ of $S$ such that $b\nleq a$. Since $b$ is a join of completely join-irreducibles, there exists $c\in J(\m A)$ such that $c\le b$ and $c\nleq a$. Let $d=c\to c_*$, where $c_*$ is the unique dual cover of $c$. We show that $d\in S$ then it follows that $d\ge b$, which is a contradiction.

From $c\nleq a$ we get $c\wedge a\le c_*$, hence $a\le c\to c_*=d$. To see that $d\in M(\m A)$, note that if $y>d$ then $y\nleq c\to c_*$, therefore $c\wedge y\nleq c_*$ and hence $c\le y$. It follows that $c\le\bigwedge\{y\mid d<y\}$, and since $c\nleq d$ we conclude that $d\prec d^*=\bigwedge\{y\mid d<y\}$.
\hfill\qed\end{proof}
The converse of this lemma does not hold, since for example the dual of $\mathbb N\times\mathbb N$ with a bottom added is a (complete) meet-perfect Heyting algebra that is not join-perfect.

For complete Heyting algebras, the notion of being perfect is equivalent to being a doubly algebraic lattice (i.e., a complete lattice in which every element is the join of compact elements and the meet of co-compact elements).

\subsection{Semantics and Duality for Heyting Algebras}
Tarski proved that complete and atomic Boolean algebras are isomorphic to powerset algebras, and that complete homomorphisms between Boolean algebras are induced by functions (in the opposite direction) between their sets of atoms. In a nutshell this is the categorical duality between $\algcl{caBA}$, the category of complete and atomic Boolean algebras, 
and $\algcl{Set}$, the category of sets.

Birkhoff observed that every finite distributive lattice $\m L$ is isomorphic to the set of downward closed subsets of $(J(\m L),\le)$, with intersection and union as lattice operations. Alternatively one can take as the starting point  the set $F_\text p(\m L)$ of prime filters: 
  in the finite case, $(F_\text p(\m L),\subseteq)\cong (J(\m L),\ge)$, hence $\m L$ is isomorphic to the upward closed subsets of $F_\text p(\m L)$. 

Actually, it is well known (and easy to see) that for any poset $\m P = (P,\le)$, the collection $\ups{\m P}$ of upward closed sets is a complete distributive lattice under $\bigcap$ and $\bigcup$. 
 Since 
  $\bigcup$ distributes over $\bigcap$,  $\ups{\m P}$ is in fact a complete Heyting algebra, with $U\to V=P-{\downarrow}(U-V)$. The completely join-irreducible elements of $\ups{\m P}$ are the principal upsets, so $(\ups{\m P},\bigcap,\bigcup)$ is join-perfect (hence perfect) and $\m P\cong J(\ups{\m P})$ via the map $p\mapsto {\uparrow}p$. Likewise, for a complete perfect distributive lattice $\m L$, $L\cong \ups{J(\m L)}$. This is a brief outline of the categorical duality between complete and perfect distributive lattices with complete homomorphisms, and posets with order-preserving maps as morphisms. However, as noted earlier $\ups{\m P}$ is a complete Heyting algebra, and to get a categorical duality, one has to modify the notion of morphism in the semantic category of posets. Rather than allowing all order-preserving maps as morphisms, we restrict to so-called \emph{p-morphisms}, which are maps such that ${\uparrow}f(p) = f[{\uparrow}p]$
for all $p\in P$. (Here $f[X]=\{f(x):x\in X\}$ denotes the forward image of $X$ under $f$. Logicians may note that ${\uparrow}f(p)\supseteq f[{\uparrow}p]$ is the ``forth" condition, i.e. order preservation, and ${\uparrow}f(p)\subseteq f[{\uparrow}p]$ is the ``back" condition of modal p-morphisms.)

With this assumption it follows that $f^{-1}[U\to V]=f^{-1}[U]\to f^{-1}[V]$ for all $U,V\in \ups{\m P}$. Hence a p-morphism $f:\m P\to \m Q$ gives rise to a complete Heyting algebra homomorphism $f^{-1}:\ups{\m Q}\to \ups{\m P}$. Conversely, a complete Heyting algebra homomorphism $h:\m A\to \m B$ corresponds to a p-morphism $J(h):J(\m B)\to J(\m A)$ given by $J(h)(b)=\bigwedge h^{-1}[{\uparrow}b]$.

In summary, the functors $J:\algcl{cpHA}\leftrightarrow\algcl{pPos}:\upstx$ give the duality between the category of complete perfect Heyting algebras and the category of posets with p-morphisms.

Recall that a \emph{Priestley space} $(P,\le,\tau)$ is a poset  $(P,\le)$ with a compact topology $\tau$ on $P$ that has a base of clopen sets and for all $p\nleq q$ in $P$ there exists a clopen upset $U$ such that $p\in U$ and $q\notin U$. By the well-known Priestley duality \cite{DP:book} \tlnt{what is [DP]? Davey and Priestley?} the category \algcl{BDL} of bounded distributive lattices with homomorphisms is
dually equivalent to the category \algcl{Pri} of Priestley spaces with order-preserving continuous maps.

The functor $F_\text p:\algcl{BDL}\to\algcl{Pri}$ maps a bounded distributive lattice $\m L$ to the Priestley space $(F_\text p(\m L),\subseteq,\tau)$. As before, $F_\text p(\m L)$ is the set of all prime filters of $\m L$, ordered by inclusion, and the topology $\tau$ is generated by the subbasis $\{\mathcal F_a:a\in L\}\cup\{F_\text p(\m L)-\mathcal F_a:a\in L\}$ where $\mathcal F_a=\{F\in F_\text p(\m L):a\in F\}$. For a homomorphisms $h:L\to M$, the map $F_\text p(h):F_\text p(M)\to F_\text p(\m L)$ given by $F_\text p(h)(G)=h^{-1}[G]$ is a continuous and order-preserving.

The functor $U_\text c:\algcl{Pri}\to\algcl{BDL}$ maps a Priestley space $\m P = (P,\le,\tau)$ to the set $U_\text c(\m P)$ of clopen upsets, which is a bounded distributive lattice under intersection and union. For a continuous order-preserving map $f:(P,\le,\tau)\to (Q,\le,\tau')$, the map $U_\text c(f):U_\text c(\m Q)\to U_\text c(\m P)$ given by $U_\text c(f)(V)=f^{-1}[V]$ is a bounded distributive lattice homomorphism.

An \emph{Esakia space} is a Priestley space $(P,\le,\tau)$ such that if $U\subseteq P$ is clopen then ${\downarrow}U$ is also clopen. The restriction of Priestley duality to Heyting algebras is given by the following result.

\begin{theorem}\ 
\begin{itemize}
\item A bounded distributive lattice $\m L$ is a Heyting algebra if and only if the corresponding Priestley space $F_\text p(\m L)$ is an Esakia space. 
\item For Heyting algebras $\m A,\m B$ a bounded distributive lattice homomorphism $h:A \to B$ preserves the Heyting implication if and only if $F_\text p(h)$ is a p-morphism.
\end{itemize}
\end{theorem}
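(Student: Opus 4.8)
The plan is to reduce the theorem to the Priestley duality already in place, using the natural isomorphism $\m L\cong U_\text c(F_\text p(\m L))$. For the first bullet it then suffices to prove two things. First, if $(P,\le,\tau)$ is an Esakia space then the lattice $U_\text c(P)$ of clopen upsets is closed under the Heyting implication $U\to V=P-{\downarrow}(U-V)$ that $U_\text c(P)$ inherits from $\ups P$: indeed $U-V$ is clopen, so ${\downarrow}(U-V)$ is clopen by the Esakia condition, hence $U\to V$ is clopen, and it is an upset as the complement of a downset; since the residuation equivalence $W\cap U\subseteq V\iff W\subseteq U\to V$ already holds in $\ups P$ it restricts to $U_\text c(P)$, which is thus a Heyting algebra. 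Second, if $\m L$ is a Heyting algebra then $F_\text p(\m L)$ is an Esakia space; here I would first note that every clopen subset of a Priestley space is a finite union of ``rectangles'' $\mathcal F_a-\mathcal F_b$ (intersect the subbasis and collapse with $\mathcal F_a\cap\mathcal F_{a'}=\mathcal F_{a\wedge a'}$ and $\mathcal F_b\cup\mathcal F_{b'}=\mathcal F_{b\vee b'}$), and that ${\downarrow}$ distributes over finite unions, so it is enough to show each ${\downarrow}(\mathcal F_a-\mathcal F_b)$ is clopen.

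The technical heart of the argument is the identity
$$F_\text p(\m L)-{\downarrow}(\mathcal F_a-\mathcal F_b)=\mathcal F_{a\to b},$$
which settles the previous point and moreover identifies the Heyting implication of $\m L$ with the one computed pointwise in $\ups{F_\text p(\m L)}$. The inclusion $\supseteq$ is immediate: if $a\to b\in F$ and $G\supseteq F$ is a prime filter with $a\in G$, then $a\to b\in G$, so $b\in G$ because $a\wedge(a\to b)\le b$; hence no prime filter above $F$ lies in $\mathcal F_a-\mathcal F_b$. For $\subseteq$, assume $a\to b\notin F$; then $b$ is not in the filter of $\m L$ generated by $F\cup\{a\}$, for otherwise $f\wedge a\le b$ for some $f\in F$ and so $f\le a\to b$ would put $a\to b$ in $F$; the prime filter separation theorem then yields a prime filter $G\supseteq F\cup\{a\}$ with $b\notin G$, witnessing $F\in{\downarrow}(\mathcal F_a-\mathcal F_b)$.

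For the second bullet, transporting along $\m A\cong U_\text c(F_\text p(\m A))$ and the analogous isomorphism for $\m B$ turns ``$h$ preserves $\to$'' into ``$f^{-1}:=U_\text c(F_\text p(h))$ commutes with the pointwise Heyting implications of the two upset algebras'', where $f=F_\text p(h)$; this translation uses the identity just proved. The direction ($\Leftarrow$) is then exactly the order-theoretic fact recalled in the excerpt, that a p-morphism $f$ satisfies $f^{-1}[U\to V]=f^{-1}[U]\to f^{-1}[V]$ for all upsets $U,V$, restricted to clopen upsets. For ($\Rightarrow$), the ``forth'' half of the p-morphism condition, ${\uparrow}f(G)\supseteq f[{\uparrow}G]$, is automatic from order-preservation of $f$. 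The ``back'' half requires, given a prime filter $F\supseteq h^{-1}[G]$ of $\m A$, producing a prime filter $G'\supseteq G$ of $\m B$ with $h^{-1}[G']=F$; I would separate the filter of $\m B$ generated by $G\cup h[F]$ from the ideal of $\m B$ generated by $h[A-F]$, noting $A-F$ is a prime ideal. Were these to meet, there would be $g\in G$, $a\in F$, $c\notin F$ with $g\wedge h(a)\le h(c)$; preservation of $\to$ gives $g\le h(a)\to h(c)=h(a\to c)$, hence $a\to c\in h^{-1}[G]\subseteq F$, and then $a\wedge(a\to c)\le c$ with $a,a\to c\in F$ forces $c\in F$, a contradiction. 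The prime filter separation theorem then supplies the desired $G'$, and $h^{-1}[G']=F$ follows from $h[F]\subseteq G'$ and $G'\cap h[A-F]=\emptyset$.

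I expect the main obstacle to be the prime-filter bookkeeping: both the displayed identity and the ``back'' condition in the second bullet rest on careful applications of the prime filter separation lemma for distributive lattices to suitably generated filters and ideals, and it is easy to garble the case analysis. No genuinely new idea beyond that classical lemma is needed; the remaining steps---reducing an arbitrary clopen to a finite union of rectangles, and distributivity of ${\downarrow}$ over unions---are routine and should be dispatched quickly.
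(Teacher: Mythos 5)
Your proof is correct, but there is nothing in the paper to compare it against: the theorem is stated as a recollection of classical Priestley/Esakia duality, and the paper explicitly declines to prove it, deferring to the standard literature (``Since the details of Priestley and Esakia duality are well known \dots we verify only the properties related to $\cdot$, $\backslash$, $/$, $1$, $\circ$ and $E$''). Measured against the standard argument, your route is exactly the canonical one: the reduction of clopens to finite unions of rectangles $\mathcal F_a-\mathcal F_b$ via compactness and primality, the key identity $F_\text p(\m L)-{\downarrow}(\mathcal F_a-\mathcal F_b)=\mathcal F_{a\to b}$ (which simultaneously gives the Esakia condition and the preservation of $\to$ under $a\mapsto\mathcal F_a$), and the prime-filter separation argument for the ``back'' condition, where you correctly separate the filter generated by $G\cup h[F]$ from the ideal generated by $h[A-F]$ and use $g\le h(a)\to h(c)=h(a\to c)$ to get the contradiction. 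The only steps you leave implicit---collapsing finite meets of $h[F]$ and finite joins of $h[A-F]$ using that $h$ is a lattice homomorphism, $F$ a filter and $A-F$ a (prime) ideal, and the degenerate rectangles $\mathcal F_\top - \mathcal F_b$, $\mathcal F_a-\mathcal F_\bot$---are genuinely routine. Your argument is also consonant with the prime-filter machinery the paper \emph{does} spell out for the GBI-space duality (its Claims~1--3 and the computation of $\mathcal F_{a\backslash b}$ play the multiplicative role that your displayed identity plays for $\to$), so nothing in your approach conflicts with how the paper proceeds.
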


\subsection{Semantics and Duality for GBI-algebras} \label{DualityGBI}


To extend this duality to GBI-algebras, 
  we need a ternary relation $\circ: P^2\to\mathcal P(P)$ and a unary relation $E\subseteq P$ satisfying certain conditions to an Esakia space $(P,\le,\tau)$. The definitions and results of this subsection are based on  Urquhart \cite{Urquhart1996} and 
   Galatos \cite{Galatos00}; cf. also more recent references motivated by CS applications of BI  \cite{DochertyP17:ilgl,DochertyP17:stone,BrotherstonV14}, where similar  dualities or representation theorems have been developed independently.  \tlnt{What is [Galatos2000]?} We extend $\circ$ to a binary
operation on $\mathcal P(P)$ by
$$
X\circ Y=\bigcup_{x\in X,y\in Y}x\circ y,\quad x\circ Y=\{x\}\circ Y,\quad X\circ y=X\circ \{y\}
$$
and let 
$$X\cdot Y={\uparrow}(X\circ Y), \ X\backslash Y=\{z:X\circ z\subseteq Y\}, \ X/Y=\{z:z\circ Y\subseteq X\}.$$
Now define 
$\m P=(P,\le,\circ,E,\tau)$ to be a \emph{GBI-space} if $(P,\le,\tau)$ is an Esakia space
and the following properties hold:
\begin{esakia}
\item $(x\circ y)\circ z=x\circ(y\circ z)$,
\item $E$ is a clopen upset, and $E\circ U=U=U\circ E$ for any clopen upset $U$,
\item $x'\le x, y'\le y, z\le z'$, and $z\in x\circ y\implies z'\in x'\circ y'$,
\item for all clopen upsets $U,V$ of $P$, $U\circ V$, $U\backslash V$ and $U/V$ are clopen and
\item $z\notin x\circ y$ implies there exist clopen upsets $U,V$ such that $x\in U,y\in V$ and $z\notin U\circ V$.
\end{esakia}
For GBI-spaces $\m P,\m Q$ a \emph{GBI-p-morphism} $f:\m P\to\m Q$ is a continuous Heyting algebra p-morphism $f:(P,\le,\tau)\to (Q,\le,\tau')$ that satisfies\footnote{The need for items 8. and 9. was pointed out by Docherty and Pym, see \cite{DochertyP17:ilgl,DochertyP17:stone}. 
Here we use item 8. as first introduced by Urquhart \cite{Urquhart1996} for relevant implication $x\to y=y/x$, and item 9. is the corresponding version for $\backslash$. }
\begin{esakia}
\item $z\in x\circ_\m Py \implies f(z)\in f(x)\circ_\m Qf(y)$,
\item $f(z)\in u\circ_\m Q v\implies \exists\,x,y\in P\ (u\le f(x)$, $v\le f(y)$ and $z\in x\circ_\m Py)$,
\item $w\in f(x)\circ_\m Q v \implies \exists\, y,z\in P\ (v\le f(y)$, $f(z)\le w$ and $z\in x\circ_\m Py)$,
\item $w\in u\circ_\m Q f(y)\implies \exists\, x,z\in P\ (u\le f(x)$, $f(z)\le w$ and $z\in x\circ_\m Py)$, and
\item $E_\m P=f^{-1}[E_\m Q]$.
\end{esakia}
The category of GBI-spaces with GBI-p-morphisms is denoted by \algcl{GBIp}.
Finally we define the functor $F_\text p:\algcl{GBI}\to\algcl{GBIp}$
by $$F_\text p(\m A)=(F_\text p(A),\subseteq,\circ,\mathcal F_1,\tau)$$ where
\begin{itemize}
\item $\circ:F_\text p(A)^2\to \mathcal P(F_\text p(A))$ is given by $F\circ G=\bigcap_{a\in F, b\in G}\mathcal F_{ab}$ and 
\item $(F_\text p(A),\subseteq,\tau)$ is the Esakia
space of the Heyting algebra reduct of $\m A$. 
\end{itemize}

Conversely, the functor $U_\text c:\algcl{GBIp}\to\algcl{GBI}$ is defined by $$U_\text c(\m P)=(U_\text c(P),\cap,\cup,\to,\emptyset,P,\cdot,\backslash,/,E).$$ On morphisms, these two functors act the same way as for Heyting algebras and Esakia spaces.

\begin{theorem}\ 
\begin{upbroman}
\item For any GBI-algebra $\m A$, $F_\textup p(\m A)$ is a GBI-space.
\item For any GBI-space $\m P$, $U_\textup c(\m P)$ is a GBI-algebra.
\item $U_\textup c(F_\textup p(\m A))\cong \m A$ and $F_\textup p(U_\textup c(\m P))\cong \m P$.
\item On morphisms, the functors $U_\textup cF_\textup p$ and $F_\textup pU_\textup c$ are naturally isomorphic to the respective identity functors, hence the category of GBI-algebras is dually equivalent to the category of GBI-spaces.
\end{upbroman}
\end{theorem}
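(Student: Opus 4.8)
The plan is to layer the monoid-theoretic part on top of the Esakia duality for the Heyting reducts recalled above. The backbone is the representation map $a\mapsto\mathcal F_a$: by Priestley duality it is a bounded-lattice isomorphism of $\m A$ onto the clopen upsets of $F_\text p(\m A)$, and by Esakia duality it preserves $\to$; the extra content is that it also preserves $\cdot,\backslash,/$. So I would first establish, in $U_\text c(F_\text p(\m A))$ (with $\circ,\backslash,/$ the complex operations of the space),
\[ \mathcal F_a\circ\mathcal F_b=\mathcal F_{ab},\qquad \mathcal F_a\backslash\mathcal F_b=\mathcal F_{a\backslash b},\qquad \mathcal F_b/\mathcal F_a=\mathcal F_{b/a}. \]
Granting these, the rest of (i) is light: item~5 is immediate (if $H\notin F\circ G$, pick $a\in F$, $b\in G$ with $ab\notin H$ and take $U=\mathcal F_a$, $V=\mathcal F_b$), item~4 becomes clopenness of some $\mathcal F_c$, item~2 becomes $\mathcal F_1\circ\mathcal F_a=\mathcal F_{1a}=\mathcal F_a$ with $\mathcal F_1=E$ the designated unit, and item~3 is the routine monotonicity of $\circ$; item~1 is dealt with below.

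The heart of the matter is the prime-filter pair extension lemma behind the ``$\supseteq$'' halves of the first identity: \emph{if $H$ is a prime filter with $ab\in H$, then there are prime filters $F\ni a$ and $G\ni b$ with $xy\in H$ for all $x\in F$, $y\in G$.} I would prove it by applying Zorn's lemma to pairs $(F,G)$ of filters with $a\in F$, $b\in G$ and $F\cdot G\subseteq H$ (the pair $({\uparrow}a,{\uparrow}b)$ is admissible since $ab\in H$), and then showing any maximal pair consists of prime filters: if $u\vee v\in F$ with $u,v\notin F$, maximality, the distributive law $f\wedge(u\vee v)=(f\wedge u)\vee(f\wedge v)$, and primeness of $H$ combine to a contradiction, just as in the analogous lemma for distributive residuated lattices. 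The residual identities go instead through the prime filter theorem applied to ${\uparrow}a$ and the ideal $\bigcup_{h\in H}{\downarrow}(b/h)$ — this is an ideal because $b/h_1\vee b/h_2\le b/(h_1\wedge h_2)$, and it avoids $a$ exactly when $a\backslash b\notin H$ — and symmetrically for $/$. Associativity of $\circ$ on points (item~1) is the most delicate consequence, obtained from these lemmas together with the filter identity $(X\cdot Y)\cdot Z=X\cdot(Y\cdot Z)$. I expect this lemma, and on the morphism side the matching of the back-conditions 8 and 9 with preservation of the residuals, to be the main obstacles — the footnote above already flags that items 8 and 9 are easy to overlook.

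For (ii), the clopen upsets of a GBI-space are closed under $\cdot,\backslash,/$ by item~4 and contain $E$ by item~2, and the residuated-monoid laws hold by the same complex-algebra computation as in Fact~\ref{fact:ppm} using items 1--3; together with their Heyting structure this makes $U_\text c(\m P)$ a GBI-algebra. For (iii), $a\mapsto\mathcal F_a$ is a GBI-isomorphism $\m A\to U_\text c(F_\text p(\m A))$ — bijectivity and the Heyting part by Priestley/Esakia duality, the $\cdot,\backslash,/$ part by the displayed identities, the unit by definition; and the underlying map of the Esakia isomorphism $\m P\cong F_\text p(U_\text c(\m P))$ is checked to transport $\circ_\m P$ and $E_\m P$ to the canonically defined operations, where items 4 and 5 are precisely what is needed to recover $x\circ y$ from the clopen upsets through $x$ and $y$.

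Finally, for (iv) the functors act on morphisms just as for Heyting algebras and Esakia spaces, so one checks that this is well defined and natural on the enriched categories: a Heyting homomorphism $h$ also preserves $\cdot,\backslash,/$ if and only if $F_\text p(h)$ satisfies items 6--10 (the forth-condition~6 corresponding to one inequality for $\cdot$, the back-conditions 7--9 to the converse inequalities via the residuals, and 10 to the unit), and dually for $U_\text c(f)$. Since the object components in (iii) are the canonical maps, the naturality squares commute for the same formal reasons as in Esakia duality, and the functors restrict to a dual equivalence between $\algcl{GBI}$ and $\algcl{GBIp}$.
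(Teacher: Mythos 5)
Your proposal follows essentially the same route as the paper's proof: both layer the monoid structure on Priestley/Esakia duality via $a\mapsto\mathcal F_a$, reduce parts (i)--(iii) to the identities $\mathcal F_a\circ\mathcal F_b=\mathcal F_{ab}$, $\mathcal F_a\backslash\mathcal F_b=\mathcal F_{a\backslash b}$, $\mathcal F_a/\mathcal F_b=\mathcal F_{a/b}$, rest on the same prime-filter pair-extension claim (the paper's Claim~3, which you prove by Zorn and the paper states as an easy auxiliary fact), and treat (ii)--(iv) and the morphism conditions 6--10 in the same way (the paper deferring the morphism details to Urquhart). Your only real deviation is deriving the residual inclusions from the prime filter theorem applied against the ideal $\bigcup_{h}{\downarrow}(b/h)$ rather than the paper's direct argument through the filter ${\uparrow}(FG)$, a harmless local variant; just make sure the pair-extension lemma is stated for arbitrary (not only principal) filters, since that generality is exactly what the associativity step (property~1) needs.
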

\begin{proof}
Since the details of Priestley and Esakia duality are well known \cite{DP:book, DG}, \tlnt{is [DP] Davey and Priestley?} we verify only the properties related to $\cdot$, $\backslash$, $/$, $1$, $\circ$ and $E$. We first state some auxiliary facts that are easy to check. For filters $F,G$ of $\m A$ we let $FG=\{ab:a\in F,b\in G\}$.

\noindent Claim 1. ${\uparrow}(FG)$ is a filter of $\m A$. Pf: For any $a,b\in{\uparrow}(FG)$ there exist $a',b'\in F,a'',b''\in G$ such that $a'a''\le a$ and $b'b''\le b$. Hence $a'\wedge b'\in F, a''\wedge b''\in G$ and $a\wedge b\ge  a'a''\wedge b'b''\ge(a'\wedge b')(a''\wedge b'')\in FG$.

\noindent Claim 2. For $F,G,H\in F_\text p(A)$, $H\in F\circ G$ if and only if $FG\subseteq H$. Pf: Follows from the definition $F\circ G=\bigcap_{a\in F,b\in G} \mathcal F_{ab}$.

\noindent Claim 3. For filters $F,G$ of $\m A$ and $FG\subseteq H\in F_\text p(A)$, there exist $F',G'\in F_\text p(A)$ such that $F'G,FG',F'G'\subseteq H$. 

\begin{upbroman}
\item We prove properties 1.-5. of GBI-spaces. For 1. let $F,G,H,K\in F_\text p(A)$ and assume $K\in (F\circ G)\circ H$. Then there exists $M\in F\circ G$ such that $K\in M\circ H$. By Fact 2, $FG\subseteq M$ and $MH\subseteq K$. Consider the filter $N={\uparrow}(GH)$ and note that $FN\subseteq K$ since $\cdot$ is associative. By Fact 3, there exists $N'\in F_\text p(A)$ such that $N\subseteq N'$ and $FN'\subseteq K$. Since we also have $GH\subseteq N'$ it follows that $K\in F\circ N'$ and $N'\in G\circ H$, whence $K\in F\circ (G\circ H)$. The converse is similar.

Recall that $\mathcal F_1=\{F\in F_\text p(A):1\in F\}$, which is a subbasic clopen of $\tau$ and is also an upset in the inclusion order on prime filters. Let $\mathcal U$ be a clopen upset of prime filters. By Priestley duality $\mathcal U=\mathcal F_a$ for some $a\in A$, hence $\mathcal F_1\circ \mathcal U=\mathcal F_1\circ \mathcal F_a=\mathcal F_{1a}=\mathcal F_a=\mathcal U$. The other equality of 2. is similar.

For 3. let $F'\subseteq F$, $G'\subseteq G$, $H\subseteq H'$ and $H\in F\circ G$. Then $FG\subseteq H\subseteq H'$ so $F'G'\subseteq FG\subseteq H'$ which implies $H'\in F'\circ G'$.

Let $\mathcal U,\mathcal V$ be clopen upsets of prime filters. By Priestley duality there exist $a,b\in A$ such that $\mathcal U=\mathcal F_a$ and $\mathcal V=\mathcal F_b$. 
We prove that $\mathcal F_a\circ \mathcal F_b=\mathcal F_{ab}$, $\mathcal F_a\backslash\mathcal F_b=\mathcal F_{a\backslash b}$ and $\mathcal F_a/\mathcal F_b=\mathcal F_{a/b}$, then 4. holds. Let $H\in \mathcal F_a\circ \mathcal F_b$, so $H\in F\circ G$ where $a\in F$ and $b\in G$ for prime filters $F,G$. Then $ab\in FG\subseteq H$ by Claim 2, whence $H\in \mathcal F_{ab}$. Conversely, if $H\in \mathcal F_{ab}$ then ${\uparrow}a{\uparrow}b\subseteq H$. By Claim 3 there exist prime filters $F,G$ such that ${\uparrow}a\subseteq F$, ${\uparrow}b\subseteq G$ and $FG\subseteq H$, so $H\in F\circ G\subseteq \mathcal F_a\circ \mathcal F_b$.

Next, note that $\mathcal F_a\circ \mathcal F_{a\backslash b}=\mathcal F_{a(a\backslash b)}\subseteq \mathcal F_b$ since $a(a\backslash b)\in F$ implies $b\in F$. Therefore $\mathcal F_{a\backslash b}\subseteq \mathcal F_a\backslash\mathcal F_b$. For the opposite inclusion, let $G\in \mathcal F_a\backslash\mathcal F_b$ whence $\mathcal F_a\circ G\subseteq \mathcal F_b$. We aim to show that $G\in\mathcal F_{a\backslash b}$, i.e., $a\backslash b\in G$. 
If $a=\bot$ this holds trivially, so assume $a\ne\bot$ and let $F\in\mathcal F_a$.
Then for any $H\in F\circ G$ we have $H\in\mathcal F_b$, or equivalently,
$FG\subseteq H$ implies $b\in H$. By Claim 1, ${\uparrow}(FG)$ is a filter, 
so $b\in{\uparrow}(FG)$ (otherwise we could extend this to a prime filter $H$ that does not contain $b$). We conclude that $ax\le b$ for some $x\in G$, whence $x\le a\backslash b\in G$. The argument for $a/b$ is similar.

Finally, for 5. let $F,G,H$ be prime filters such that $H\notin F\circ G$. Then ${\uparrow}(FG)\not\subseteq H$, so there exist $a\in F,b\in G$ such that $ab\le c\notin H$. Then $F\in\mathcal F_a,G\in\mathcal F_b$ and $\mathcal F_a\circ\mathcal F_b=\mathcal F_{ab}$, hence $H\notin\mathcal F_a\circ\mathcal F_b$.

\item For a GBI-space $\m P$, we take for granted that $(U_\text c(P),\cap,\cup,\to,\top,\bot)$ is a Heyting algebra. By Property 1 the ternary relation $\circ$ lifts to an associative operation on $\mathcal P(P)$. By Property 2 the set $E$ is a unit element for this lifted operation when restricted to clopen upsets. For $U,V\in U_\text c(P)$ Property 3 implies that $U\circ V$ is also an upset, and by Property 4 it will be clopen. Hence $(U_\text c(P),\circ,E)$ is a monoid. 

By definition $x\in U/V\iff x\circ V\subseteq U$, and for $x\le w$ we have $w\circ V\subseteq x\circ V$ since $z\in w\circ V$ implies $z\in w\circ y$ for some $y\in V$, so $z\in x\circ y$ by Property 3. Therefore $w\circ V\subseteq U$, or equivalently $w\in U/V$, which shows that $U/V$ is an upset. By Property 4 it is also clopen, hence $U/V\in U_\text c(P)$. The argument for $U\backslash V$ is similar. 

Now suppose $W\subseteq U/V$. This is equivalent to $w\circ V\subseteq U$ for all $w\in W$ which in turn is equivalent to $W\circ V\subseteq U$. Hence $U_\text c(\m P)$ is a GBI-algebra.

\item By Priestley duality, the map $a\mapsto \mathcal F_a$ is a bounded distributive lattice isomorphism from $\m A$ to $U_\text c(F_\text p(\m A))$, and since $\to$ is uniquely determined by the lattice, it is also a Heyting algebra isomorphism. Furthermore, by the proof of Property 4 in (i) this map preserves $\circ$, $\backslash$, and $/$. For a GBI-space $\m P$ consider the map $f:P\to F_\text p(U_\text c(\m P)$ defined by $f(x)=\{U\in U_\text c(P):x\in U\}$. This is an isomorphism of the Esakia spaces and it suffices to check that $z\in x\circ y$ if and only if $f(z)\in f(x)\circ f(y)$.

\item It remains to check that the functor $F_\text p$ sends a GBI homomorphisms 
to a GBI-p-morphism, and the functor $U_\text c$ does the reverse. For
details of this argument we refer the reader to Urquhart \cite{Urquhart1996} (Thm 3.5).
The categorical duality then follows by restricting the duality for Heyting algebras and Esakia spaces to GBI-homomorphisms and GBI-p-morphisms.
\hfill\qed
\end{upbroman}

\end{proof}

\section{Decidability Issues} \label{Decidability}

\subsection{Positive Decidability Results} \label{sec:posdec}

Let us begin with a result that can be derived from the proof-theoretic framework described in Section \ref{ProofTheory} and developed by the first author in collaboration with Nick Galatos; see the discussion therein (also for related references like \cite{Kozak09}). Let \algcl{nGBI} be the variety of nonassociative GBI-algebras, i.e., defined by the axioms of \GBI\ but without
the axiom of associativity of fusion; see, e.g., Galatos and Ono \cite{GalatosO10:apal} or Galatos et al. \cite[Ch. 2.3.1]{GalatosJKO07} for more information on non-associative substructural logics (cf. also Rem. \ref{rem:nonassoc} and \cite{DochertyP17:ilgl}).

\begin{theorem}
The equational theory of \algcl{nGBI} and \BI\ is decidable.
\end{theorem}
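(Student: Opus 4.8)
The plan is to derive decidability from the cut-free sequent calculi of \S~\ref{ProofTheory}. By algebraizability (Remark~\ref{rem:alg}), an inequality $s\le t$ holds in \algcl{nGBI} (resp.\ in \BI) iff the sequent $s\Rightarrow t$ is derivable in the corresponding sequent system, and the cut-elimination theorem established there via the residuated frames of Galatos and Jipsen shows that every derivable sequent has a cut-free derivation; such derivations enjoy the subformula property. It therefore suffices, in each case, to show that cut-free backward proof search starting from $s\Rightarrow t$ explores only finitely many sequents, up to a suitable equivalence on structures, since then derivability --- hence the equational theory --- is decidable by exhaustive search of a finite tree.

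For \algcl{nGBI} the structures occurring in a sequent are finite trees built from subformulas of $s\Rightarrow t$ using the \emph{non-associative} multiplicative comma and the additive semicolon. Since fusion is not associative and there is no multiplicative contraction, the multiplicative part of a structure can be neither reassociated nor duplicated, so the number of comma-nodes never grows under backward rule application, and the rules for the additive lattice connectives likewise do not enlarge structures. The only connective whose left rule reintroduces its own principal formula is the Heyting implication, i.e.\ the purely intuitionistic layer, and this is handled by the standard device of passing to a depth-bounded, contraction-free reformulation in the style of Dyckhoff (equivalently, by loop-checking). Hence proof search terminates and the equational theory of \algcl{nGBI} is decidable; alternatively, \algcl{nGBI} can be routed through a finite model property argument in the spirit of Kozak \cite{Kozak09} for distributive \FL.

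For \BI, fusion is now associative and commutative, so the multiplicative part of a structure may be treated as a \emph{multiset} of subformulas of $s\Rightarrow t$; as there is still no multiplicative contraction, its size remains bounded by the number of subformulas. The additive semicolon carries weakening and contraction, and termination of proof search for this intuitionistic, bunch-structured layer is exactly the point established for propositional \BI by Pym \cite{Pym02:book} and Galmiche et al.\ \cite{GalmicheMP05:mscs}: a contraction-elimination argument in the style of Dyckhoff, lifted from formulas to bunches, together with the subformula property, bounds the search space. Hence the equational theory of \BI is decidable as well. (The associative non-commutative case \algcl{GBI} is deliberately excluded; \S~\ref{UndecEq} discusses the undecidability phenomena that arise once both associativity and non-commutativity are present.)

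I expect the \BI case to be the main obstacle. The \algcl{nGBI} case is comparatively soft, since non-associativity already bounds the multiplicative structures and the intuitionistic layer is disposed of by off-the-shelf contraction-elimination. For \BI one must control the \emph{simultaneous} presence of associative--commutative multiplicative structure, additive weakening and contraction acting on bunches, and the Heyting implication; verifying that a depth-bounded, loop-checking reformulation of the bunched calculus remains complete in this richer setting is the one genuinely non-routine ingredient, everything else being bookkeeping layered on top of the cut-elimination result of \S~\ref{ProofTheory}.
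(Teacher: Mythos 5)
Your overall route is the one the paper itself takes: the theorem is presented as a consequence of the proof-theoretic framework of \S~\ref{ProofTheory}, and the paper's own justification is essentially a pointer to the literature --- Galmiche et al.\ \cite{GalmicheMP05:mscs} for \BI\ (``tailor-made techniques'') and Galatos--Jipsen \cite{GalatosJ} (independently Docherty--Pym \cite{DochertyP17:ilgl}) for \algcl{nGBI}. So deferring the \BI\ case to \cite{GalmicheMP05:mscs} is exactly what the paper does.

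There is, however, a concrete gap in your termination argument. In the calculus of Table~\ref{GBIseq} no left rule reintroduces its principal formula (not even $[\to_l]$); the sole source of non-termination is the explicit contraction rule $[\wedge_{idem}]$, which applies to an \emph{arbitrary} term occurring in a bunch and therefore, read backwards, duplicates that term --- including any multiplicative (comma) structure it contains. Hence your claims that in \algcl{nGBI} ``the number of comma-nodes never grows'' and that in \BI\ the multiplicative multiset ``remains bounded by the number of subformulas'' are both false as stated: additive contraction at bunch level can copy multiplicative material without limit, and taming this is precisely what the paper identifies as the difficulty (``the rule $[\wedge_{idem}]$ could lead to an infinite branch during proof search''). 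The known fix --- restricting to $3$-reduced sequents together with a carefully chosen length measure --- is delicate: the analogous measure for associative \GBI\ in \cite{GalatosJ} was found to be flawed (see the paper's footnote on Ramanayake), which is why the theorem is stated only for \algcl{nGBI} and \BI. Dyckhoff-style contraction elimination is a formula-level device for intuitionistic logic and does not transfer off-the-shelf to bunch-level contraction interacting with $*$; making it work is the substance of \cite{GalmicheMP05:mscs}, not routine bookkeeping. Finally, your fallback via Kozak \cite{Kozak09} covers only the distributive residuated-lattice reduct, without the Heyting implication, so it does not by itself yield decidability of \algcl{nGBI}.
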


Decidability of \BI\ is proved by Galmiche et al. \cite{GalmicheMP05:mscs}. The proof in that paper uses specific, tailor-made techniques. We are not aware of any reference prior to Galatos and Jipsen \cite{GalatosJ} claiming decidability for \algcl{nGBI} or \GBI.\footnote{\label{ft:ramaerror} The
proof of decidability of \GBI\ in Galatos and Jipsen \cite{GalatosJ} appears to have
an issue with the complexity measure, as pointed out by R. Ramanayake. Decidability of \algcl{nGBI} has also been proved by Docherty and Pym \cite{DochertyP17:ilgl}, independently of the same result in Galatos and Jipsen \cite{GalatosJ}.} 

It seems that at the moment, there is no systematic investigation of complexity of these equational theories; some upper bounds are claimed by Ramanayake \cite{Ramanayake16}. Another problem which seems presently open is the question of decidability of non-boolean involutive varieties like $\algcl{InGBI}$ and $\algcl{CyGBI}$.



Obviously, there are subareas of $\Lambda$ 
 which allow nice decidability results for whole classes of varieties, in fact going beyond purely equational theory. Apart from finitely generated varieties like \BA\ and other ones in the bottom area of  Figure \ref{4elt}, we have numerous decidability results for subvarieties of \HA. 
  In fact, many of them enjoy rather low complexity, from PSPACE for \HA\ itself to NP for \GA\ and its subvarieties. These results are well described in standard monographs  
    \cite{ChagrovZ97:ml}.

However,  in other parts of $\Lambda$ positive decidability results are much less common. Moreover, things look even worse when one goes beyond purely equational theory. Let us discuss the two issues separately. 

\subsection{Subvarieties with Undecidable Equational Theory} \label{UndecEq}

As it turns out, undecidability results abound for \algcl{BGBI}\ (also known as \algcl{RM}) and its extensions, in particular \BBI. Powerful  general undecidability results for such varieties were established by algebraic  logicians in the 1990's and summarized in references like   Andr\'eka et al. \cite{AndrekaKNSS96:csli} or earlier Kurucz et al. \cite{KuruczNSS95:jolli}.  Here is a result most relevant for our purposes:

\begin{theorem}
\label{th:HungarianUndec}
A  variety $\algvr{V} \subseteq \algcl{BGBI}$   
 is undecidable whenever 
\begin{itemize}
\item there exists an infinite algebra $\m A \in \algvr V$ containing a $(\cdot,\backslash,/)$-subreduct whose universe is an antichain in $\m A$ or
\item every finite group is semigroup-embeddable into some $\m A \in \algvr V$ or
\item  for any $n \in \nat$ there exists a product $\m G$ of $n$ nontrivial finite groups and a semigroup-embedding $f: G \to A$ into (the semigroup-reduct of) some $\m A \in \algvr{V}$ s.t. $f[G]$ is an antichain in $\m A$.
\end{itemize} 
\end{theorem}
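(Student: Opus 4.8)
The plan is to reduce each of the three cases to a known undecidability result for the equational (or quasi-equational) theory of an associated class of algebras, exploiting the fact that in $\algcl{BGBI}$ the Boolean negation lets us define, for any subset $X$ of an algebra that happens to be an antichain, an "atom-like" family of orthogonal elements, and that the monoid reduct acts on this family in a controlled way. Concretely, in a Boolean GBI-algebra the elements of a $(\cdot,\backslash,/)$-subreduct forming an antichain behave, together with $\bot$ and the joins of finite subsets, like the atoms of a representable relation algebra or like a semigroup with zero adjoined; this is exactly the situation covered by the 1990s undecidability machinery of Kurucz et al.\ \cite{KuruczNSS95:jolli} and Andr\'eka et al.\ \cite{AndrekaKNSS96:csli}. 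So the first step I would carry out is to recall precisely which "seed" results from those references I intend to invoke: (a) undecidability of the equational theory of any variety of Boolean monoids (residuated or not) containing arbitrarily large antichain-subsemigroups, via an interpretation of the word problem for semigroups / a tiling or Turing-machine encoding into equations; and (b) the group-embedding variants, where one encodes the uniform word problem for finite groups or a suitable undecidable combinatorial problem using the fact that products of finite groups give a "rich enough" family of partial-bijection-like elements.

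The second step is the uniform skeleton common to all three cases. Given $\algvr V\subseteq\algcl{BGBI}$ satisfying one of the hypotheses, I would show that $\algvr V$ cannot have a decidable equational theory by arguing contrapositively: a decision procedure for $\algvr V$-equations would, through the interpretation fixed in step one, decide an undecidable problem. The interpretation sends a "word" or "instance" over the encoding alphabet to a GBI-term built from variables using only $\cdot$, $1$, $\wedge$, $\vee$, $\bot$ and $\neg$; the antichain / group-embedding hypothesis guarantees that $\algvr V$ contains an algebra in which this term behaves as intended (so a "no"-instance really is refuted in $\algvr V$), while soundness of the encoding in every $\algcl{BGBI}$-algebra (so a "yes"-instance is valid in all of $\algvr V$) is a routine equational check using the residuation laws and $\neg\neg x=x$. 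The three bullet points differ only in \emph{which} ambient algebra the hypothesis supplies — an infinite algebra with an antichain subreduct in the first, the finite groups themselves in the second, antichain-embedded products of finitely many finite groups in the third — and correspondingly in \emph{which} undecidable source problem is used (a single fixed undecidable semigroup word problem, resp.\ the uniform word problem for finite groups, resp.\ a problem whose instances of "size $n$" need only $n$ group factors).

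The third step is bookkeeping: verifying that the encoding terms are genuinely in the GBI signature and that "antichain" is exactly the condition needed to make disjointness of the encoded elements expressible by the equation $a\wedge b=\bot$ (this is where Booleanness is essential and where the passage from $\algcl{GBI}$ to $\algcl{BGBI}$ in the hypothesis is used), and checking the residual identities $x(x\backslash y)\le y$, $xy\le z\iff y\le x\backslash z$, etc., that make the "yes"-direction sound in every member of $\algvr V$. I would present cases two and three as consequences of case one together with the observation that a finite group, viewed inside a Boolean residuated monoid as the semigroup of its elements (its group table giving the $\cdot$), has its universe automatically an antichain precisely when the embedding is required to land in an antichain — which is what those bullets stipulate — and that products of groups inherit this.

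\emph{Main obstacle.} The delicate point is not the soundness direction (pure equational manipulation) but pinning down, for the first bullet in its full generality, a \emph{single} undecidable problem together with an interpretation that works uniformly for \emph{every} $\algvr V$ merely known to contain \emph{some} infinite algebra with an antichain $(\cdot,\backslash,/)$-subreduct — one has essentially no control over what that subreduct's multiplication looks like, so the encoding must be robust enough to use only the bare semigroup structure on an arbitrary infinite antichain. Getting the hypotheses of the borrowed Kurucz–Andr\'eka-style theorems to line up exactly with "antichain subreduct of an infinite $\algcl{BGBI}$-algebra" — rather than, say, "antichain of atoms" or "atom structure of a relation algebra" — is the real work; I expect it to go through by citing the most general form of those results (which are already stated for abstract semigroup reducts, not just atom structures), but the citation and the matching-up of conventions is where care is needed.
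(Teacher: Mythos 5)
Your overall strategy coincides with the paper's: its proof is a one-line citation of Kurucz et al.\ \cite{KuruczNSS95:jolli} (Th.~8) and Andr\'eka et al.\ \cite{AndrekaKNSS96:csli} (Th.~3.6), whose sufficient conditions for undecidability are, essentially verbatim, the three bullets of the statement (formulated for Boolean algebras with an associative, residuated operator), so no new interpretation or encoding needs to be constructed --- your closing remark that one should ``cite the most general form of those results'' is exactly what the paper does, and your ``main obstacle'' (matching conventions for abstract antichain subreducts rather than atom structures) is already resolved in those sources, which state their hypotheses for arbitrary subreducts. The one step of your plan that would not survive scrutiny is the proposed derivation of the second and third bullets from the first: the second bullet imposes no antichain condition at all and supplies no infinite algebra with an antichain $(\cdot,\backslash,/)$-subreduct, so it cannot be routed through case one; in the cited theorems the group-embedding conditions are independent sufficient hypotheses handled by their own encodings. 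Provided you invoke the theorems in the form covering all three conditions directly rather than trying to unify them, the argument is complete and agrees with the paper's.
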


\begin{proof}
Follows from Kurucz et al. \cite[Th. 8]{KuruczNSS95:jolli}, \cite[Th. 3.6]{AndrekaKNSS96:csli}.
\hfill\qed\end{proof}

\begin{corollary} \label{cor:HungarianUndec}
Equational theories of \algcl{BGBI}=\algcl{RM}, \algcl{BBI}=\algcl{CRM}, \algcl{IRM}, \algcl{RA}, \algcl{ICRM}, \algcl{IRA}, \algcl{CRA}, \algcl{RRA}, \algcl{GRA}, \algcl{SRA}\ and \algcl{SGRA} are undecidable. 
\end{corollary}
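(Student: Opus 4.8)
\noindent\emph{Proof proposal.}
The plan is to derive the whole corollary uniformly from Theorem~\ref{th:HungarianUndec}. The first point I would record is that each of the three sufficient conditions in that theorem is \emph{purely existential}: it only asserts that the variety $\algvr V$ contains certain algebras (an infinite one with an antichain $(\cdot,\backslash,/)$-subreduct; or, for every finite group, a member into which it semigroup-embeds; or, for every $n\in\nat$, a member containing an antichain semigroup-copy of a product of $n$ nontrivial finite groups). Hence each of these conditions is \emph{monotone along variety inclusion}: if $\algvr V\subseteq\algvr W\subseteq\algcl{BGBI}$ and $\algvr V$ satisfies one of them, then the witnessing algebras still belong to $\algvr W$, so $\algvr W$ satisfies it as well and is undecidable by Theorem~\ref{th:HungarianUndec}. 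It therefore suffices, for each variety on the list, to point to \emph{one} subvariety of it --- still contained in $\algcl{BGBI}$ --- for which one of the three conditions can be checked by hand.

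Next I would verify the third condition for $\algcl{SGRA}$, the smallest variety on the list, which is generated by the complex algebras of exponent-$2$ groups. Given $n\in\nat$, let $\m G=(\mathbb Z_2)^n$; this is a product of $n$ nontrivial finite groups, and its complex (powerset) algebra $\Cmx{\m G}$ is a group relation algebra in which $x^{\smallsmile}=x$ holds identically, since every element of $G$ is self-inverse. Thus $\Cmx{\m G}\in\algcl{SGRA}$. The map $g\mapsto\{g\}$ is a monoid (hence semigroup) embedding of $\m G$ into $\Cmx{\m G}$, and its image $\{\,\{g\}:g\in G\,\}$ is an antichain, as distinct singletons are $\subseteq$-incomparable. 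So $\algcl{SGRA}$ satisfies the third condition of Theorem~\ref{th:HungarianUndec}. (For the purely relation-algebraic members $\algcl{RA},\algcl{RRA},\algcl{GRA}$ one could instead use the second condition directly, embedding an arbitrary finite group $\m G$ into $\Cmx{\m G}\in\algcl{GRA}$ via $g\mapsto\{g\}$; but that route is blocked for the commutative and symmetric varieties, whereas the $(\mathbb Z_2)^n$ argument covers all cases at once.)

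Finally I would assemble the inclusions. From Figure~\ref{4elt} and the defining identities one has $\algcl{SGRA}\subseteq\algcl{SRA}\subseteq\algcl{CRA}\subseteq\algcl{RA}$ and $\algcl{SGRA}\subseteq\algcl{GRA}\subseteq\algcl{RRA}\subseteq\algcl{RA}$, with $\algcl{RA}\subseteq\algcl{BGBI}=\algcl{RM}$; since Boolean group complex algebras are commutative we also get $\algcl{SGRA}\subseteq\algcl{BBI}=\algcl{CRM}$; and the remaining varieties $\algcl{IRM},\algcl{ICRM},\algcl{IRA}$ each contain all the relevant (abelian) group complex algebras --- hence $\algcl{SGRA}$ --- while being contained in $\algcl{BGBI}$, $\algcl{BBI}$, and $\algcl{RA}$ respectively. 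By the monotonicity remark, every variety on the list then satisfies the third condition of Theorem~\ref{th:HungarianUndec}, so its equational theory is undecidable. I expect the only genuinely non-routine step to be this last placement: writing out the definitions of the less standard varieties $\algcl{IRM},\algcl{ICRM},\algcl{IRA}$ carefully enough to confirm they lie between the variety generated by (abelian) group complex algebras and $\algcl{BGBI}$; everything else is bookkeeping, the real content being Theorem~\ref{th:HungarianUndec} itself (i.e.\ the reductions of Kurucz et al.\ and Andr\'eka et al.).
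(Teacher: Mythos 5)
Your argument is sound, but note that the paper itself does not carry out this derivation: its ``proof'' of Corollary~\ref{cor:HungarianUndec} is essentially a citation, pointing out that the statement is explicitly Corollary~5.4 of Andr\'eka et al.\ and Corollary~8.1 of Kurucz et al., with the historical remark that for $\algcl{RA}$ and $\algcl{RRA}$ undecidability goes back to Tarski. What you supply---the observation that the three conditions of Theorem~\ref{th:HungarianUndec} are existential and hence monotone along variety inclusion, the single witness family $\Cmx{\m G}$ with $\m G=(\mathbb{Z}_2)^n$ verifying condition (iii) for $\algcl{SGRA}$ (the map $g\mapsto\{g\}$ is indeed a semigroup embedding with antichain image, and $\Cmx{\m G}$ is commutative and satisfies $x^\smallsmile=x$), and the inclusions read off Figure~\ref{4elt}---is precisely the verification the paper leaves to the cited sources, and it is an attractive way to obtain all the listed classes uniformly from one computation rather than from several separate reductions in the literature.

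The step you flag as non-routine, the placement of $\algcl{IRM}$, $\algcl{ICRM}$ and $\algcl{IRA}$, does go through, but only under the intended reading of the prefix ``I'': these names are carried over from the relation-algebra literature being cited, where \emph{integral} means that the multiplicative unit is an atom (equivalently, $xy=\bot$ implies $x=\bot$ or $y=\bot$), not the substructural condition $x\le 1$. Group complex algebras are the paradigm integral algebras in this sense, since their unit $\{e\}$ is an atom of the powerset Boolean algebra, so your witnesses do lie in these classes; under the substructural reading the claim would be false, as $\Cmx{\m G}$ for nontrivial $\m G$ has $1<\top$. One further small repair: integrality in this sense is not preserved by direct products, so these are classes rather than varieties; to invoke Theorem~\ref{th:HungarianUndec} literally you should pass to the varieties they generate, which have the same equational theories and still contain the witnesses, after which your monotonicity argument applies verbatim.
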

\noindent
For most of these classes, this corollary is explicitly stated as Corollary 5.4 in  Andr\'eka et al. \cite{AndrekaKNSS96:csli} or Corollary 8.1 in Kurucz et al. \cite{KuruczNSS95:jolli}. Of course, for varieties like \algcl{RA} and \algcl{RRA} undecidability of the equational theory was established much earlier by Tarski, in fact claimed already in 1941 \cite{Tarski41:jsl}. 

These results were somehow overlooked by the BI community. Much more recently, overlapping undecidability results for subvarieties of \BBI\ have been obtained by Brotherston and Kanovich \cite{BrotherstonK10:lics,BrotherstonK14:jacm} and by Larchey{-}Wendling and Galmiche \cite{Larchey-WendlingG10:lics,Larchey-WendlingG13:tocl}. 
   However,  we repeatedly stated in \S~\ref{sec:intro} and \S\S~\ref{ModelsBI}--\ref{ModelsGBI}  that 
     it is natural to focus on concrete models, particularly \emph{memory} and \emph{heap models} (\S~\ref{HeapModel}). 
Brotherston and Kanovich \cite{BrotherstonK14:jacm}  prove that 
 subvarieties of \BBI\ generated by such 
  models are undecidable; similar undecidability results apply to even the simplest allocation/effect algebras (cf. \S~\ref{Effect}). An alternative, somewhat more semantic proof  is provided by Demri and Deters \cite[Theorem 4.4]{DemriD15:jancl}.  It is not immediately clear how to adjust 
  algebraic proofs quoted above to cover such varieties: 
 Andr\'eka et al.  \cite{AndrekaKNSS96:csli} follow  Urquhart  \cite{urquhart84:jsl,urquhart95:au}, Freese \cite{Freese80:ams} and Lipshitz \cite{Lipshitz74:ams} in using the technique of von Neumann's \cite{vonNeumann60:book} $n$-frames, which originated in projective geometries and is applicable to a wide class of varieties, but rather heavy on the technical side. In contrast, Brotherston and Kanovich \cite{BrotherstonK14:jacm} or Demri and Deters \cite{DemriD15:jancl} use a natural strategy of encoding Minsky machines, tailored for the intended models, and more readily understandable to CS researchers.


\begin{rem}\label{rem:undecvar}
It is important to mention here that the separation logic community not only tends to be interested in concrete models, but  also typically extends the syntax with entities allowing reasoning about e.g. heap structure and program values. Moreover, 
BI and its extensions are not considered in isolation, but are of interest mostly as the assertional core of proper separation logic, i.e., 
 a suitable language of \emph{Hoare triples} (\S~\ref{SeparationLogic}). 
  On the other hand, the assertion language hardly ever allows uninterpreted algebraic variables, which also limits direct applicability of (un)decidability results discussed here. 
 Demri and Deters \cite{DemriD15:jancl} provide an overview of positive and negative results for the assertion language  of separation logic. We will return to the issue  in \S~\ref{sec:DecRev}.  
\end{rem} 

\begin{rem} \label{rem:nonassoc}
Finally, let us note that these undecidability results heavily rely on associativity. The  concluding section of Andr\'eka et al.  \cite{AndrekaKNSS96:csli}  or, even more relevantly, the chapter by Mikul\'{a}s \cite{Mikulas96:csli} in the same volume \cite{marx:arro96} show that 
 positive decidability results are available even for boolean \algcl{nGBI} and its relatives. An explanation  of practical interest in such formalisms 
 can be found in the recent work of Collinson et al. \cite{CollinsonMDP14:jlc}; in fact, this reference rediscovers a variant of system called CARL by Mikul\'{a}s \cite{Mikulas96:csli}. The intuitionistic variant (in fact equivalent to the equational theory of \algcl{nGBI} \cite{GalatosJ}; cf. also \cite{GalatosO10:apal}, \cite[Ch. 2.3.1]{GalatosJKO07})  is motivated in a similar setting by Docherty and Pym \cite{DochertyP17:stone}.
 \end{rem}

\subsection{Undecidability of Quasi-Equational Theories} 

Obviously, all the subvarieties with undecidable equational theory discussed in \S~\ref{UndecEq} have \emph{a fortiori} undecidable quasi-equational theories. Nevetheless, even having decidable equational theory does not guarantee positive results here (see, however, Remark \ref{rem:posfep}).

\begin{theorem}
The quasi-equational theory of \BI\  is undecidable.
\end{theorem}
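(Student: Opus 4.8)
The plan is to reduce from an undecidable halting problem, e.g.\ that of two-counter (Minsky) machines, via the standard correspondence between quasi-identities and word problems. Recall that for a variety $\algcl{V}$ and terms $p_1,q_1,\dots,p_n,q_n,p,q$ over variables $\bar x$, the quasi-identity $\bigwedge_{i}(p_i=q_i)\Rightarrow(p=q)$ holds in $\algcl{V}$ precisely when $p=q$ holds in the $\algcl{V}$-algebra presented by $\langle\bar x\mid p_i=q_i\rangle$; so it suffices to show that the word problem for finitely presented $\BI$-algebras is undecidable. Thus I would assign, uniformly and computably to a two-counter machine $M$, a finite set $E_M$ of equations in the $\BI$-signature over finitely many generators --- coding the finitely many control locations, the two counters as $*$-powers of two designated generators, and each instruction (increment, decrement, zero-test) as an inequation, equivalently an equation since $s\le t$ abbreviates $s=s\wedge t$ --- together with terms $u_M,v_M$, so that $M$ halts if and only if $E_M\models_{\BI}u_M\le v_M$.

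For the ``only if'' direction I would translate an accepting run of $M$ step by step into a formal derivation from $E_M$ that uses only the monoid axioms, residuation and the intuitionistic rules; since such a derivation is sound in every $\BI$-algebra validating $E_M$, this yields $E_M\models_{\BI}u_M\le v_M$. The point of moving from the equational to the quasi-equational theory is exactly that the hypotheses $E_M$ may \emph{stipulate} the transition structure of $M$, which the bare equational theory of $\BI$ --- decidable by Galmiche et al.\ --- cannot; the encoding exploits the interplay of the commutative residuated monoid with the Heyting implication, and is the distributive, intuitionistic relative both of the $\BBI$-reductions recalled in \S\ref{UndecEq} and of the Minsky-machine encodings familiar for substructural logics under finite theories. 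The residuated-frame and proof-search machinery of \S\ref{ProofTheory} is the natural toolkit for the forward derivation.

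The main obstacle is the converse, namely \emph{faithfulness}: when $M$ does not halt I must certify $E_M\not\models_{\BI}u_M\le v_M$ by exhibiting a concrete $\BI$-algebra in which every equation of $E_M$ holds while $u_M\le v_M$ fails. The natural candidate is the complex algebra of a (partial, commutative) monoid built from the configuration space of $M$ --- the elements being configurations, or finite multisets coding them, and the partial product performing one machine step --- for which Fact~\ref{fact:ppm} automatically supplies a $\BI$-algebra. The delicate point, which I expect to require the most care, is the Heyting implication: it is needed to make the machine's branching (in particular the zero-tests) encodable, yet it is also the operation that most threatens to validate ``spurious'' reachability, so one has to choose the order on the monoid so that $\to$ realises exactly the intended conditionals and nothing more, and then verify that the complex algebra validates $E_M$ with no unwanted identifications and places $u_M$ below $v_M$ only when the halting configuration is genuinely reachable. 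Granting faithfulness, undecidability of the halting problem passes immediately to the quasi-equational theory of $\BI$.
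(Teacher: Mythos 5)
Your reduction of the quasi-equational theory to the word problem for finitely presented \BI-algebras is fine, and the forward direction (halting run $\Rightarrow$ derivation) is plausible; but as it stands the proposal is a plan rather than a proof, because the step you yourself flag as the ``main obstacle'' --- faithfulness --- is exactly where all of the mathematical content lies, and nothing in the sketch resolves it. To refute $E_M\models_{\BI}u_M\le v_M$ for a non-halting $M$ you must exhibit a \BI-algebra and a valuation validating $E_M$ and refuting the conclusion, and the candidate you name (the complex algebra of a partial commutative monoid of configurations) is precisely where the Heyting reduct fights back: in an algebra of upsets the additive connectives are fully distributive, $\top$ and weakening on $\wedge$ are available, $\cdot$ is monotone, and the zero-test instructions have to be encoded so that no ``spurious'' reachability is validated --- you say one must ``choose the order so that $\to$ realises exactly the intended conditionals and nothing more,'' but that is a restatement of the problem, not an argument. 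It is worth noting that the machine-based encodings you invoke as precedent (Brotherston--Kanovich, Larchey-Wendling--Galmiche, and the linear-logic tradition) live in the Boolean or non-distributive settings; for distributive residuated structures the literature (Lipshitz \cite{Lipshitz74:ams}, Freese \cite{Freese80:ams}, Urquhart \cite{urquhart84:jsl,urquhart95:au}) deliberately avoids counter machines in favour of von Neumann $n$-frames coming from projective geometry, precisely because faithfulness of direct machine encodings is hard to secure there. So the missing countermodel construction is not a routine verification you may defer; it is the theorem.

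The paper takes a different and much shorter route that sidesteps this entirely: it quotes Galatos \cite{Galatos02:oas}, who proved (by the $n$-frames strategy of Lipshitz, Freese and Urquhart) that the word problem for distributive residuated lattices \algcl{DRL} (and its commutative version \algcl{CDRL}) is undecidable, and then transfers this to \BI\ by the conservativity fact established via residuated frames in \S~\ref{ProofTheory}: every (commutative) distributive residuated lattice embeds into a complete (G)BI-algebra, i.e.\ these algebras are exactly the $\{\wedge,\vee,\cdot,\backslash,/,1\}$-subreducts of (G)BI-algebras, so a quasi-equation in that signature fails in some \BI-algebra iff it fails in some \algcl{CDRL}. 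The countermodel problem that stalls your proposal is thereby outsourced: the hard combinatorics is done once in the residuated-lattice setting (where there is no Heyting arrow to control), and the Heyting structure is added for free by the frame completion. If you want to salvage your approach, you would either have to carry out the faithfulness argument in the complex algebra in full --- a substantial and, to our knowledge, unpublished piece of work for intuitionistic \BI\ --- or reorganise the proof along these conservativity lines.
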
 

\begin{proof}[Sketch]
Galatos \cite{Galatos02:oas}, following a strategy similar to that of Urquhart  \cite{urquhart84:jsl,urquhart95:au}, Freese \cite{Freese80:ams} and, earlier still, of Lipshitz \cite{Lipshitz74:ams} shows that the word problem for \algcl{DRL} (\emph{distributive residuated lattices}, cf. \cite[\S~3.5.3]{GalatosJKO07}) is undecidable. As will be shown in \S~\ref{ProofTheory}, 
 this class consist precisely of subreducts of \BI\ which implies the result. 
\hfill\qed\end{proof}

\begin{rem}
The same technique can be used to show directly the undecidability of the quasiequational theory of  \BBI: just replace \algcl{DRL} with  \algcl{CDRL} (commutative distributive residuated lattices; cf. \S~\ref{ProofTheory}). But in the boolean setting, Corollary \ref{cor:HungarianUndec} provides a stronger result anyway. The general idea of using von Neumann's $n$-frames is central to both proofs.
\end{rem}

\noindent
Recall that the \emph{finite embeddability property} (cf., e.g., \cite{Blok2002}) for finitely axiomatizable (quasi)varieties implies decidability of universal theory---and, a fortiori, quasi-equational theory. Thus we obtain

\begin{corollary}
\GBI\ and \BI\ do not have the finite embeddability property.
\end{corollary}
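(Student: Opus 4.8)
The plan is a contrapositive argument built on the fact recalled just before the statement: for a finitely axiomatizable (quasi)variety the finite embeddability property implies decidability of the universal theory, and \emph{a fortiori} of the quasi-equational theory. Finite axiomatizability is already in hand --- \S~\ref{LogicAlgebra} gives explicit finite (in)equational bases for $\HA$ and $\GBI$, and $\BI$ is cut out of $\GBI$ by the single identity $xy=yx$ --- so it suffices to show that both $\BI$ and $\GBI$ have undecidable quasi-equational theory. For $\BI$ this is precisely the Theorem just proved, so if $\BI$ had the FEP its quasi-equational theory would be decidable, a contradiction; hence $\BI$ lacks the FEP.

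For $\GBI$ I would run the identical argument once its quasi-equational theory is seen to be undecidable, and for this no fresh construction is needed. By Galatos \cite{Galatos02:oas} the word problem for $\algcl{DRL}$ is undecidable; by the subreduct characterization to be established in \S~\ref{ProofTheory}, every distributive residuated lattice is a subreduct of a $\BI$-algebra --- hence, \emph{a fortiori}, of a $\GBI$-algebra --- while conversely the residuated-lattice reduct of any $\GBI$-algebra lies in $\algcl{DRL}$. Since quasi-equations are preserved and reflected under passage to reducts and are preserved under subalgebras, a quasi-equation in the residuated-lattice fragment holds throughout $\algcl{DRL}$ if and only if it holds in all $\GBI$-algebras; thus the undecidable word problem for $\algcl{DRL}$ sits as a fragment inside the quasi-equational theory of $\GBI$, which is therefore undecidable, and $\GBI$ fails the FEP as well. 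I would stress that the $\GBI$ case does not follow formally from the $\BI$ one, since the FEP need not pass between a variety and a subvariety --- a finite $\GBI$-algebra witnessing embeddability of a finite partial $\BI$-algebra need not be commutative --- which is exactly why the reduct argument is carried out for $\GBI$ directly.

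The only genuine content lies elsewhere, in the $\algcl{DRL}$ (and $\algcl{CDRL}$) subreduct characterization deferred to \S~\ref{ProofTheory} and in Galatos' undecidability theorem; granting those, the corollary is an immediate contrapositive and presents no real obstacle of its own.
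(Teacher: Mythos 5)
Your argument is correct and is essentially the paper's own: the corollary is obtained as the immediate contrapositive of the fact that the FEP for finitely axiomatizable (quasi)varieties yields decidability of the universal (hence quasi-equational) theory, combined with the undecidability theorem whose proof rests on Galatos' result for $\algcl{DRL}$ and the subreduct characterization from \S~\ref{ProofTheory}. The only difference is that you spell out explicitly the $\GBI$ case (via $\algcl{DRL}$ being exactly the residuated-lattice subreducts of $\GBI$ as well as of $\BI$), which the paper leaves implicit; this is a sound and welcome clarification, not a departure in method.
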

For \BBI, we have an even stronger result that follows from Corollary \ref{cor:HungarianUndec}:

\begin{corollary}
\BBI\ does not have the finite model property.
\end{corollary}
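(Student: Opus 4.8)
The plan is to obtain the failure of the finite model property as a purely logical consequence of the undecidability of the equational theory of $\BBI$ recorded in Corollary~\ref{cor:HungarianUndec}, via the well-known principle that a recursively axiomatizable variety enjoying the finite model property has a decidable equational theory. So the corollary is really a one-line deduction once that principle is unpacked.

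First I would note that $\BBI$ is finitely based: by the discussion in \S~\ref{LogicAlgebra} and \S~\ref{Subvarieties} it is the subvariety of $\GBI$ defined by the finitely many $\GBI$-axioms together with $xy = yx$ and $\neg\neg x = x$. Hence, by Birkhoff's completeness theorem for equational logic, the set of identities valid in $\BBI$ is recursively enumerable.

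Next, arguing by contradiction, I would suppose $\BBI$ had the finite model property, i.e.\ every identity failing in $\BBI$ already fails in some finite member of $\BBI$. One can then recursively enumerate the \emph{non}-valid identities as well: run through all finite algebras in the GBI-signature, discard those refuting one of the finitely many defining identities of $\BBI$ (a finite, effective check, since validity of a fixed identity in a fixed finite algebra is a finite computation), and for each surviving finite algebra list the identities it refutes; by the finite model property this enumeration exhausts exactly the identities not valid in $\BBI$. A subset of the (decidable) set of all identities that is recursively enumerable and has recursively enumerable complement is decidable, so the equational theory of $\BBI$ would be decidable, contradicting Corollary~\ref{cor:HungarianUndec}. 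Therefore $\BBI$ lacks the finite model property.

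I do not expect any genuine obstacle here; the only point deserving a word of care is the decidability of membership of a finite algebra in $\BBI$, which holds precisely because the variety is finitely axiomatized. Let me also note that this result is strictly stronger than the failure of the finite embeddability property for $\BI$ established just above, since the finite embeddability property entails the finite model property but not conversely.
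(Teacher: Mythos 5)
Your proof is correct and follows essentially the same route as the paper, which presents this corollary as an immediate consequence of Corollary~\ref{cor:HungarianUndec}: the intended justification is precisely the standard observation you spell out, namely that a finitely axiomatized variety with the finite model property would have both an r.e.\ equational theory and an r.e.\ set of refutable identities, hence a decidable equational theory, contradicting the undecidability of the equational theory of $\BBI$. No gap to report.
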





\noindent
Still, as pointed out in Remark \ref{rem:posfep}, 
  the f.e.p. often holds in the presence of weakening. This includes in particular the intuitionistic logic, and it is worth pointing out here 
 Recall that the finite embeddability property for \algcl{HA} 
   was already noted in a classical paper of McKinsey and Tarski \cite[Th. 1.11]{McKinseyT46}.

\section{A Glimpse at Proof Theory}\label{ProofTheory}

The formulas of GBI are all terms constructed from variables $x,y,z,w,x_1,\ldots$ using the
operation symbols $\wedge,\vee,\to,\top,\bot,\cdot,\backslash,/,1$. The set of all these terms is the absolutely free term algebra of this signature, denoted by $\fmal_\text{GBI}$. In this section we give a syntactic sequent calculus that provides a decision procedure for testing if an inequality $s\le t$ holds in all GBI-algebras. The proof that the procedure is complete uses the notion of distributive residuated frame and algebraic cut-elimination, due to Galatos and Jipsen \cite{GalatosJ13,GalatosJ}. The equational decidability of the distributive residuated lattice  reduct is also proved by Kozak \cite{Kozak09}. Of course, especially in this overview we have to recall that the technique of algebraic cut-elimination dates back to Belardinelli, Jipsen and Ono \cite{BelardinelliJO04} (see also \cite[Ch. 7]{GalatosJKO07}, \cite{GalatosO10:apal}).

The approach outlined here works for all subvarieties of \algcl{GBI} that are defined by so-called simple structural rules, which includes \algcl{BI} and many others. In addition to proving decidability, the residuated frame approach also provides a completion that shows any (commutative) distributive lattice-ordered monoid can be embedded in a complete GBI-algebra (BI-algebra).\footnote{Needless to say, the residuated frame approach to proof theory of (G)BI is not the only possible one. As we have already pointed out, there is an intimate connection with a massive body of work on proof theory of relevance logics. See comments and references in the Introduction. We should also mention here that there are numerous more recent references, e.g., cut-free proof calculi for subvarieties of \algcl{BI}\ of Ciabattoni and Ramanayake \cite{CiabattoniR17}.
}

We use an algebraic approach for the sequent calculus, allowing rules like associativity of $\cdot$, $\wedge$ to be handled by a simple normalization of terms. This means that we replace the algebra $\fmal_\text{GBI}$ by a homomorphic image in which terms are identified modulo associativity for $\cdot$ and modulo commutativity and associativity for $\wedge$. This is harmless since each term has only finitely many equivalent syntactic forms modulo these identities. In addition a formula $x$ can be considered as any one of
$$x=x\cdot 1=1\cdot x=x\wedge\top=\top\wedge x=x\vee\bot=\bot\vee x$$ when
attempting to match a sequent rule.
To avoid proliferation of the sequent rules, we also do not distinguish between internal and external symbols of the sequent calculus, but we define a notion of context (bunches) that handles the required constraint automatically.

In the sequent rules (quasiequations) below, the \emph{bunches} $u(x)$ are terms in which the variable $x$ occurs only once, and on the term-tree branch where $x$ occurs, only the symbols $\cdot$ and $\wedge$ are allowed to appear. 

\begin{lemma} \label{lem:soundness}
The rules in Table~\ref{GBIseq} are valid quasiequations of \GBI.
\end{lemma}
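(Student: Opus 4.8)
The statement is that each rule in Table~\ref{GBIseq} — which we have not yet seen, but which is the standard two-sided sequent calculus for $\GBI$ with context-bunches $u(x)$ built only from $\cdot$ and $\wedge$ — is a valid quasiequation of $\GBI$. Since validity of a quasiequation in a variety is equivalent to the corresponding rule being sound for the interpretation $s \le t$ in every $\GBI$-algebra, the plan is simply to go through the rules one group at a time, fix an arbitrary $\GBI$-algebra $\m A$ and an arbitrary valuation of the variables, and check that if the premises hold (as inequalities in $\m A$ under that valuation) then so does the conclusion. The work is entirely local: each rule mentions only finitely many formulas, and the inequalities to verify are consequences of the basic algebraic facts already recorded in \S~\ref{LogicAlgebra} — the residuation laws for $\cdot$ (giving $\backslash,/$), the residuation law for $\wedge$ (giving $\to$), the monoid identities, and the distributivity/infinite-distributivity identities. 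So the proof is a finite case analysis, and I would organise it by the kind of connective the rule introduces.

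**Key steps, in order.** First, I would dispose of the \emph{structural} and \emph{identity} rules: these say that a bunch $u(x)$ is monotone in its hole, i.e. $a \le b$ implies $u(a) \le u(b)$, plus the harmless identifications $x = x\cdot 1 = 1\cdot x = x\wedge\top = \cdots$ used for matching. Monotonicity of $u(-)$ follows by induction on the bunch, using that both $\cdot$ and $\wedge$ are order-preserving in each argument (immediate from residuation), which is the one structural fact that makes the bunch formalism work. Second, the \emph{left/right introduction rules for the multiplicative connectives} $\cdot, \backslash, /, 1$: each is a direct rephrasing of the residuation adjunction $xy \le z \iff y \le x\backslash z \iff x \le z/y$ and the unit laws $1x = x = x1$, wrapped inside a monotone bunch $u(-)$ — so one combines the adjunction with Step~1's monotonicity. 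Third, the \emph{left/right rules for the additive connectives} $\wedge, \vee, \to, \top, \bot$: the $\wedge$- and $\to$-rules come from the Heyting adjunction $x\wedge y \le z \iff y \le x\to z$ together with projections $x\wedge y \le x$, $x\wedge y \le y$; the $\vee$-rules use the universal property of join, $x \le z$ and $y \le z$ implies $x\vee y \le z$, and here is the one place the argument is not purely adjunction-shuffling — pushing a $\vee$ out through a bunch $u(-)$ requires $u(a\vee b) \le u(a)\vee u(b)$, which needs $\cdot$ and $\wedge$ to distribute over $\vee$; both do (for $\cdot$ by residuation, for $\wedge$ by the distributivity of the Heyting reduct), so this goes through by induction on the bunch. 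The $\top$/$\bot$ rules are trivial ($\bot \le z$, $x \le \top$, and $\bot x = \bot = x\bot$). Finally the \emph{cut rule} (if it appears in the table as a rule rather than being the subject of a later elimination theorem) is just transitivity of $\le$ combined once more with monotonicity of bunches.

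**Main obstacle.** There is no deep obstacle — this is a soundness lemma, the easy half of the adequacy of the calculus — but the one point that genuinely requires care, rather than a one-line appeal to an adjunction, is the interaction of the \emph{additive} connectives $\vee$ (and $\bot$) with the context-bunch: verifying $u(a\vee b)\le u(a)\vee u(b)$ for every bunch $u$ forces an induction that leans on \emph{both} $x(y\vee z) = xy\vee xz$ \emph{and} $x\wedge(y\vee z) = (x\wedge y)\vee(x\wedge z)$, i.e. on the fact that $\GBI$-algebras have distributive residuated lattice reducts (\S~\ref{LogicAlgebra}). I would state that bunch-distributivity fact as a small sublemma, prove it by induction on the term-tree of $u$, and then every $\vee$-left and $\bot$-left rule falls out of it; all the remaining rules are immediate from residuation, the monoid laws, and monotonicity.
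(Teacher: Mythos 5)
Your proposal is correct and follows essentially the same route as the paper: the paper's proof is a one-line sketch observing that each bunch $u(x)$ induces an order-preserving term-function in any \GBI-algebra, from which soundness of all the rules follows by residuation and the lattice laws. Your write-up merely fills in the details the sketch leaves implicit --- in particular the point that $[\vee_l]$ (and the $\bot$-axiom) additionally needs $u$ to distribute over joins, which holds since $\cdot$ preserves joins by residuation and the Heyting reduct is distributive --- so nothing further is required.
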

\begin{proof}[Sketch]
 In any GBI-algebra, $u(x)$ induces an order-preserving term-function of $x$ under any assignment of the other variables in $u(x)$. \hfill\qed
 \end{proof}

\begin{table}
\begin{center}
$\inferrule{\ }{x\le x}$\qquad
$\inferrule{\ }{u(\bot)\le x}$\qquad
$\inferrule{\ }{x\le \top}$\\[12pt]


$\inferrule{u(x\wedge x)\le y}{u(x)\le y}$ $[\wedge_{idem}]$
\quad
$\inferrule{u(x)\le z}{u(x\wedge y)\le z}$ $[\wedge_l]$
\quad
$\inferrule{u(y)\le z}{u(x\wedge y)\le z}$ $[\wedge_l]$
\quad
$\inferrule{x\le y\quad x\le z}{x\le y\wedge z}$ $[\wedge_r]$\\[12pt]

$\inferrule{u(x)\le z\quad u(y)\le z}{u(x\vee y)\le z}$ $[\vee_l]$
\qquad
$\inferrule{x\le y}{x\le y\vee z}$ $[\vee_r]$
\qquad
$\inferrule{x\le z}{x\le y\vee z}$ $[\vee_r]$\\[12pt]

$\inferrule{x\le y\quad u(z)\le w}{u(x\cdot (y\backslash z))\le w}$$[\backslash_l]$
\quad
$\inferrule{x\cdot y\le z}{y\le x\backslash z}$$[\backslash_r]$
\quad
$\inferrule{x\le y\quad u(z)\le w}{u((z/y)\cdot x)\le w}$$[/_l]$
\quad
$\inferrule{x\cdot y\le z}{x\le z/y}$$[/_r]$\\[12pt]

$\inferrule{x\le y\quad z\le w}{x\cdot z\le y\cdot w}$$[\cdot_{lr}]$
\qquad
$\inferrule{x\le y\quad u(z)\le w}{u(x\wedge(y\to z))\le w}$ $[\to_l]$
\qquad
$\inferrule{x\wedge y\le z}{y\le x\to z}$ $[\to_r]$
\end{center}
\caption{The sequent rules of GBI.\label{GBIseq}}
\end{table}

The effectiveness of these quasiequations stems from the observation that each rule contains the same variables in the premise and in the conclusion, 
and each rule (except $\wedge_{idem}$) eliminates a particular symbol either from the left hand side or the right hand side of the conclusion, as indicated by its name.  
When the rules are applied in a backward proof search, the conclusion is matched to the inequality $s\le t$, and this match determines the assignments to the variables in the premise. Furthermore, the premises of each rule (again, except $\wedge_{idem}$ discussed below) are at most as long as the conclusion (using some standard measure of length of a formula). Hence symbols get eliminated from $s,t$ as the search proceeds, and the inequalities in the premises do not grow in length, so after a finite number of steps the search either terminates with axioms as leaves, or having exhausted all possible applications of the rules the conclusion is that $s\le t$ cannot be proved by this sequent calculus.

In the remainder of this section we outline why this proof procedure yields all valid inequalities of \GBI, and how it extends to cover many of its subvarieties.

The following quasiequation, known as the cut-rule, does not have the subformula property:
$$\inferrule{x\le y\quad u(y)\le z}{u(x)\le z} \text{ [cut]}$$
Adding this rule to the GBI sequent calculus makes it quite easy to show that 
sequent proofs (with cut) can emulate Hilbert system proofs in HGBI, and hence the sequent calculus is complete with respect to the equational theory of \GBI. E.g., the cut rule emulates modus ponens in the form $\inferrule{\top\le y\quad y\le z}{\top\le z}$ and a proof of the axiom $x\to(y\to x)$ is given by
$$
\inferrule{\inferrule{\inferrule{\inferrule{x\le x}{x\wedge y\le x}}{x\le y\to x}}{
x\wedge \top\le y\to x}}{\top\le x\to(y\to x) .}
$$
Nevertheless, the cut-rule cannot be used effectively in a proof search, since the variable $y$ in the premise can be instantiated with any formula, hence the search tree is not finite.
One, however, can show that the GBI-calculus without the cut-rule is able to prove exactly the same inequalities as can be proved with the cut-rule. There are several approaches to proving such a \emph{cut-elimination result}, but with our emphasis on algebraic aspects of (G)BI 
 we choose to outline an algebraic approach based on Galatos and Jipsen \cite{GalatosJ}.

A \emph{binary relation} $N$ from a set $W$ to $W'$ is a map $N:W\to\mathcal P(W')$ and a
\emph{ternary relation} $\circ$ on $W$ is a map $\circ:W\times W\to\mathcal P(W)$. Instead of $z\in N(x)$ we write $xNz$, and for sets $X,Y\subseteq W, Z\subseteq W'$ define 
\begin{itemize}
\item $XNz$ iff $xNz$ for all $x\in X$,
\item $xNZ$ iff $xNz$ for all $z\in Z$,
\item $x\circ y=\circ(x,y)$,
\item $X\circ Y=\bigcup\{x\circ y:x\in X,y\in Y\}$ and
\item $\gamma:\mathcal P(W)\to \mathcal P(W)$ by $\gamma(X)=\{z\in W:\forall y
(XNy\text{ implies }zNy)\}$.
\end{itemize}
Note that $\gamma$ is a closure operator on $W$, i.e., $X\subseteq \gamma(X)=\gamma(\gamma(X))$.

A GBI-\emph{frame} is a structure $\m W=(W,W',N,\circ,E,\backslash\!\!\backslash, /\!\!/,\omt,\Rightarrow,\Leftarrow)$, where $N\subseteq W\times W'$, $\circ,\omt$ are ternary relations on $W$, 
$$\ldd, {\Rightarrow}: W\times W'\to \mathcal P(W)\qquad
\rdd,{\Leftarrow}: W'\times W\to\mathcal P(W)
$$
and the following properties hold:
\begin{description}
\item[\bro A\brc] $\gamma((x\circ y)\circ z)=\gamma(x\circ(y\circ z))$, $\gamma((x\omt y)\omt z)=\gamma(x\omt(y\omt z))$,
\item[\bro E\brc] $\gamma(E\circ x)=\gamma(\{x\})=\gamma(x\circ E)$,
\item[\bro N\brc] $x\circ yNz$ iff $xN(z\rdd y)$ iff $yN(x\ldd z)$
\item[\bro D\brc] $x\omt yNz$ iff $xN(z\Leftarrow y)$ iff $yN(x\Rightarrow z)$
\item[\bro I)\ ] $\gamma(x\omt x)=\gamma(\{x\})$, $\gamma(x\omt y)\subseteq \gamma(\{x\})$ and $\gamma(x\omt y)=\gamma(y\omt x)$.
\end{description}
\noindent
The property (N) is called the \emph{nuclear condition}: it ensures that the closure operator $\gamma$ is a \emph{nucleus}, i.e., satisfies $\gamma(X)\circ \gamma(Y)\subseteq \gamma(X\circ Y)$. The nucleus image of a residuated lattice is again a residuated lattice \cite[Thm 3.34]{GalatosJKO07}, which is important for the upcoming definition of Galois algebra. 
Likewise, property (D) is the \emph{distributive nuclear condition} and ensures that $\gamma(X)\omt \gamma(Y)\subseteq \gamma(X\omt Y)$. Together with (I) it implies that $\gamma(X\omt Y)=\gamma(X)\cap\gamma(Y)$, making the nucleus image a distributive lattice \cite[Lem. 2.1, 2.3]{GalatosJ}.

The \emph{Galois algebra} of $\m W$ is $\m W^+=(\mathcal \gamma[\mathcal P(W)],\cap,\cup_\gamma,\circ_\gamma,1,\backslash,/,\to)$ where 
\begin{itemize}
\item $X\cup_\gamma Y=\gamma(X\cup Y)$, 
\item $X\circ_\gamma Y=\gamma(X\circ Y)$, 
\item $X\backslash Y=\{z\in W:X\circ\{z\}\subseteq Y\}$,
\item $X/Y=\{z\in W:\{z\}\circ Y\subseteq X\}$ and
\item $X\to Y=\{z\in W:X\omt\{z\}\subseteq Y\}$.
\end{itemize}

To become familiar with the concept of a GBI-frame, it is a good exercise to prove the following important result. 

\begin{theorem}[\cite{GalatosJ}]
For any GBI-frame $\m W$ the Galois algebra $\m W^+$ is a complete perfect GBI-algebra. 

Conversely, given any GBI-algebra $\m A$, $(A,A,\le,\cdot,\{1\},\backslash,/,\wedge,\to,\leftarrow)$ is a GBI-frame, where the operation $x\leftarrow y$ is defined as $y\to x$.
\end{theorem}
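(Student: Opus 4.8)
The plan is to verify the two directions separately, relying as much as possible on the general Galois-algebra machinery from \cite{GalatosJKO07,GalatosJ} rather than redoing residuated-frame theory from scratch. For the first direction, I would start from the residuated-frame facts already recalled in the excerpt: the nuclear condition (N) makes $\gamma$ a nucleus for $\circ$, so by \cite[Thm 3.34]{GalatosJKO07} the image $\m W^+=\gamma[\mathcal P(W)]$ is a complete residuated lattice under $\cap$, $\cup_\gamma$, $\circ_\gamma$, $\backslash$, $/$; the associativity axiom (A) for $\circ$ transfers to associativity of $\circ_\gamma$, and the unit axiom (E) gives that $\gamma(E)$ (equivalently $1$ in the listed signature) is a monoid identity, so $(\m W^+,\circ_\gamma,1)$ is a monoid. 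Next, the distributive nuclear condition (D) together with (I) gives $\gamma(X\omt Y)=\gamma(X)\cap\gamma(Y)$ and $\gamma(x\omt x)=\gamma(\{x\})$, which by \cite[Lem. 2.1, 2.3]{GalatosJ} makes the lattice reduct distributive and, since $\omt$ is genuinely the meet here, shows that $\to$ (defined via $X\omt\{z\}\subseteq Y$) is precisely the residual of $\cap$; hence the lattice reduct is a Heyting algebra. Bottom and top are $\gamma(\emptyset)$ and $W$ (which is $\gamma$-closed). Finally, perfectness: in a Galois algebra of this form the completely join-irreducible elements are exactly the closures $\gamma(\{x\})$ of singletons, every closed set $X=\bigcup_{x\in X}\gamma(\{x\})$ is the join of such elements, and by the preceding lemma a join-perfect Heyting algebra is automatically perfect. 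So $\m W^+$ is a complete perfect GBI-algebra.

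For the converse direction, I take a GBI-algebra $\m A$ and form $\m W_\m A=(A,A,\le,\cdot,\{1\},\backslash,/,\wedge,\to,\leftarrow)$ with $\leftarrow$ defined by $y\to x$, and check the five frame axioms. Here $N$ is the order $\le$, so $\gamma(X)={\downarrow}\!\big(\bigwedge\text{-closure}\big)$ — more precisely $\gamma(X)$ is the downset of the set of elements above every member of $X$ — and one verifies that on principal downsets $\gamma$ acts as the identity; the ternary relations $\circ$ and $\omt$ are the graphs of $\cdot$ and $\wedge$. Then (A) is just associativity of $\cdot$ and of $\wedge$ in $\m A$, (E) is the monoid unit law for $1$, (I) is idempotence, commutativity and the absorption $x\wedge y\le x$ of the meet, and the two nuclear conditions (N) and (D) unwind directly to the residuation laws $x\cdot y\le z\iff y\le x\backslash z\iff x\le z/y$ and $x\wedge y\le z\iff y\le x\to z$ that define GBI-algebras. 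Each of these is a one- or two-line check against the axioms listed in \S\ref{LogicAlgebra}.

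The main obstacle, and the step I would spend the most care on, is getting the bookkeeping around $\gamma$ exactly right in the forward direction — in particular confirming that the $\to$ defined purely from $\omt$ really coincides with the Heyting residual of $\cap$ once (D) and (I) are in force, and that $\cup_\gamma$ distributes over $\cap$ rather than merely the reverse. This is where the "distributive" in \emph{distributive residuated frame} is doing the work, and it is exactly the content of \cite[Lem. 2.1, 2.3]{GalatosJ}, so I would cite those lemmas rather than reprove them; everything else is routine transfer of equational laws through the nucleus, which is standard from \cite[Ch. 3, 7]{GalatosJKO07}.
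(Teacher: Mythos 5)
The paper itself does not prove this theorem: it is attributed to \cite{GalatosJ} and explicitly left as an exercise, so the only basis for comparison is the machinery the paper sets up around it. Your route for the algebraic structure is exactly that intended route: nucleus via (N) and \cite[Thm 3.34]{GalatosJKO07} for the complete residuated-lattice part, (A) and (E) for the monoid, (D) together with (I) and \cite[Lem.~2.1, 2.3]{GalatosJ} for $\gamma(X\omt Y)=\gamma(X)\cap\gamma(Y)$ and hence for distributivity and the Heyting residual, and a routine verification of the five frame conditions in the converse direction (your unwinding of (N), (D), (A), (E), (I) for $(A,A,\le,\cdot,\{1\},\backslash,/,\wedge,\to,\leftarrow)$ is correct, modulo the slightly garbled description of $\gamma$ there: it sends $X$ to the set of \emph{lower bounds of the upper bounds} of $X$, not to a downset of the upper bounds).

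The genuine gap is the perfectness step. You assert that the completely join-irreducible elements of $\m W^+$ are exactly the closures $\gamma(\{x\})$ of singletons, and deduce join-perfectness from $X=\bigcup_{x\in X}\gamma(\{x\})$. What is true, and what your argument actually shows, is that every closed set is a join of sets $\gamma(\{x\})$ (and a meet of sets $y^{\triangleleft}$ for $y\in W'$); it does not follow that the $\gamma(\{x\})$ are completely join-irreducible, and in general they are not. The paper's own follow-up makes this vivid: for the frame $(A,A,\le,\dots)$ obtained from a GBI-algebra $\m A$, the Galois algebra is the MacNeille completion of $\m A$, and there the closures of singletons are the principal downsets ${\downarrow}a$, which fail to be completely join-irreducible whenever $a$ is a join of strictly smaller elements (densely ordered chains, atomless Boolean algebras). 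So your derivation establishes ``join-generated by closures of singletons and meet-generated by the $y^{\triangleleft}$,'' which is weaker than the paper's definition of perfect (every element a join of completely join-irreducibles), and the inference from join-perfect to perfect via the paper's Lemma is then unsupported. If you want to keep the perfectness claim you must either argue it by a different means, restrict to frames in which the singleton closures are irreducible (a reducedness condition on the frame), or read ``perfect'' in the weaker generation sense; as written, this step of the proposal fails, while everything else is sound.
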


It is easy to see that when $W=W'=A$ and $N={\le}$ then $(\gamma[\mathcal P(W)],
\cap,\cup_\gamma)$ is the MacNeille completion of (the lattice reduct of) $\m A$, hence we immediately have the following result.

\begin{corollary}
\GBI\ is closed under MacNeille completions.
\end{corollary}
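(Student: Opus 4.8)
The plan is to read off the claim directly from the preceding theorem together with the observation already recorded in the surrounding text, so the ``proof'' is essentially a bookkeeping exercise identifying the right instance of the $\m W^+$ construction.

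First I would fix a GBI-algebra $\m A$ and form the GBI-frame guaranteed by the converse direction of the cited theorem, namely $\m W = (A, A, {\le}, \cdot, \{1\}, \backslash, /, \wedge, \to, \leftarrow)$ with $x \leftarrow y = y \to x$. By that theorem the Galois algebra $\m W^+$ is a complete (perfect) GBI-algebra, so it only remains to identify its lattice reduct and the embedding. Here I would invoke the remark made just before the corollary: when $W = W' = A$ and $N = {\le}$, the closure operator $\gamma$ sends $X \subseteq A$ to $\{z : \forall y\, (X \le y \Rightarrow z \le y)\}$, which is exactly the set of lower bounds of the upper bounds of $X$; hence the closed sets are the ``normal'' down-sets, $(\gamma[\mathcal P(A)], \cap, \cup_\gamma)$ is the Dedekind--MacNeille completion of the lattice reduct of $\m A$, and the map $a \mapsto {\downarrow} a = \gamma(\{a\})$ is the canonical lattice embedding which preserves all existing joins and meets.

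Next I would check that this same map is a GBI-embedding, i.e.\ that it preserves $\cdot, \backslash, /, \to, 1$ and the bounds, not merely the lattice structure. For $1$: $\gamma(\{1\}) = {\downarrow}1$ is by construction the monoid unit $E = \{1\}$ closed up, so the unit is preserved. For $\cdot$: since $\gamma(\{a\}) \circ_\gamma \gamma(\{b\}) = \gamma(\gamma(\{a\}) \circ \gamma(\{b\})) = \gamma(\{ab\}) = {\downarrow}(ab)$, using that $\circ = \cdot$ on singletons and that $\gamma$ is a nucleus; the residuals $\backslash, /$ and the Heyting arrow $\to$ are then automatically preserved because in both $\m A$ and $\m W^+$ they are the residuals of operations that are already matched, and residuals are unique. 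Finally $\bot = \gamma(\emptyset)$ maps to the bottom of $\m W^+$, and $\top$ follows. Thus $a \mapsto {\downarrow}a$ embeds $\m A$ into its MacNeille completion $\m W^+$, which lies in \GBI; this is precisely the assertion that \GBI\ is closed under MacNeille completions.

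I do not expect a genuine obstacle here, since all the heavy lifting is done by the cited theorem on Galois algebras of GBI-frames. The only point needing a little care is the verification that the MacNeille embedding preserves the residuated-monoid operations and not just the lattice operations; the cleanest way to handle it is exactly as above, using that $\gamma$ is a nucleus with respect to $\circ$ so that $\gamma(\{a\}) \circ_\gamma \gamma(\{b\}) = \gamma(\{ab\})$, and then appealing to uniqueness of residuals to get $\backslash, /, \to$ for free. One should also note in passing that the completion produced is in fact \emph{perfect}, which is stronger than merely complete, though the corollary as stated only needs completeness.
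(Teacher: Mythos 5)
Your route is exactly the paper's: the paper obtains the corollary from the same theorem by observing that with $W=W'=A$ and $N={\le}$ the Galois algebra $(\gamma[\mathcal P(A)],\cap,\cup_\gamma)$ is the MacNeille completion of the lattice reduct of $\m A$, and it declares the rest immediate; the verifications you write out are precisely the ones left implicit there. One caveat on your added detail: the step where you get $\backslash$, $/$ and $\to$ ``for free'' by uniqueness of residuals is not, as stated, a valid inference. An order-embedding that preserves fusion and the lattice operations need not preserve residuals, because the residual in the larger algebra is the largest element of a \emph{larger} universe satisfying the residuation inequality, so a priori it could exceed the image of $a\backslash b$. Here preservation does hold, but it should be checked, e.g., by direct computation from the definition of the Galois algebra: ${\downarrow}a\backslash{\downarrow}b=\{z: xz\le b\ \text{for all } x\le a\}=\{z: az\le b\}={\downarrow}(a\backslash b)$, and analogously for $/$ and for $\to$ (using $\omt=\wedge$); alternatively one can argue via the join- and meet-density of the image of $a\mapsto{\downarrow}a$ in $\m W^+$. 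With that repair your proof is complete and coincides with the paper's intended argument.
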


We now outline a proof of algebraic cut elimination. We define a ``free'' GBI-frame $\m W_F$ with the property that any inequality $s\le t$ that is satisfied in the Galois algebra $\m W_F^+$ is provable from the rules of Table~\ref{GBIseq} without using the cut-rule. This definition illustrates that the concept of GBI-frame is
flexible and provides frame-semantics for Gentzen proof systems.

Recall that $\fmal_\text{GBI}$ is the absolutely free term algebra over the signature of GBI-algebras, and let $\m W$ be the homomorphic image such that 
$(W,\cdot,1)$ is a free monoid and $(W,\wedge,\top,\bot)$ is a free commutative
monoid with absorbing element $\bot$. Note that $W$ is itself an algebra with
the signature of a GBI-algebra.

A $\cdot,\wedge$-\emph{unary linear polynomial} 
on $W$ is a term with a single designated variable $x_0$ such that only the 
operations $\cdot$ and $\wedge$ appear on the branch from the root to $x_0$. Let
$U$ be the set of all such terms, and define $W'=U\times W$. We use the notation
$u(\_)$ for a polynomial $u\in U$, so e.g., $\_ \cdot y$ is the polynomial 
defined by $u(x_0)=x_0 \cdot y$. The identity polynomial is denoted by $id$.

Now define the relation $N\subseteq W\times W'$ by
\[
x \mathrel{N} (u,y) \qquad\text{iff}\qquad u(x) \le y\text{ is provable from Table~\ref{GBIseq}.}
\]
Then
\begin{gather*}
x \cdot y \mathrel{N} (u,z)\text{ \ iff \ } u(x \circ y) \le z \text{ \ iff \ }x \mathrel{N} (u(\_ \cdot y), z)\text{ \ iff \ }y \mathrel{N} (u(x \cdot \_), z),
\\
x \wedge y \mathrel{N} (u,z)\text{ \ iff \ } u(x {\wedge} y) \le z\text{ \ iff \ }x \mathrel{N} (u(\_ {\wedge} y), z)\text{ \ iff \ }y \mathrel{N} (u(x {\wedge} \_), z).
\end{gather*}
Hence we define 
\begin{center}
$x\circ y=\{x\cdot y\}$\qquad $E=\{1\}$\qquad $x\omt y=\{x\wedge y\}$\\
$(u,y) \rdd x= \{(u(\_ \cdot x), y)\}$\qquad $x \ldd (u,y)= \{(u(x \cdot \_),y)\}$\\
$(u,y) \Rightarrow x= \{(u(\_ \wedge x), y)\}$\qquad $x \Leftarrow (u,y)= \{(u(x \wedge \_), y)\}$ \\
$\m W_F=(W,W',N,\circ,E,\ldd,\rdd,\omt,\Rightarrow,\Leftarrow)$.
\end{center}
It is straightforward to show that $\m W_F$ satisfies (A), (E), (N), (D), (I), so it is a
GBI-frame.

The following result is at the core of algebraic cut-elimination. For $y\in W'$
we define $y^\triangleleft=\{x\in W : x N y\}$.

\begin{lemma} \label{lem:homo}
Let $h:\m W\to \m W_F^+$ be the unique homomorphism that extends the assignment
$h(x_i)=(id,x_i)^\triangleleft$. Then $t\in h(t)\subseteq (id,t)^\triangleleft$ for all terms $t\in W$.
\end{lemma}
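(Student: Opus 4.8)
The plan is to prove the two inclusions $t\in h(t)$ and $h(t)\subseteq (id,t)^\triangleleft$ simultaneously by structural induction on the term $t\in W$, exploiting the fact that $h$ is a homomorphism from $\m W$ (the free GBI-signature algebra modulo associativity of $\cdot$ and commutativity/associativity of $\wedge$) into the Galois algebra $\m W_F^+$. The base cases are the generating variables $x_i$ and the constants $1,\top,\bot$. For a variable $x_i$, by definition $h(x_i)=(id,x_i)^\triangleleft=\{x\in W: x N (id,x_i)\}=\{x: x\le x_i \text{ provable}\}$; this set contains $x_i$ itself since $x_i\le x_i$ is the identity axiom, and it is trivially contained in $(id,x_i)^\triangleleft$. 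For the constants, one checks the analogous statements using the axioms $u(\bot)\le x$, $x\le\top$, and the rules that allow $1$ to be inserted or deleted.

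For the inductive step one treats each of the operation symbols $\wedge,\vee,\to,\cdot,\backslash,/$ in turn. In each case I would first use that $h$ is a homomorphism to rewrite $h(s\bullet t)$ as $h(s)\bullet^{\m W_F^+} h(t)$, then unfold the definition of that operation in the Galois algebra (e.g. $X\circ_\gamma Y=\gamma(X\circ Y)$, $X\backslash Y=\{z: X\circ\{z\}\subseteq Y\}$, $X\to Y=\{z: X\omt\{z\}\subseteq Y\}$, $X\cup_\gamma Y=\gamma(X\cup Y)$). The membership claim $s\bullet t\in h(s\bullet t)$ then follows from the induction hypotheses $s\in h(s)$, $t\in h(t)$ together with the defining clauses of $\circ,\omt$ in $\m W_F$ (namely $x\circ y=\{x\cdot y\}$, $x\omt y=\{x\wedge y\}$), plus an application of the relevant ``right" sequent rule ($[\cdot_{lr}]$, $[\wedge_r]$, $[\vee_r]$, $[\backslash_r]$, $[/_r]$, $[\to_r]$) which says precisely that the syntactic operation lands inside the appropriate set. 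For the second inclusion $h(s\bullet t)\subseteq (id,s\bullet t)^\triangleleft$, I would take an arbitrary $x\in h(s\bullet t)$ and show $x\le s\bullet t$ is provable; here the induction hypotheses $h(s)\subseteq (id,s)^\triangleleft$ and $h(t)\subseteq (id,t)^\triangleleft$ give provability of sequents about $s$ and $t$, and the corresponding ``left" rule together with the definition of $\gamma$ and $N$ on pairs $(u,y)$ lets one assemble a proof of $x\le s\bullet t$. The translation between membership in $\gamma(X)$ and the relation $N$ is what makes each left-rule applicable: $z\in\gamma(X)$ means $z N y$ whenever $X N y$, which is exactly the form in which the Gentzen-style rule consumes the premise.

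The step I expect to be the main obstacle is the $\backslash$, $/$ and $\to$ cases in the direction $h(s\bullet t)\subseteq(id,s\bullet t)^\triangleleft$, because there the Galois-algebra operation is not defined via $\gamma$ of a product but as a residual, so a priori an element of $h(s\backslash t)=h(s)\backslash^{\m W_F^+}h(t)$ is only known to satisfy $h(s)\circ\{x\}\subseteq h(t)$ in $\m W_F^+$, and one must turn this into a single provable sequent $x\le s\backslash t$. The trick is to instantiate the inclusion $h(s)\circ\{x\}\subseteq h(t)\subseteq(id,t)^\triangleleft$ at the particular element $s\in h(s)$ (available by the induction hypothesis's membership half), obtaining $s\cdot x\in(id,t)^\triangleleft$, i.e. $s\cdot x\le t$ is provable, and then apply $[\backslash_r]$ to conclude $x\le s\backslash t$; the cases of $/$ and $\to$ are symmetric, using $[/_r]$ and $[\to_r]$ respectively. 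This is exactly the point where the interplay between the two halves of the statement is essential, which is why the induction must carry both conjuncts together. The remaining cases ($\wedge,\vee,\cdot$) are routine and follow the same pattern with the nuclear and distributive-nuclear conditions (N), (D) used to handle the closure operator $\gamma$.
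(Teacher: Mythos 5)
Your overall strategy coincides with the paper's: a structural induction carrying both conjuncts, with the base case settled by the axiom $x_i\le x_i$, and your treatment of the inclusion half for the residual connectives is exactly the paper's argument (instantiate $h(s)\circ\{x\}\subseteq h(t)\subseteq(id,t)^\triangleleft$ at the particular element $s\in h(s)$ and apply $[\backslash_r]$, resp. $[/_r]$, $[\to_r]$). The genuine gap is in the other half, which you dismiss as routine: for $\bullet\in\{\to,\backslash,/\}$ the membership claim $s\bullet t\in h(s\bullet t)$ does \emph{not} follow from the right rules. For instance, $s\to t\in h(s)\to h(t)$ means $h(s)\omt(s\to t)\subseteq h(t)$, i.e., $s'\wedge(s\to t)\in h(t)$ for \emph{every} $s'\in h(s)$, not just for $s'=s$, and $[\to_r]$ gives no grip on this, since $h(t)$ is an abstract $\gamma$-closed set of which we only know that it contains $t$. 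The paper handles precisely this point by combining (a) the inclusion half of the induction hypothesis for $s$, which yields provability of $s'\le s$ for all $s'\in h(s)$, (b) the membership half for $t$, and (c) the fact that every $\gamma$-closed set, in particular $h(t)$, is an intersection of sets of the form $(u,r)^\triangleleft$: from $t\in h(t)\subseteq(u,r)^\triangleleft$ one gets provability of $u(t)\le r$, and then the \emph{left} rule $[\to_l]$ yields $u(s'\wedge(s\to t))\le r$, hence $s'\wedge(s\to t)\in(u,r)^\triangleleft$ for every such pair, so $s'\wedge(s\to t)\in h(t)$. This is the delicate part of the lemma (and the reason both conjuncts must be carried through the induction); your sketch identifies the ``main obstacle'' in the wrong direction and offers no argument for this one.

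Relatedly, your general description swaps the roles of the rules: the membership halves are the ones that need the left rules $[\wedge_l]$, $[\vee_l]$, $[\to_l]$, $[\backslash_l]$, $[/_l]$ together with the representation of closed sets via the $(u,r)^\triangleleft$ (except for $\cdot$, and for $\wedge$ if one uses $\gamma(X\omt Y)$, where $X\subseteq\gamma(X)$ already suffices), while the inclusion halves use the right rules $[\wedge_r]$, $[\vee_r]$, $[\cdot_{lr}]$, $[\to_r]$, $[\backslash_r]$, $[/_r]$ plus the $\gamma$-closedness of $(id,s\bullet t)^\triangleleft$. Once you repair the membership half of the residual cases along the lines above, the rest of your plan goes through and matches the paper's proof.
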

\begin{proof}
This is proved by induction on the structure of $t$. For variables $x_i$ we
have $x_0\in h(x_0)$ since $x_0\le x_0$ is an axioms.

For the induction step, assume $s\in h(s)\subseteq (id,s)^
\triangleleft$ and $t\in h(t)\subseteq (id,t)^\triangleleft$. We only check 
that 
\[
s\to t\in h(s\to t)\subseteq (id,s\to t)^\triangleleft,
\] 
since the remaining cases are similar. 

Since $h$ is a homomorphism, we have
\begin{align*}
h(s\to t) & = h(s)\to h(t) \\
& = \{z\in W:h(s)\omt z\subseteq h(t)\}.
\end{align*}
 From $s\in h(s)$ we deduce that 
$z\in h(s\to t)$ implies 
\begin{align*}
s\omt z & = \{s\wedge z\} \\
& \subseteq  h(t)\subseteq (id,t)^\triangleleft.
\end{align*}
 Therefore $s\wedge z\le t$ is Gentzen provable, hence
$z\le s\to t$ is also provable by $[\to_r]$. We conclude that $z\in(id,s\to t)^\triangleleft$ and thus 
\[
h(s\to t)\subseteq (id,s\to t)^\triangleleft.
\]
Next, let $(u,r)\in W'$ and suppose $h(t)\subseteq (u,r)^\triangleleft$. Since $t\in h(t)$ it follows that $u(t)\le r$ is Gentzen provable. Consider any 
$s'\in h(s)$, whence $s'\le s$ is Gentzen provable. From $[\to_l]$ we see that 
\[ u(s'\wedge (s\to t))\le r \]
 is Gentzen provable. Therefore $s'\wedge (s\to t)\in (u,r)^\triangleleft$.

Since every $\gamma$-closed set is an intersection of sets of the form $(u,r)^\triangleleft$, it follows that $s'\wedge(s\to t)\in h(t)$ for all $s'\in h(s)$.
We conclude that
 \[ h(s)\omt(s\to t)\subseteq h(t), \]
 hence by definition of $\to$ in the Galois algebra it is the case that \newline
 $s\to t\in h(s)\to h(t)=h(s\to t)$. 
\hfill\qed
\end{proof}

\begin{theorem} For any $s,t$  
  the following statements are equivalent.
\begin{upbroman}
\item $\mathsf{GBI}\models s\le t$,
\item $\m W_F^+\models s\le t$,
\item $s\le t$ has a cut-free proof using the rules in Table~\ref{GBIseq}.
\end{upbroman}
\end{theorem}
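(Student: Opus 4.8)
The plan is to prove the three statements equivalent by establishing the cycle (i) $\Rightarrow$ (ii) $\Rightarrow$ (iii) $\Rightarrow$ (i), wiring together three ingredients that are already in place: the theorem that the Galois algebra of any GBI-frame is a (complete, perfect) GBI-algebra, the soundness Lemma~\ref{lem:soundness}, and the homomorphism Lemma~\ref{lem:homo}. For (i) $\Rightarrow$ (ii), I would first record that $\m W_F$ really is a GBI-frame, i.e.\ that the operations $\circ,\omt,\ldd,\rdd,\Rightarrow,\Leftarrow$ read off from the provability relation $N$ satisfy clauses (A), (E), (N), (D), (I). Associativity of $\cdot$ and associativity and commutativity of $\wedge$ in the monoid $\m W$, together with the way the $\cdot,\wedge$-linear polynomials $u(\_)$ compose, reduce every one of these to the displayed ``iff''-chains that precede the definition of $\m W_F$, so the verification is routine. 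By the Galois-algebra theorem, $\m W_F^+$ is then a GBI-algebra, and hence every inequality valid in all GBI-algebras is in particular valid in $\m W_F^+$.

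The implication (iii) $\Rightarrow$ (i) is pure soundness: Lemma~\ref{lem:soundness} asserts that the rules in Table~\ref{GBIseq} are valid quasiequations of $\mathsf{GBI}$, so any $s\le t$ derivable from them — in particular any cut-free derivable one — holds in every GBI-algebra, which is exactly (i). The crux is (ii) $\Rightarrow$ (iii), and it is essentially a direct reading of Lemma~\ref{lem:homo}. Let $h:\m W\to\m W_F^+$ be the canonical homomorphism with $h(x_i)=(id,x_i)^\triangleleft$. Since $s,t\in W$ and $\m W_F^+\models s\le t$ holds under every assignment, in particular $h(s)\le h(t)$ in $\m W_F^+$, i.e.\ $h(s)\subseteq h(t)$. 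Lemma~\ref{lem:homo} supplies $s\in h(s)$ and $h(t)\subseteq(id,t)^\triangleleft$, so chaining gives $s\in h(s)\subseteq h(t)\subseteq(id,t)^\triangleleft$, that is $s\mathrel{N}(id,t)$. Unwinding the definition of $N$, this says precisely that $id(s)\le t$, i.e.\ $s\le t$, has a cut-free proof from the rules of Table~\ref{GBIseq}.

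I do not expect a genuine obstacle here: all the real work sits in Lemma~\ref{lem:homo} (an induction on term structure, one case per connective, invoking the matching left/right sequent rule, exactly as sketched for $\to$) and in the earlier Galois-algebra theorem, and the present statement only has to assemble these plus the trivial soundness direction. The one point that deserves a sentence of care in (ii) $\Rightarrow$ (iii) is why the single canonical assignment $h$ is enough: it is because $h$ is a homomorphism out of the (almost) free term algebra $\m W$, so $h(s)$ and $h(t)$ are literally the terms $s$ and $t$ evaluated in $\m W_F^+$, and validity under all assignments certainly includes this one. Finally I would remark that the same argument — replacing $N$ by the provability relation of the calculus enriched with a set of simple structural rules — yields cut elimination for every subvariety of $\mathsf{GBI}$ axiomatized by such rules (including $\mathsf{BI}$), since adding those rules only enlarges $N$ while preserving (A)--(I), so $\m W_F$ remains a GBI-frame and Lemma~\ref{lem:homo} goes through verbatim.
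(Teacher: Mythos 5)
Your proposal is correct and follows essentially the same route as the paper: the cycle (i)$\Rightarrow$(ii)$\Rightarrow$(iii)$\Rightarrow$(i), with (i)$\Rightarrow$(ii) from the fact that $\m W_F$ is a GBI-frame so $\m W_F^+$ is a GBI-algebra, (ii)$\Rightarrow$(iii) from the chain $s\in h(s)\subseteq h(t)\subseteq (id,t)^\triangleleft$ supplied by Lemma~\ref{lem:homo} and the definition of $N$, and (iii)$\Rightarrow$(i) from the soundness Lemma~\ref{lem:soundness}. Your closing observation that evaluating under the single canonical assignment $h$ suffices, and the remark on extending the argument to simple structural rules, are consistent with the paper's treatment.
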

\begin{proof}
(i) implies (ii) since $\m W_F^+$ is a GBI-algebra. Next,  (ii) 
 implies  $h(s)\subseteq h(t)$,  so 
 $s\in h(s)\subseteq h(t)\subseteq(id,t)^\triangleleft$ (Lemma \ref{lem:homo}). Hence $s N (id,t)$ and therefore (iii) follows from the definition of $N$. Finally, 
(iii) implies (i)  via Lemma \ref{lem:soundness}. 
  \hfill\qed
\end{proof}

\noindent
It takes more work to obtain a decision procedure for  
  well-behaved subvarieties of \algcl{nGBI} (cf. \S~\ref{sec:posdec}). 
 The problem is that
the rule $[\wedge_{idem}]$ could lead to an infinite branch during proof search.
As for intuitionistic logic one can restrict to 3-reduced sequents, but one has to define a suitable measure for the length of a sequent to ensure that the sequents in the premise of a rule do not increase in length. For a detailed
discussion on how to resolve these issues we refer to Galatos and Jipsen \cite{GalatosJ} (see, however, Footnote \ref{ft:ramaerror}). 

\begin{rem} \label{rem:posfep}
The paper in question also proves the finite model property for all subvarieties defined by identities in the language of $\{\cdot,\wedge,\vee,1\}$,
showing that they have a decidable equational theory. Moreover,
if any of these varieties is integral  (i.e., $x\vee 1=1$) then the finite embeddability property holds, hence such varieties have a decidable universal theory.
\end{rem}

\section{(B)BI and Separation Logic} \label{SeparationLogic}


We have not said much so far about the formalism that is largely responsible for the popularity of (B)BI  in theoretical computer science: that is, about \emph{separation logic} (SL). It is a form of Hoare logic for programs involving shared mutable data structures. 
We first recapitulate the basic ideas of general-purpose formalisms in \S~\ref{HoareIntro}, then we discuss specific issues addressed by SL and its cousins in \S~\ref{SepSpecific} and finally get into the details of suitable Hoare-style reasoning in \S~\ref{sec:frame}. 

While we believe this section is reasonably self-contained, the overview---aimed mostly at readers with limited background in program verification---must remain  somewhat sketchy by nature. One can find more information in specialized overviews such as an early one by Reynolds \cite{Reynolds02:lics} or a more recent one by O'Hearn \cite{OHearn12:nato}. 

\subsection{Basic Ideas of Floyd-Hoare Logic(s)} \label{HoareIntro}

\emph{Floyd-Hoare logic}  \cite{Floyd67,Hoare69}, most commonly abbreviated to \emph{Hoare logic}, 
 allows both writing specifications of programs and reasoning about their correctness using simple compositional rules. 
 Its central notion is that of a \emph{partial correctness assertion}  (a.k.a. a \emph{Hoare triple}) 
  of the form $\{P\}C\{Q\}$, where \emph{precondition} $P$ and \emph{postcondition} $Q$ are logical predicates, written in some logical formalisms---it might be an extension of ordinary first-order logic or a variant or extension of (B)BI---and $C$ is a \emph{command}, to be made specific below. 
   A Hoare triple is \emph{valid} if whenever $C$ is executed in a state satisfying $P$, it will terminate in a state satisfying $Q$. 

Even when programs are not allowed to directly manipulate pointers, 
 Hoare logic can be puzzling for a beginner.   Textbook examples show it is easy to get the rules wrong for commands as simple assignment, especially when the assignment formula is allowed to involve the old value of a variable being assigned. Consider $\{\top\}X:=a\{X=a\}$. It might seem a valid triple scheme until one realizes that the expression $a$ can be, for example, $X + 1$.

Let us begin with a typical toy programming language IMP  used both in today's standard monographs \cite{PierceSF,Winskel93} and classical references like Hoare's original paper \cite{Hoare69}.
It involves \emph{assignment}, \emph{sequencing}, \emph{conditionals} and \emph{loops}. 
IMP is not doing any (de)allocation, heap access, concurrency or operations on other \emph{shared mutable data structures}; we will turn our attention to these below. 
An execution of an IMP program consists in manipulating  (\emph{global}) \emph{program ``variables''}\footnote{\label{ft:variables} A logician may object whether the word \emph{variables} is really appropriate here. Sometimes the term \emph{(storage) locations} is used instead \cite{Winskel93}, but as the reader will recall, we already used this  name in \S~\ref{HeapModel} for \emph{pointer} labels and will continue to do so below.} $\mathit{PVar} = \{X_0, X_1, X_2, \dots\}$ by assigning to them arithmetical expressions built using basic arithmetical operations (addition, optionally also multiplication and/or truncated subtraction) from $\mathit{PVar}$ and numerical constants. Conditionals and loops are guarded by boolean expressions $b_0$, $b_1, \dots$ which are built using standard boolean connectives from atoms comparing arithmetical expressions for (in)equality. Finally, in the language of assertions (but not  IMP itself!) we also allow class of another genuine,  \emph{quantified assertion variables} $\mathit{AVar} = \{v_0, v_1, v_2, \dots\},$ for which one can substitute arithmetical expressions. 
 This is enough to characterize all commands of IMP by axioms in Table \ref{fig:HoareImp}.

\begin{table}
\begin{center}
$\inferrule{}{\{P\}\cf{SKIP}\{P\}}$ \qquad $\inferrule{}{\{P[a/X]\}X:=a\{P\}}$  \qquad $\inferrule{\{P\}C_1\{Q\} \\ \{Q\}C_2\{R\}}{\{P\}C_1;C_2\{R\}}$ 

\medskip

$\inferrule{\{P \wedge b\}C_1\{Q\} \\ \{P \wedge \neg b\}C_2\{Q\}}{\{P\}\cf{IF } b \cf{ THEN } C_1 \cf{ ELSE } C_2\{Q\}}$ \quad $\inferrule{\{P \wedge b\}C\{P\}}{\{P\}\cf{WHILE }b\cf{ DO }C\cf{ OD}\{P \wedge \neg b\}}$.

\end{center}
\caption{\label{fig:HoareImp} Axioms and rules for commands of IMP.}

\end{table}

 
 In addition, manipulation of the Hoare calculus requires rules that are, interestingly, often called \emph{structural}\ \cite{Kleymann99:fac,OHearnRY01,Reynolds02:lics,Calcagno2007,OHearn12:nato}. Here, this word is taken in a somewhat different meaning than the one known to proof theorists. Namely, it denotes the rules which allow modifying pre- and post-conditions, as opposed to specifying complex program expressions in terms of their constituent subprograms  along the lines of Table \ref{fig:HoareImp}. Nevertheless, as we are going to see in \S~\ref{sec:frame}, the central rule of separation logic connects this meaning of ``structurality'' with the one familiar to substructural logicians! 

  Perhaps the most well-known ``structural'' rule is \emph{consequence}:

\begin{center}
$\inferrule*[right=Conseq]{\vDash P' \to P \\  \{P\}C\{Q\} \\ \vDash Q \to Q'}{\{P'\}C\{Q'\}.}$
\end{center}
\noindent
Recall that the language of assertions includes quantified arithmetical statements. Hence, we use the semantic theoremhood $\vDash$ rather than the syntactic theoremhood $\vdash$ in the statement of this rule: there are obvious G\"odelian limitations meaning there cannot be any complete yet recursively axiomatizable notion of proof (see however \S~\ref{sec:DecRev} below). 
Obviously, this  entails that
 in practical applications one can only look for decision procedures for well-behaved fragments. Such limitative results also open up application areas for partially automated proof assistants as an alternative to fully automated tools.

\begin{rem} \label{rem:semantics}
Semantics, either \emph{operational} or \emph{denotational}  \cite{Winskel93}, can be given using the notion of \emph{store} (stack) as introduced in \S~\ref{HeapModel}.\footnote{\label{ft:partial} Note that the assertion for \cf{WHILE} is valid only when read as a \emph{partial} one. 
 That is,  $\{P\}C\{Q\}$ is read as \emph{\textbf{if} $C$ is started in a store satisfying $P$ \textbf{and} terminates, \textbf{then} any store it terminates in satisfies $Q$}. Under this reading, for example, $\{\top\}C\{\bot\}$ simply specifies that $C$ never terminates, regardless of the original values of program variables. An alternative reading, usually denoted as $[P]C[Q]$, is the \emph{total} one:  \emph{\textbf{if} $C$ is started in a store satisfying $P$, \textbf{then} it  terminates \textbf{and} any store it terminates in satisfies $Q$}.}
However, some proponents of Hoare-style formalisms point out that the above set of rules can be seen as semantics in its own right; hence the name \emph{axiomatic semantics} (cf., e.g., \cite[Ch. 6]{Winskel93}). One can prove soundness theorems connecting it to more standard semantics. There is even a form of completeness available, although one has to tread carefully here due to the G\"odelian limitations mentioned above: 
 namely, these rules allow deriving suitable \emph{weakest preconditions} and consequently all valid IMP-triples \emph{in the presence of an oracle for elementary arithmetic}. This is called \emph{relative completeness} \cite{Cook78} (see also \cite{Clarke85}, \cite[Ch. 7]{Winskel93}). We will continue the discussion of semantical aspects of correctness assertions in Example \ref{ex:frameside} and Remark \ref{rem:semframe} below.  
\end{rem}

\noindent
The popularity of Hoare logics, however, does not stem so much from theoretical results like relative completeness (available only in a restricted context anyway \cite{Clarke85}), but from its applications to program specification and verification. 
 For example, they allow (semi-)automated verification of programs \emph{annotated}/\emph{decorated} with assertions via extraction of \emph{verification conditions} \cite[\S~7.4]{Winskel93}, \cite{PierceSF}. There are programming languages with specification constructs built-in, like Eiffel \cite{Meyer97} or Dafny \cite{Leino2010}, but for scalable analysis of industrial-size code in a general-purpose language one uses  analysis platforms like Frama-C \cite{Cuoq2012},  allowing annotations written in an external specification language (e.g., ACSL).

So much for bird's eye view of general-purpose (Floyd-)Hoare logic(s). Now where and how does the connection with (B)BI enter the picture?

\subsection{Heap(let)s, Allocation and Separation} \label{SepSpecific}

As we have already indicated, specification and automated verification become particularly problematic in the presence of shared mutable data structures. 
 O'Hearn, Reynolds and Yang \cite{OHearnRY01} summarized this as follows:

\begin{quote}
The main difficulty is not one of finding an in-principle adequate axiomatization of pointer operations; rather there is a mismatch between simple intuitions about the way that pointer operations work and the complexity of their axiomatic treatments. For example, pointer assignment is operationally simple, but when there is aliasing, arising from several pointers referring to a given cell, then an alteration to that cell may affect the values of many syntactically unrelated expressions. 
\end{quote}
The idea that substructural connectives can help axiomatic approaches to the assertion language can be traced back  to Burstall  \cite{Burstall72:mi}. 
Much later, Reynolds \cite{Reynolds00:intuitionistic,Reynolds02:lics} and Ishtiaq and O'Hearn \cite{IshtiaqOH01:popl} turned the idea into a working, well-defined Hoare-style language. These papers also provide
 more references to earlier attempts at a suitable verification logic (cf. \S~\ref{sec:competing}). 

 To understand  the advantages of assertions expressed in (an extension of) (B)BI, let us continue the above quote from O'Hearn et al. \cite{OHearnRY01}: 

\begin{quote}
We suggest that the source of this mismatch is the global view of state taken in most formalisms for reasoning about pointers.  \dots To understand how a program works, it should be possible for reasoning and specification to be confined to the cells that the program actually accesses. The value of any other cell will automatically remain unchanged. 
\end{quote}
 \noindent
 Thus, substructural connectives are used to express and combine assertions about  disjoint portions of heap: an assertion  talks about \dots

\begin{quote}
 \dots\ a heaplet rather than the global heap, and a spec $\{P\} C \{Q\}$ says that if $C$ is given a heaplet satisfying $P$ then it will never try to access heap outside of $P$ (other than cells allocated during execution) and it will deliver a heaplet satisfying $Q$ if it terminates. \cite{Berdine2006}
\end{quote}
It is easy to guess now that assertions about disjoint heaplets are combined using the fusion $*$ of commutative (B)BI, which in this community has alternative names like the \emph{spatial conjunction}, the \emph{separating conjunction} or the \emph{independent conjunction}.  
Its residual $\bi$ is commonly referred to as the \emph{magic wand}  or \emph{separating implication}: 
 $P \bi Q$ means \emph{whenever the present heaplet is extended with a disjoint heaplet satisfying $P$, the resulting heaplet satisfies $Q$}. The use of $\bi$ in Hoare triples of separation logic (SL) seems to have been proposed first by  Ishtiaq and O'Hearn \cite{IshtiaqOH01:popl}.   Just like its additive counterpart, 
   $\bi$ is particularly useful when specifying and deriving \emph{weakest preconditions}.\footnote{On the other hand, the semantic clause of $\bi$ quantifies over the collection of all possible disjoint extensions satisfying $Q$, which can be 
 infinite, and is problematic from a model checking point of view. For this reason, there is a line of research  dealing with \emph{adjunct elimination} for separation logic and related formalisms  \cite{Lozes04:entcs,Dawar04,Calcagno07:popl,CalcagnoDY10:ic}.}
 
 For a concrete example of a suitable programming language equipped with a Hoare logic, let us take our inspiration from Reynolds \cite{Reynolds02:lics}. And for a semantic intuition, let us  return to models of (B)BI discussed in \S~\ref{HeapModel}, especially the \emph{stack-and-heap model} \cite[\S~2.2]{BrotherstonK14:jacm}  introduced at the end of that subsection. As we pointed out therein, the name \emph{store} also used by, e.g., Demri and Deters \cite{DemriD15:jancl} would be perhaps more adequate, hence we speak about  the \emph{store-and-heap model} instead. In order to allow full pointer arithmetic, let us identify locations,  record values and store values, i.e., (in the notation of  \S~\ref{HeapModel}) take $L = RV = \mathit{Val} = \nat$.  The advantage of operational semantics based on such a model (i.e., on a set-theoretic product of the collection of stores and the collection of heaps) is that it allows extending IMP 
   with dynamic commands 
  in a fully orthogonal way. Of course, as we incorporate IMP with its assertion language, G\"odel's Incompleteness Theorem still applies; 
 in \S~\ref{sec:DecRev} below, we discuss restrictions allowing more positive results. 

We take thus 
 IMP from \S~\ref{HoareIntro} and add 
 primitives for
 \begin{center}
  \emph{allocation} $X := \cf{CONS}(a_0, \dots, a_{n-1})$, \emph{lookup} $X := [a]$, \emph{mutation} $[a] := a'$, and \emph{deallocation} $\cf{DISPOSE}\ a$. 
  \end{center}
   We also take the entire 
 assertion language introduced in  \S~\ref{HoareIntro}  and allow the use of $*$ and $\bi$ to form new assertions. Furthermore, we extend the language of assertions with new atomic constructs:
\begin{itemize}
\item a constant $emp$ true at any pair $(s,\emptyset)$ where $\emptyset$ is the empty heap(let) and
\item a family of  \emph{pointer atoms} $a_1 \mapsto a_2$ which hold at those pairs $(s,h)$ where $h$ is a singleton heap(let), i.e., defined on exactly one location, which happens to be $\hat{s}(a_1)$ ($\hat{s}$ denotes the inductive extension of $s$ to arbitrary arithmetical expressions) and $h(\hat{s}(a_1)) = \hat{s}(a_2)$.
\end{itemize}

 The presence of $*$ in the language means we can describe any concrete finite heap using expressions of the form 
$
(a_1 \mapsto e_1) * \dots * (a_n \mapsto e_n).
$ 
 Given $\overline{a} := a_0, \dots, a_{n-1}$, let us also introduce an abbreviation for $e$ pointing to the head of a dynamic list storing numbers denoted by $\overline{a}$:
\[
 e \mapsto_{\ell} \overline{a} := (e \mapsto a_0 ) * (e +1 \mapsto a_1) * \dots * (e + (n  -1)\mapsto a_{n-1}).
 \]
 Reynolds \cite{Reynolds02:lics} or Ishtiaq and O'Hearn \cite{IshtiaqOH01:popl} point out that the validity of $\{P\}C\{Q\}$ (whether as \emph{partial} or \emph{total} correctness assertions, cf. Rem. \ref{rem:semantics} and especially Footnote \ref{ft:partial}) 
  should entail not only that whenever $P$ holds at a pre-execution state $(s,h)$,  then $Q$ would  hold after the execution of $C$, but also that $C$ executed at $(s,h)$ is \emph{safe}, i.e., does not lead to a memory fault. That is, it should never try to mutate, lookup or dispose of a cell which has not been previously allocated.\footnote{When specifying this property in a proof assistant, one can do it in a (co)inductive fashion. In a metatheory allowing excluded middle at least for assertions, one can work with two inductive properties \texttt{fault} and \texttt{no\_fault} and show (using excluded middle) that they are complementary, i.e., that for any $(s,h)$  and $C$ exactly one of the two holds. Alternatively, one can stay within constructive metatheory by making  \texttt{no\_fault} coinductive. 
 See \cite{Paulus2016} for an example of a student-oriented formalization in a proof assistant discussing these issues.} 
 
 Under this reading, we can salvage all of the rules for IMP discussed in \S~\ref{HoareIntro}, though some of the clauses (especially the one for $\cf{WHILE}$) require more work due to the safeness requirement. Moreover, we can finally give the \emph{axiomatic semantics} of new, dynamic constructs in Table \ref{fig:smallax}.
  Note here how we use the auxiliary, quantifiable variables of the metalang\-uage to keep original values of $X$ in the allocation and lookup clauses. Recall again from \S~\ref{HoareIntro} that in distinguishing program ``variables'' $PVar$ from  variables of  the metalanguage $AVar$ we follow standard references like the Winskel book \cite{Winskel93} rather than, e.g., the presentation of Reynolds \cite{Reynolds02:lics}. A similar approach to ours is also taken by Demri and Deters \cite{DemriD15:jancl}.
  
 \begin{table} 
$$\inferrule{}{\{ (X = v) \wedge emp\} \, X := \cf{CONS}(\overline{a}) \, \{ X \mapsto_\ell (\overline{a}[v/X])\}},$$ 

$$\inferrule{}{\{(X = v) \wedge (a \mapsto v')\} \, X := [a] \, \{ (X = v') \wedge a[v/X] \mapsto v'\}},$$

$$\inferrule{}{\{\exists v. a \mapsto v\} \, [a] := a' \, \{a \mapsto a'\}},$$  

$$\inferrule{}{\{\exists v. a \mapsto v\} \, \cf{DISPOSE}\ a\, \{ emp \}}.$$  

\caption{\label{fig:smallax} \emph{Small} or \emph{local} axioms for dynamic commands.}

\end{table}

\subsection{Local Axioms, Global Specifications and The Frame Rule} \label{sec:frame}

The axioms in Table \ref{fig:smallax} are  \emph{local} or  \emph{small} (cf., e.g., \cite{OHearn12:nato}). 
 A specification for $C$ is local if only involves variables used by $C$ and what O'Hearn called the \emph{footprint} of $C$: parts of the heap used during its execution. 
 While it is easy to see why the modular approach postulated at the beginning of this section requires such small axioms, it also calls for suitable \emph{structural} rules in the sense already mentioned in \S~\ref{HoareIntro}.  
 We need a rule which allows deriving triples of the form $\{P*R\}C\{Q*R\}$ from triples of the form $\{P\}C\{Q\}$. In other words, the central \emph{structural} rule of separation logic must be very much a structural rule in ordinary proof-theoretic sense: an introduction rule for  $*$. 

Nevertheless, without suitable restrictions, such an inference can be unsound if  $C$ involves an allocation, lookup or assignment to some $X$ (i.e., with $X$ on the left side of ``$:=$'') appearing in $R$. 
  
  \begin{expl} \label{ex:frameside} 
  Consider, for example, $C$ being $X := \cf{CONS}(2)$, $P$ and $Q$ being the constantly true assertion $\top$ and $R$ being the assertion $X = 2$. We do have that $\vDash \{ \top \} X := \cf{CONS}(2) \{ \top \}$,\footnote{In our ideal mathematical world, where heaps can be arbitrarily large as long as they are finite, allocation never leads to a memory fault (unlike lookup, mutation and deallocation).} but 
\begin{equation} \label{eq:consfails}
 \nvDash \{ \top * (X = 2) \} X := \cf{CONS}(2) \{ \top * (X = 2) \}.
 \end{equation} 
 
 \noindent
 To understand \refeq{eq:consfails}, recall that the execution of $X := \cf{CONS}(\overline{a})$ in $(s,h)$ can transition to any $(s',h')$, where $h'$ is obtained by extending $h$ with a contiguous interval of fresh heap addresses pointing at the (values denoted at $s$ by) subsequent elements of $\overline{a}$, and $s'$ is just $s$ modified at $X$ to store the newly allocated address of the first element of the list. In the case of a singleton list $\overline{a} = [2]$, we can start with $h$ being empty and $h'$ consisting, e.g., of a single pair $(1, 2)$, meaning that $s'(X) = 1$, even if $s(X) = 2$. We thus obtain an example of a (non-faulting) execution starting in a state satisfying the precondition and resulting in a state where the postcondition fails---a counterexample to the validity of the triple.
 \end{expl}
 Define $appear(R)$ as the set of program variables syntactically appearing\footnote{Of course, only \emph{free} occurrences matters. But in our language there are no binders for elements of $PVar$ (unlike $AVar$).} in the assertion $R$ and the predicate \emph{modifies} as shown in Table \ref{fig:modi}. The \emph{frame rule}\footnote{\label{ft:frameorigin} 
  The term \emph{frame problem} was originally proposed in a classical  1969 paper \cite{McCHay69} on problems of knowledge representation in artificial intelligence. \takeout{Kassios \cite{Kassios11} restates the general formulation: \emph{when formally describing a change in a system, how do we specify what parts of the state of the system are not affected by that change?}}  The realization that such problems 
   arise also in formal specifications using Floyd-Hoare logics predates the development of separation logic. An example, focusing on the issues of object-oriented specifications with inheritance,  is provided by a 1995 paper by Borgida et al. \cite{BorgidaMR95}.} proposed by O'Hearn \cite{OHearnRY01,IshtiaqOH01:popl} is hence
$$
\inferrule*[right=Frame]{\{P\} \, C \, \{Q\} \\ modifies(C) \cap appears(R) = \emptyset}{\{P*R\}\,C\,\{Q*R\}}.
$$

 \begin{table}
\begin{align*}
 \mathit{modifies}(X := a) & =  \mathit{modifies}(X := [a]) \\ & =  \mathit{modifies}(X := \cf{CONS}(\overline{a})) = \{X\}, \\
 \mathit{modifies}(\cf{SKIP}) & =  \mathit{modifies}(\cf{DISPOSE}\ a) \\ & = \mathit{modifies}([a] := a') = \emptyset, \\
 \mathit{modifes}(C_1;C_2) & =  \mathit{modifies}(\cf{IF } b \cf{ THEN } C_1 \cf{ ELSE } C_2) \\ & = \mathit{modifies}(C_1) \cup \mathit{modifies}(C_2), \\
 \mathit{modifies}(\cf{WHILE }b\cf{ DO }C\cf{ OD}) & =  \mathit{modifies}(C).
 \end{align*}
\caption{\label{fig:modi} Predicate \emph{modifies}.}
\end{table}

\begin{rem} \label{rem:semframe}
For a logician and perhaps even more so for an algebraist, the presence of side conditions such as $modifies(C) \cap appears(R) = \emptyset$ is certainly disappointing.  
 This sentiment is shared by theoretical computer scientists: 

\begin{quote}
Hoare logic is bedevilled by complex but coarse side conditions
on the use of variables. \cite{ParkinsonBC06:lics} 
\end{quote}

\noindent
Some readers may be puzzled by the fact that in some references (e.g., \cite{Calcagno2007}) the frame rule is stated \emph{without} side conditions nonetheless. 
 Calcagno et al. \cite[\S~1]{Calcagno2007} claim that such  conditions can be avoided thanks to the absence of ``the traditional Hoare
logic punning of program variables as logical variables'', crediting \cite{BornatCY06:mfps,ParkinsonBC06:lics} with the idea. 

The quote might be somewhat confusing, depending on what \emph{punning of program variables as logical variables}  is taken to mean. 
Recall once again that in our setting we \emph{do} distinguish between program variables (\emph{storage locations}, cf. Footnote~\ref{ft:variables}) $PVar = \{X_1, X_2, X_3 \dots\}$ and quantified \emph{assertion variables} $AVar = \{v_1, v_2, v_3 \dots\}$.  
  More informative descriptions of the problem are 
  \begin{quote}
[Program] variables ought
to be resource, treated formally by the logic and not mumbled over in side
conditions. \cite{BornatCY06:mfps}
\end{quote}
and, still more precisely,
\begin{quote}
Hoare logic
does not allow us to describe the ownership of [program] variables \dots Separation logic divides the store into stack---the variables
used by a program---and heap---dynamically allocated
records---but does not give any formal treatment of the stack. \cite{ParkinsonBC06:lics}
\end{quote}
Returning with this insight to Example \ref{ex:frameside}, we can see that the use of $*$ in pre- and post-conditions is entirely irrelevant in \refeq{eq:consfails}: $X$ lives in the store (``stack''), not on the heap, and any (in)equality statement about its value at a given $(s,h)$ will be also valid at $(s,h')$, for any other $h'$. In other words, equality judgements, including those involving members of $PVar$, ``spread beyond the separating conjunction''; they are \emph{heap-independent}. The syntactic apparatus of SL is indeed not tailored to improve control of the store. Hence, avoiding problematic side conditions not only departs ``from the theoretical tradition in program logic'' \cite{Calcagno2007}, but necessitates restricting/complicating the assertion language and the programming language in question. The setup of  Calcagno et al. \cite{Calcagno2007}, for example, does not cover alteration of programming variables, whereas that of Parkinson et al. \cite{ParkinsonBC06:lics} not only relies on explicit ``ownership'' predicates, but also on side conditions more familiar in algebra and logic, i.e., standard freshness assumptions. Still, such approaches are particularly relevant in applications of separation logic focusing on concurrency rather than pointer reasoning (cf. \S~\ref{sec:concur}).
\end{rem}

\noindent
With the frame rule at our disposal, we can derive \emph{global} specifications from local ones---and in the process use basic axioms and rules of (B)BI. We are providing these inferences explicitly in Table \ref{fig:derglo} below; to save some space, we  abbreviate ``$\exists v.~a~\mapsto~v$'' as ``$a \mapsto \_$''. 
 Such derivations are outlined, e.g., by Reynolds \cite{Reynolds02:lics}, 
 Yang \cite{YangPhD} or O'Hearn \cite{OHearn12:nato}.  Let us discuss their most salient points. 
 
\ifbook\else
  Deallocation is rather straightforward: 
 $
  \{( a \mapsto \_) * R \} \, \cf{DISPOSE}\ a\, \{ R\}$.
Its derivation in Table \ref{fig:derglo} 
uses the fact that $emp$ is the monoidal unit. 
 One arrives almost instantly at the form suitable for \emph{backward reasoning} \cite{IshtiaqOH01:popl}, i.e., 
allowing  \emph{backwards} program annotations, starting from an  arbitrary postcondition $R$. 
  Such reasoning is at the heart of many applications of Hoare-style formalisms, in particular  derivations of \emph{weakest preconditions}, which in turn are essential for relative completeness results (cf. Remark \ref{rem:semantics}). 
 A discussion of such results in the context of separation logic was provided in an early stage of its development  by Yang 
   \cite{YangPhD}. 
\fi

\begin{landscape}
\begin{table}
 \centering
 \[
\textbf{Deallocation: \qquad\qquad} \inferrule*[right=Conseq]{
	\inferrule*[left=Frame]{
		\{ a \mapsto \_ \} \, \cf{DISPOSE}\ a\, \{ emp \}
	}{\{( a \mapsto \_) * R \} \, \cf{DISPOSE}\ a\, \{ emp * R\}} \\ \inferrule*{}{\vDash (emp * \alpha) \to \alpha}
	}{\{(a \mapsto \_) * R \} \, \cf{DISPOSE}\ a\, \{ R\}.}
\] 

\[ 
\textbf{Mutation: \qquad\qquad} \inferrule*[right=Conseq]{
	\inferrule*[left=Frame]{
		\inferrule*{}{
			\{a \mapsto \_\} \, [a] := a' \, \{a \mapsto a'\}}
		}{\{\{a \mapsto \_)*((a \mapsto a') \bi R)\} \, [a] := a' \, \{(a \mapsto a')*((a \mapsto a') \bi R)\}} \\ \vDash (\alpha*(\alpha \bi \beta)) \to \beta
	}{\{\{a \mapsto \_)*((a \mapsto a') \bi R)\} \, [a] := a' \, \{R\}.}
  \] 
  
 \[
\textbf{Lookup: } \inferrule*[Right=VarEl]{ 
	\inferrule*[Right=Conseq]{
		\inferrule*[Right=subst]{
 			\inferrule*[Right=Conseq]{
 				\inferrule*[Left=Frame]{
					\inferrule*{}{
						\{(X = v) \wedge (a \mapsto v')\} \, X := [a] \, \{ (X = v') \wedge a[v/X] \mapsto v'\}}
					}{\{((X = v) \wedge (a \mapsto v')) * \alpha[v'/X]\} \, X := [a] \, \{ ((X = v') \wedge a[v/X] \mapsto v') * \alpha[v'/X] \} } \, \vDash ((X = v') \wedge \beta) * \gamma[v'/X] \to \beta * \gamma}
		 		{\{((X = v) \wedge (a \mapsto v')) * \alpha[v'/X]\} \, X := [a] \, \{ (a[v/X] \mapsto v') * \alpha \}}
			}{\{((X = v) \wedge (a \mapsto v')) * ((a[v/X] \mapsto v') \bi R[v'/X])\} \, X := [a] \, \{ (a[v/X] \mapsto v') * ((a[v/X] \mapsto v') \bi R) \} \\\\ \vDash  (X = v) \wedge (\beta * \gamma) \to ((X = v) \wedge \beta) * \gamma[v/X]  \\ \vDash (\alpha*(\alpha \bi \beta)) \to \beta}
		}{\{(X = v) \wedge ((a \mapsto v') * ((a \mapsto v') \bi R[v'/X]))\} \, X := [a] \, \{ R \}}
	}{\{(a \mapsto v') * ((a \mapsto v') \bi R[v'/X])\} \, X := [a] \, \{ R \}}
 \]
 
 \[  \textbf{Allocation:} 
 \inferrule*[Right=conseq]{
 	\inferrule*[Right=conseq]{
 		\inferrule*[Right=subst]{
			 \inferrule*[Left=conseq]{
 				\inferrule*[Left=conseq]{
			 		\inferrule*[Left=Frame]{
					 		\{ (X = v) \wedge emp\} \, X := \cf{CONS}(\overline{a}) \, \{ X \mapsto_\ell (\overline{a}[v/X])\}}
						{\{ ((X = v) \wedge emp) * \alpha[v/X]\} \, X := \cf{CONS}(\overline{a}) \, \{ (X \mapsto_\ell (\overline{a}[v/X])) * \alpha[v/X]\}} \\\\ \vDash (X = v) \wedge \alpha \to ((X = v) \wedge emp) * \alpha[v/X] \\ \vDash \beta \to \exists v'. (X = v') \wedge \beta[v'/X]}
					{\{  (X = v) \wedge \alpha \} \, X := \cf{CONS}(\overline{a}) \, \{ \exists v'. (X = v') \wedge ((v' \mapsto_\ell (\overline{a}[v/X])) * \alpha[v/X])\}} \\ \vDash (\forall v'. \alpha) \to \alpha}
				{\{ \forall v'.  (X = v) \wedge \alpha \} \, X := \cf{CONS}(\overline{a}) \, \{ \exists v'. (X = v') \wedge ((v' \mapsto_\ell (\overline{a}[v/X])) * \alpha[v/X])\}}
			}{\{ \forall v'.  (X = v) \wedge ((v' \mapsto_\ell (\overline{a}[v/X])) \bi R[v'/X]) \} \, X := \cf{CONS}(\overline{a}) \, \{ \exists v'. (X = v') \wedge ((v' \mapsto_\ell (\overline{a}[v/X])) * ((v' \mapsto_\ell (\overline{a}[v/X])) \bi R[v'/X]))\} \\\\ \vDash (X = v) \wedge \alpha[v/X] \to \alpha \\ \vDash (\alpha*(\alpha \bi \beta)) \to \beta}
		}{\{ \forall v'.  (v' \mapsto_\ell \overline{a}) \bi R[v'/X]) \} \, X := \cf{CONS}(\overline{a}) \, \{ \exists v'. (X = v') \wedge  R[v'/X]\}} \\\\ \vDash (\exists v'. (X = v') \wedge \alpha[v'/X]) \to \alpha} 
	{\{ \forall v'.  (v' \mapsto_\ell \overline{a}) \bi R[v'/X]) \} \, X := \cf{CONS}(\overline{a}) \, \{ R \}} 
 \]
 
 \medskip
 
 \caption{\label{fig:derglo} Derivations of \emph{global backwards} specifications for local (small) axioms.}
\end{table}
 \end{landscape} 

\ifbook
Deallocation is rather straightforward: 
 $
  \{( a \mapsto \_) * R \} \, \cf{DISPOSE}\ a\, \{ R\}$.
Its derivation in Table \ref{fig:derglo} 
uses the fact that $emp$ is the monoidal unit. 
 It arrives almost instantly at the form suitable for \emph{backward reasoning} \cite{IshtiaqOH01:popl}, i.e., 
allowing  \emph{backwards} program annotations, starting from an  arbitrary postcondition $R$. 
  Such reasoning is at the heart of many applications of Hoare-style formalisms, in particular  derivations of \emph{weakest preconditions}, which in turn are essential for relative completeness results (cf. Remark \ref{rem:semantics}). 
 A discussion of such results in the context of separation logic was provided in an early stage of its development  by Yang 
   \cite{YangPhD}. 
\else\fi

The global backward specification for mutation:
\[ 
\{(a \mapsto \_)*((a \mapsto a') \bi R)\} \, [a] := a' \, \{R\},
\]
while still very simple to derive, is the first one where we need residuation: 
 \begin{equation} \label{eqres}
 \vDash (P*(P \bi Q)) \to Q
 \end{equation} 
 (note that in Table \ref{fig:derglo} we often use Greek letters as metavariables ranging over assertions, if we want to instantiate them in the next step; we sometimes denotes this act of substitution as  \textsc{subst}). 
  
 The global backward specification for lookup:
 \[
 \{(a \mapsto v') * ((a \mapsto v') \bi R[v'/X])\} \, X := [a] \, \{ R \}
 \]
  requires a bit more effort. Apart from using again 
   \refeq{eqres}, instances of the consequence rule used in the derivation also use laws governing interactions of $PVar$, $AVar$, 
    lattice and substructural connectives and heap-independent assertions, such as the equivalence:
\[
\vDash  (X = v) \wedge (P * Q) \leftrightarrow ((X = v) \wedge P) * Q[v/X].
\]
Moreover, we also need an application of the (derivable or admissible) rule
\[
\inferrule*[Right=VarEl]{\{(X = v) \wedge P \} \, C  \, \{ R \} \\ v \text{ fresh for } P, R}{\{ P \} \, C  \, \{ R \}}.
\] 
 In references like Reynolds \cite{Reynolds02:lics} or Yang \cite{YangPhD}, there are special rules like \emph{auxiliary variable renaming} and \emph{auxiliary variable elimination} which can be used to derive such rules. In \S~\ref{sec:proofsl}, we present another proof system where this rule is indeed derivable rather than primitive.
 
Similarly, obtaining the global backward axiom for allocation
 
 \[
 \{ \forall v'.  (v' \mapsto_\ell \overline{a}) \bi R[v'/X]) \} \, X := \cf{CONS}(\overline{a}) \, \{ R \}
 \]
 requires  using the frame and consequence rules jointly with BI laws and basic quantification laws, in particular  $\vDash (\exists v'. (X = v') \wedge R[v'/X]) \to R$.

 \section{Proof Theory and Decidability for Fragments of SL} \label{sec:sltheory}

We have argued that 
 proof theory of separation logic can be seen as an extension of proof theory of BI and substructural logics. It would be misleading, however, to give the impression that the only potential r\^ole of proof theory lies in deriving  general axioms  like those discussed above. If that were so, the reader may ask, why not simply begin with postulating the axioms in a suitable ``global backwards'' form? Furthermore, such a critical reader may be perplexed by questions of decidability, both in the light of the discussion in \S~ \ref{HoareIntro} and the one in \S~\ref{Decidability}. In this section, we are going to address both issues.

\subsection{Sketch of a Proof System for SL} \label{sec:proofsl}

As we stated in \S~\ref{HoareIntro}, the industrial importance of Floyd-Hoare logics  indeed does not quite stem from relative completeness results via calculation of schemes of weakest preconditions. While such results are an important theoretical  characterization, the real practical interest lies in deriving and verifying annotations and specifications of concrete programs. 

\newcommand{\deds}{\,\Rrightarrow}

Let us then take stock 
 recapitulating which axioms and rules were exactly used  in Table \ref{fig:derglo}. The resulting proof system can derive not only these ``global  backwards specifications'', but---as the reader can verify---meaningful pieces of annotated code, similar to those used as examples, e.g., in \cite{Reynolds02:lics}. 
   We propose that judgements $\vdash \{P\} \, C \, \{Q\}$ are deduced using the following axioms and rules:

\begin{itemize}
\item axioms in Tables \ref{fig:HoareImp} and \ref{fig:smallax}; 
\item the \textsc{Frame} rule;
\item rules
$$
\inferrule*[right=VarEl$\exists$]{\vdash\{ P \} \, C  \, \{ R \} \\ v \text{ fresh for }  R}{\vdash\{\exists v. P \} \, C  \, \{ R \}} \quad 
$$
and
$$
\inferrule*[right=VarEl$\forall$;]{\vdash\{ P \} \, C  \, \{ R \} \\ v \text{ fresh for }  P}{\vdash\{ P \} \, C  \, \{ \forall v.R \}}
$$
\item a \emph{deductive} version of the consequence rule
\begin{center}
$\inferrule*[right=ConDed]{P' \deds P \\ \vdash \{P\}C\{Q\} \\  Q \deds Q'}{\vdash \{P'\}C\{Q'\},}$
\end{center}
where 
 $P' \deds P$ and $Q \deds Q'$ are derived using:
\begin{itemize}
\item axioms and rules of any proof system which is equipollent with (can derive all theorems of) the Hilbert-style system for BI presented in \S~\ref{LogicAlgebra}. To fix attention, let us take  the system presented in \S~\ref{ProofTheory} plus commutativity (with $\leq$ replaced by $\deds$). In order to keep as close as possible to \S~\ref{ProofTheory}, we use here the notation $\alpha(\varphi)$  where $\alpha$ denotes a \emph{bunch} from \S~\ref{ProofTheory} adjusted to the present syntax;  
\item the basic theory of equality (cf. \cite[\S~4.7]{Troelstra1996}), i.e., 
\[
\inferrule{\alpha(a = a) \deds \varphi}{\alpha(\top) \deds \varphi} 
\]

and
$$
\inferrule{\alpha(a_1 = a_2,  \psi[a_1/v], \varphi[a_2/v]) \deds \chi}{\alpha(a_1 = a_2, a_1 = a_2 \wedge \psi[a_1/v], a_1 = a_2 \wedge \varphi[a_2/v]) \deds \chi}
$$
where $\alpha(x,y,z)$ is the ternary counterpart of the notion of a \emph{bunch}\footnote{Note that this rule allows to spread equality statements across the bunch. In the store-and-heap semantics, such atoms are \emph{heap-independent}: they only depend on the store.}  from \S~\ref{ProofTheory}, i.e., a scheme of an assertion formula  in which each of the schematic variables $x$, $y$ and $z$ occurs only once, and on the term-tree branches where $x$, $y$ and $z$ occur, only the symbols $*$ and $\wedge$ are allowed; 
\item basic quantification rules (cf. \cite[\S~3.5]{Troelstra1996}), i.e., \medskip
\begin{itemize}
\item $\inferrule{\alpha(\varphi[a/v]\wedge \forall v. \varphi) \deds \psi}{\alpha(\forall v. \varphi) \deds \psi}$ \quad and \quad $\inferrule{\varphi \deds \psi[a/v]}{\varphi \deds \exists v. \psi}$; \medskip
\item $\inferrule{\varphi \deds \psi}{\varphi \deds \forall v. \psi}$ \, and \, $\inferrule{\alpha(\psi) \deds \varphi}{\alpha(\exists v.\psi) \deds \varphi}$ whenever $v$ is fresh for $\varphi$ and $\alpha$; \medskip
\end{itemize}
\item while we have not needed such axioms and rules in the derivations presented so far, any  system used for reasoning about simple programs is likely to need additional principles governing pointer axioms---at the very least, some variant of a rule encoding disjointness of heaps: 
$$
\inferrule{ \alpha(e \mapsto a_1 * e \mapsto a_2) \deds \varphi}{\alpha(\bot) \deds \varphi} 
$$
and a rule encoding functionality of pointers:
$$
\inferrule{ \alpha(e \mapsto a_1 \wedge e \mapsto a_2) \deds \varphi}{\alpha(a_1 = a_2 \wedge e \mapsto a_1) \deds \varphi}; 
$$
\item finally, as a parameter in the definition of our proof system, we allow the user to choose a bunched-sequent-style formulation  of a fragment of arithmetic with good proof-theoretic  properties, like Skolem's primitive recursive arithmetic (PRA) \cite[\S~4.5.2]{Troelstra1996} or a chosen fragment of Presburger's arithmetic. Note again that it makes perfect sense to work with fragments which do not allow unrestricted pointer arithmetic, thus removing the need for incorporating arithmetic in our proof system. All we needed for inference rules and derivations so far was the ability to encode $e \mapsto_{\ell} \overline{a}$ and this we could do having only the syntax and axioms of the successor function; as discussed in \S~\ref{sec:DecRev} below one can go still further than that, take each $e \mapsto_{\ell} \overline{a}$ to be an atom in its own right and even restrict the length of $\overline{a}$ in such an expression.  Even with no arithmetic present, when one is taking BI rather than BBI as the propositional base, 
 it is natural to enrich the system so that one can derive the law of excluded middle for equality statements. 
\end{itemize}
\end{itemize}


\noindent
Note here that 
  there are candidates for rules which can be admissible, but not necessarily derivable. O'Hearn et al. \cite[\S~3.2.1]{OHearnYR09:acm} give  as an example what they call the \emph{the conjunction rule}:

$$
\inferrule{\{ P_1 \} \, C  \, \{ R_1 \} \quad \{ P_2 \} \, C  \, \{ R_2 \}}{\{ P_1 \wedge P_2 \} \, C  \, \{ R_1 \wedge R_2\}}.
$$
Let us observe that in the presence  of the consequence rule this rule is clearly suboptimally formulated, just like \emph{auxiliary variable renaming} or \emph{auxiliary variable elimination} \cite{Reynolds02:lics,YangPhD} derivable in our system. If one wants this rule to be derivable, it is enough to add
$$
\inferrule{\{ P \} \, C  \, \{ R_1 \} \quad \{ P \} \, C  \, \{ R_2 \}}{\{ P \} \, C  \, \{ R_1 \wedge R_2\}}.
$$
This indicates a more general pattern of rules for assertions mimicking sequent-style rules, where the command itself plays a r\^ole similar to that of a turnstile (inequality) sign. \textsc{VarEl$\exists$} and \textsc{VarEl$\forall$} above follow the same pattern. Yet another one, also sound over the intended  semantics, would be 
$$
\inferrule{\{ P_1 \} \, C  \, \{ R \} \quad \{ P_2 \} \, C  \, \{ R \}}{\{ P_1 \vee P_2\} \, C  \, \{ R \}}.
$$
It is not immediately obvious  whether all such rules are admissible in the proposed system. Note that the restrictions necessary to ensure soundness of the frame rule indicate that only translations of the rules governing additive connectives are worth considering in this context.

\subsection{Decidability Revisited} \label{sec:DecRev}


We have already noted in \S~\ref{HoareIntro} that whenever the assertion language  contains arithmetic (or anything sufficiently rich to encode it), G\"odel's Incompleteness Theorem obviously implies that the set of valid of assertions cannot be even recursively enumerable. 
  As we have discussed above, one sensible strategy 
    is to focus on incomplete proof systems or decision procedures---and there is no shortage of useful heuristics.  But we have also indicated that especially when  reasoning about typical operations on shared mutable data structures---like linked list reversal or copying/deletion of a tree---one hardly ever needs full pointer arithmetic. Consequently, one can deal with assertion languages which are not automatically covered by G\"odel's result. To  improve the situation even more, one can further restrict the assertion language, e.g., by limiting the number of quantified variables. The limit case is reached when there are no quantified variables left: all the expressions of the assertion language constructed without the use of multiplicative (a.k.a. spatial, separating or simply substructural)  connectives can be used verbatim as guards of WHILE or IF expressions of the programming language.

It would seem that such a propositional setup is precisely the one we have considered in \S~\ref{Decidability}, hence undecidability limits discussed therein 
 still apply. But an astute reader may have already recalled Remark \ref{rem:undecvar}: the absence of uninterpreted algebraic variables standing for arbitrary propositions limits direct applicability of such purely algebraic results. And indeed, expressions of the assertion language of SL are built from concrete atoms of the form $a_1 = a_2$, $emp$ or $e \mapsto_{\ell} \overline{a}$ (note again that if we do not assume that the assertion language can directly encode at least the successor function, we need to allow a more general form of pointer atoms). How do we know that in at least some of the simpler boolean allocation/heap models of \S\S~\ref{Effect} and \ref{HeapModel} such a restriction does not rule out valuations crucial for establishing undecidability of the set of BBI formulas valid in that model?

As it turns out, this is precisely what happens. 
 Calcagno, Yang and O'Hearn \cite[\S\S~4--5]{CalcagnoYH01}  show that the quantifier-free BBI language obtained by restricting the pointer atoms to the binary form $e \mapsto_\ell a_1, a_2$ (and with no function symbols) interpreted over store-and-heap models where $RV = L \cup \{ \nil \}$  and
  heaps are finite partial functions from $L$ to $RV \times RV$ 
  is PSPACE-complete, with further restrictions allowing even better complexity. By contrast, the set of all valid assertions in the quantified version of this language is not even recursively enumerable \cite[Th. 1]{CalcagnoYH01}. A detailed discussion of this phenomenon is provided by Brotherston and Kanovich \cite[\S~10]{BrotherstonK14:jacm}, who show how restriction to \emph{finite valuations} in heap models can restore decidability for the propositional language.\footnote{Note that 
   the denotation of pointer atoms is  finite only if both $\textit{PVar}$ and $\textit{Val}$ are finite.  Brotherston and Kanovich \cite[\S~10]{BrotherstonK14:jacm} circumvent this by stating corresponding theorems in the heap-only setting, but given that the original result of Calcagno et al. \cite{CalcagnoYH01} was proved for a store-and-heap model, a somewhat more general formulation would be desirable.} Demri and Deters \cite[\S~4.3.2]{DemriD15:jancl} note that an analogous PSPACE-completeness result holds with pointer atoms of the form $e \mapsto_\ell a_1, \dots, a_k$ for arbitrary but fixed finite $k$. Furthermore, Demri et al. \cite{Demri2014} show that when only pointer atoms of the form $e \mapsto a_1$ are allowed, the PSPACE upper bound survives in the presence of just one quantified variable (however, this result cannot be combined with atoms of the form $e \mapsto_\ell a_1, a_2$ \cite{DemriD15:acm}). 
 For more positive and negative decidability results for various fragments of the assertion language, the reader is referred to the overview of  Demri and Deters \cite{DemriD15:jancl}.


\section{Bi-Abduction: The Main Issue of SL Proof Theory} \label{sec:biabduction}

The story of algorithmic questions dealt with by Separation Logic would be incomplete if we finished it here. We are now in a position to briefly discuss 
 perhaps the most important proof-theoretic tasks for SL practitioners, which may be somewhat novel for more traditionally oriented algebraists and logicians. In the words of Peter O'Hearn (p.c.),
\begin{quote}
[t]he one thing I wish we could get across to substructural logicians is the importance of inference questions beyond validity. Chief among these are frame inference and abduction. \cite{OHearn17:email}
\end{quote}
To be sure, \emph{abductive inference} is not exactly an unknown concept in philosophy and logic, its study dating back to Charles S. Peirce, with  the term
 ``used in two related but different senses'' \cite{sep-abduction} regarding the use of explanatory reasoning in either \emph{generating} or \emph{justifying} hypotheses. As stressed by, e.g., the corresponding entry in the \emph{Stanford Encyclopedia of Philosophy} \cite{sep-abduction}, contemporary philosophers of science tend to employ it in the latter meaning (\emph{context of justification} or \emph{inference to the best explanation}), whereas Peirce himself\footnote{Although when it comes to Peirce's own views, ``[i]t is a common complaint that no coherent picture emerges from Peirce's writings on abduction. (Though perhaps this is not surprising, given that he worked on abduction throughout his career, which spanned a period of more than fifty years \dots )'' \cite{sep-abduction}.} put it in the context of discovery:
 \begin{quote}
 Abduction is the process of forming explanatory hypotheses. It is the only logical operation which introduces any new idea (...) Abduction must cover all
the operations by which theories and conceptions are engendered. \cite[CP 5.172,5.590]{Peirce}
 \end{quote}
 It seems safe to say that the meaning of the term as used in  computer science and artificial intelligence \cite{Paul93:air,DeneckerK01:jlp,CalcagnDOHY11:acm} either combines the context of discovery  with that of  justification or even focuses specifically on the  former one, thus being closer to original concerns of Peirce. 
 
 What does exactly abduction and \emph{bi-abduction} \cite{CalcagnDOHY11:acm} consist in? Below, we propose two formulations:  a general algebraic one (revealing a connection with unification) and a more specific one, sticking closely to both the proof system proposed in \S~\ref{sec:proofsl} and the paper of Calcagno et al. \cite{CalcagnDOHY11:acm}.
 
 \subsection{Abduction and Bi-Abduction Algebraically} \label{BiabductionAlg}
 
  Algebraically, one may think of the problem of abduction as follows: given
  \begin{itemize}
  \item a formal language $\lL$ and a theory $T$ in $\lL$ whose models all include as subreducts ordered monoids, with  $\leq$ being the ordering (either primitive or  term-definable)  and $\cdot$ being the semigroup operation (again, either primitive or  term-definable), 
  \item two terms $h, c \in \lL$ (called, respectively, the \emph{hypothesis} and the \emph{conclusion}), 
  \item a set of \emph{potential antiframes} (relative to $T$, $h$ and $c$) $Fr^-(T,h,c) \subseteq \lL$,  
  \end{itemize}
  
  \begin{center}
  find $a \in Fr^-(T,h,c)$ s.t. $T \vdash h \cdot a \leq c$.  
  \end{center}
\noindent  
Whenever $\lL$ and $T$ yield (either primitive or term-definable) left-residual $\backslash$ of~$\cdot$, only the presence of $Fr^-(T,h,c)$ prevents the problem from collapsing into triviality: otherwise, one could always take $a = h \backslash c$, and an even more dramatic trivialization  would be possible whenever $T \vdash h \cdot \bot = \bot$. Furthermore, whereas traditional forms of abduction involve $\cdot$ being the additive multiplication $\wedge$, in the context of SL one is naturally interested in the \emph{spatial} abduction, with ``$\cdot$'' being ``$*$''. 
Perhaps most importantly, however, from the point of view of concerns of SL, a more general (and symmetric) problem is of more interest. Calcagno et al. \cite{CalcagnDOHY11:acm} baptised it \emph{bi-abduction}. 

Apart from taking as input data the same  $\lL$, $T \subseteq \lL$, $h, c \in \lL$ and $Fr^-(T,h,c) \subseteq \lL$, the problem of bi-abduction also requires  \emph{potential frames} (relative to $T$, $h$ and $c$) $Fr^+(T,h,c) \subseteq \lL$; needless to say, it can well happen that $Fr^-(T,h,c) = Fr^+(T,h,c)$. The problem is then to 

 \begin{center}
  find $a \in Fr^-(T,h,c)$ and $f \in Fr^+(T,h,c)$ s.t. $T \vdash h \cdot a \leq c \cdot f$.  
  \end{center}

\begin{rem}
Especially in the presence of semi-lattice connectives like $\wedge$, this general statement of bi-abduction can be reformulated as a special case of a \emph{restricted unification problem} \cite{Burckert1986} (\emph{modulo theory}). Namely, given \emph{fresh} $x$, $y$, the challenge is to find a substitution $\sigma$ defined on $\{x, y\}$ (i.e., leaving other variables unchanged)  s.t. $\sigma x \in  Fr^-(T,h,c)$, $\sigma y \in  Fr^+(T,h,c)$ and $\sigma(h \cdot x \wedge c \cdot y) = \sigma(h \cdot x)$. We leave the exploration of this perspective for future investigation.
\end{rem}



\subsection{Bi-Abduction in Separation Logic} \label{BiabductionSL}

The above presentation of abduction and bi-abduction is much more general than the challenge  of Calcagno et al. \cite{CalcagnDOHY11:acm}, which can be formulated concretely in terms of the proof system sketched in \S~\ref{sec:proofsl}: given $H$ and $C$, find \emph{antiframe} $\alpha$ and \emph{frame} $\varphi$ s.t.
$$H * \alpha \deds C * \varphi,$$
where not only $\alpha$ and $\varphi$, but also $H$ and $C$ themselves are \emph{symbolic heaps} of the form $\exists \overline{v}. \Pi \wedge \Sigma$, $\Pi$ being a \emph{pure formula} and $\Sigma$ being a \emph{spatial formula} defined as follows:
\begin{align*}
\Pi, \Pi' & ::= a_1 = a_2 \mid a_1 \neq a_2 \mid \top \mid \Pi \wedge \Pi' \\
\Sigma, \Sigma' & ::= a_1 \mapsto a_2 \mid emp \mid \top \mid \Sigma * \Sigma'.
\end{align*}
Furthermore, as already discussed in \S~\ref{sec:DecRev}, there is no reason to insist on $a_1$ and $a_2$ being entirely arbitrary arithmetical expresssions. In fact, one often can restrict them to being elements of $PVar$, $AVar$ plus a suitable collection of additional constants. On the other hand, as we also discussed in  \S~\ref{sec:DecRev}, one might often need a richer collection of spatial predicates, at the very least replacing $e \mapsto a$ with $e \mapsto_\ell \overline{a}$ and possibly more (cf., e.g., \emph{abstract predicates} of Parkinson and Bierman \cite{Parkinson2005}). 
 A well-behaved class of similar formulas is  the ``Smallfoot fragment'' (cf. \S~\ref{sec:tools}) as defined by Demri and Dieters \cite[\S~4.3.1]{DemriD15:jancl}.

Just like in \S~\ref{BiabductionAlg}, \emph{abduction} is a problem with the same input as bi-abduction,  but the task is just to find \emph{antiframe} $\alpha$  s.t.
$H * \alpha \deds C$.  Calcagno et al.  \cite{CalcagnDOHY11:acm} provide an analysis of 
 minimality and termination of proof search for abduction in this setting, and a somewhat more sketchy one for bi-abduction, leaving a more throughout discussion of theoretical issues involved for future work. In \S~\ref{sec:tools} below, we are going to say a few more words about practical importance of (bi-)abduction for concrete tools.

\section{Applications and Later Developments} \label{sec:algsep}

In this section, we are going to briefly discuss 
applications, generalizations and developments which we cannot present in detail in this overview. 

\subsection{Competing Formalisms} \label{sec:competing}

It would not be adequate to claim that SL has had no competitors to solve the
 problems plaguing Hoare reasoning about pointer programs presented at the beginning of \S~\ref{SepSpecific}. 
  Bornat \cite{Bornat2000} provides an overview of the state of the art exactly at the time when SL entered the scene.

Later, Kassios \cite{Kassios06:fm} 
 suggested  another, object-oriented  alternative 
  in the form of the theory of \emph{dynamic frames}  (concerning the name, recall Footnote \ref{ft:frameorigin}). Soon afterwards, the theory of \emph{implicit dynamic frames} \cite{Smans2009} rather successfully combined the insights of dynamic frames with those of SL. 
    In particular,  this has led to the continuing development of the tool VeriFast \cite{Jacobs2011,Philippaerts2014}, whose core theory has been moreover formalized in the Coq proof assistant \cite{JacobsVP15:lmcs}. One of most important features inherited by implicit dynamic frames from separation logic is the presence of 
   $*$ in the assertion language. 

\subsection{Tools} \label{sec:tools}

VeriFast \cite{Jacobs2011,JacobsVP15:lmcs,Philippaerts2014} mentioned above is just one example of a recent, industrial-strength tool incorporating separation logic insights. But the story of such tools begins with Smallfoot \cite{Berdine2006}. Its invention was preceded by investigation of decision procedures for fragments of the assertion language even better behaved than those appearing in \S~\ref{sec:DecRev} \cite{Berdine2005:fsttcs} (cf. also the discussion of the ``Smallfoot fragment'' in Demri and Deters \cite{DemriD15:jancl} and in \S~\ref{BiabductionSL} above) and \emph{symbolic execution} in separation logic context \cite{Berdine2005:aplas}. Another paper published at the same time which greatly contributed to subsequent popularity of SL and formalisms utilizing the frame rule 
 was the work of Parkinson and Bierman \cite{Parkinson2005} introducing \emph{abstract predicates}.

Subsequently, the SL community produced more automated tools like Space\-Invader \cite{Yang2008}, SLAyer \cite{Berdine2011} at Microsoft Research\footnote{\url{https://github.com/Microsoft/SLAyer}} (see also \cite{Berdine2005:aplas,Distefano2006} for underlying research on symbolic execution) and, especially, the static analyser Infer \cite{Calcagno2011}, presently developed at Facebook \cite{Calcagno2015:nasa}, but available open-source\footnote{\url{https://github.com/facebook/infer}}.  Infer crucially relies on frame inference and bi-abduction discussed in \S~\ref{sec:biabduction}.

\begin{quote}
So, a substructural logic is used in a tool that prevents thousands of bugs per month from reaching production in products used by over 1 billion people daily. 
 \cite{OHearn17:email}
\end{quote}


\noindent
Given the inherent computational limitations for fully algorithmic solutions, however, approaches based on proof assistants seem a natural alternative option, especially from an academic perspective. 
 While there exists work on encoding separation logic, e.g., in Isabelle/HOL \cite{Tuerk11}, 
 Coq seems the most common setting  for such developments.
  Coq verification of Featherweight VeriFast \cite{JacobsVP15:lmcs} illustrates that proof assistants may have a r\^ole to play even with fully automated tools.  Another  recent Coq-based line of work is a series of frameworks such as  ModuRes \cite{Sieczkowski2015}, Iris 2.0 \cite{Jung2016}, and MoSeL \cite{KrebbersEA18:icfp} with   theoretical underpinnings in higher-order 
   BI-\emph{hyperdoctrines} \cite{BieringBT07}. Finally, separation logic is also being gradually incorporated in Coq-based courses \cite{Chlipala2016,Paulus2016,PierceSF,Litak18:sf}. 
 

\subsection{Concurrency and Algebraic Aspects} \label{sec:concur}

The rich collection of models discussed in \S\S~\ref{ModelsBI}--\ref{ModelsGBI} suggests that BI leads to 
 more applications than reasoning about pointer programs in sequential separation logic. Of all such developments, we most regret not being able to devote more attention in this overview to \emph{concurrent separation logic}. We can only refer the reader to a recent overview by Brookes and O'Hearn \cite{BrookesOH2016}, which was written following the award of the 2016 G\"odel Prize  to both authors for their involvement in this formalism \cite{Brookes2007,OHearn2007:tcs}. The very least we should say is that most tools and frameworks mentioned in \S~\ref{sec:tools} allow reasoning about concurrent programs. On the theoretical front, we only touched upon relevant issues in Remark \ref{rem:semframe}.

A  development closely related to concurrent separation logic whose omission we particularly regret is \emph{concurrent Kleene algebra} (\emph{CKA}) \cite{HoareMSW11,OHearnPVH15}. And this is a good opportunity to finish by returning to the main algebraic theme of this overview.
   While equational features of Floyd-Hoare logics have been noticed and substantially used in monographs oriented towards category theory, like Manes and Arbib  \cite{ManesArbib} or Bloom and \'Esik \cite{Bloom93}, a good reference for a more traditional algebraist  is provided by Kozen \cite{Kozen2000}  showing how to encode Floyd-Hoare logics in  \emph{Kleene algebra with tests} (\emph{KAT}; for an important predecessor see, e.g., Pratt \cite{Pratt1976} discussing the relationship between Floyd-Hoare, Tarskian and modal semantics). It remains to be seen whether  \emph{concurrent Kleene algebra with tests} (\emph{CKAT}, \cite{Jipsen14,JipsenM16}) is going to play a comparably important r\^ole. There is also an alternative algebraic approach to separation logic based on quantales \cite{DangHM11}. 

\begin{acknowledgement}
We would like to thank: \emph{Hiroakira Ono}, without whom both authors would not have met once upon a time in western Japan, there would have been no stimulus to write this overview, and many other things would not have happened; \emph{Nick Galatos} and \emph{Kazushige Terui} for suggesting the idea to write this overview, 
 and for their patience and support during the very long write-up period; Nick, \emph{Peter O'Hearn}, \emph{Revantha Ramanayake} and \emph{Simon Docherty} for their comments in the final stages of write-up, 
 in Peter's case including the suggestion to add some material on bi-abduction (\S~\ref{sec:biabduction}) and feedback regarding Infer and automated tools discussed in \S~\ref{sec:tools}. 
  Moreover, the second author wishes to thank: the family of the first author, in particular \emph{Julie Tapp}, for hosting him for two weeks in April 2015, when the bulk of this paper was written;  his project student \emph{Dominik Paulus} for developing a convenient Coq formalization \cite{Paulus2016}, which proved helpful when working on \S\S~\ref{SeparationLogic}--\ref{sec:sltheory}; and \emph{Erwin R. Catesbeiana}, for displaying a tantalizing view on the empty heaplet. 
\end{acknowledgement}


\printbibliography

\end{document}